\documentclass[11pt]{article}
\usepackage{graphicx} % Required for inserting images
\usepackage{amsmath,amssymb,amsthm}
\usepackage{bm,fullpage,hyperref}
\usepackage{color}
\usepackage{booktabs,multirow}
\usepackage{comment}

\usepackage{algpseudocode}
\usepackage[linesnumbered,ruled,vlined]{algorithm2e}
\SetKwProg{Procedure}{Procedure}{}{}

\allowdisplaybreaks

\newcommand{\E}{\mathop{\mathbf{E}}}
\newcommand{\EMW}{\mathrm{EM}_{\mathrm{w}}}

\newcommand{\TV}{\mathrm{TV}}
\newcommand{\floor}[1]{\left \lfloor{#1}\right \rfloor}
\newcommand{\ceil}[1]{\left \lceil{#1}\right \rceil}

\newcommand{\OPT}{\mathrm{OPT}}
\newcommand{\OPTLP}{\mathrm{OPT}_{\mathrm{LP}}}
\newtheorem{theorem}{Theorem}[section]
\newtheorem{lemma}[theorem]{Lemma}

\newtheorem{corollary}[theorem]{Corollary}

\theoremstyle{definition}
\newtheorem{Definition}[theorem]{Definition}

\title{Lipschitz Continuous Algorithms for Covering Problems}
\author{Soh Kumabe \and Yuichi Yoshida}
% \date{March 2023}

\begin{document}

\maketitle

\begin{abstract}
    Combinatorial algorithms are widely used for decision-making and knowledge discovery, and it is important to ensure that their output remains stable even when subjected to small perturbations in the input.
    Failure to do so can lead to several problems, including costly decisions, reduced user trust, potential security concerns, and lack of replicability.
    Unfortunately, many fundamental combinatorial algorithms are vulnerable to small input perturbations.
    To address the impact of input perturbations on algorithms for weighted graph problems, Kumabe and Yoshida (FOCS'23) recently introduced the concept of Lipschitz continuity of algorithms.
    This work explores this approach and designs Lipschitz continuous algorithms for covering problems, such as the minimum vertex cover, set cover, and feedback vertex set problems.
    
    Our algorithm for the feedback vertex set problem is based on linear programming, and in the rounding process, we develop and use a technique called cycle sparsification, which may be of independent interest.
\end{abstract}

\thispagestyle{empty}
\newpage

\thispagestyle{empty}
\tableofcontents

\setcounter{page}{0}
\newpage

%!TEX root=./main.tex

\section{Introduction}

Combinatorial algorithms are widely used for decision-making and knowledge discovery, and it is important to ensure that their output remains stable even when subjected to small perturbations in the input.
As was discussed in~\cite{impagliazzo2022reproducibility,kumabe2023lipschitz,Varma2021}, failure to do so can lead to several problems, including costly decisions, reduced user trust, potential security concerns, and lack of replicability.
Unfortunately, many fundamental combinatorial algorithms are vulnerable to small perturbations to the input.
To address the impact of input perturbations on algorithms for weighted graph problems, Kumabe and Yoshida~\cite{kumabe2023lipschitz} recently introduced the concept of Lipschitz continuity of algorithms, and several fundamental graph problems, such as the minimum spanning tree problem, the shortest path problem, and the maximum matching problem, were studied from the viewpoint of Lipschitz continuity.
In this work, we further explore their approach and develop Lipschitz continuous algorithms for covering problems, or more specifically, the minimum vertex cover, set cover, and feedback vertex set problems.

\subsection{Lipschitz Continuity}

In this section, we formally define the Lipschitz continuity of an algorithm. 
We explain the definition using graph algorithms that generate a vertex set as its output, but the definition can be easily generalized to other combinatorial problems, such as the set cover problem (See Section~\ref{sec:set-cover-naive}).  

Consider a graph algorithm $\mathcal{A}$, that takes a graph $G=(V,E)$ and a weight vector $w \in \mathbb{R}_{\geq 0}^V$, and outputs a vertex set $S \subseteq V$.
To define the Lipschitz continuity of $\mathcal{A}$, we need to define distances between input weights and output sets.
For the input weights, we use the $\ell_1$ distance.
For the output sets, we follow the approach in~\cite{kumabe2023lipschitz}.
Namely, we map a vertex set $S \subseteq V$ to a vector $\sum_{v \in S}w_v \bm{1}_v$, where $\bm{1}_v \in \mathbb{R}^V$ is the characteristic vector of $v \in V$, that is, $\bm{1}_v(u) = 1$ if $u = v$ and $0$ otherwise.
Then, we use the $\ell_1$ distance between the mapped vectors to measure the distance between two outputs.
Specifically, for two vertex sets $S,S' \subseteq V$ and weight vectors $w,w'\in \mathbb{R}_{\geq 0}^V$, we use the distance defined as
\begin{align*}
    d_{\mathrm{w}}((S,w),(S',w')) := \left\|\sum_{v \in S}w_v \bm{1}_v - \sum_{v \in S'}w'_v \bm{1}_v\right\|_1 
    = \sum_{v \in S \cap S'}|w_v - w'_v| + \sum_{v \in S \setminus S'}w_v + \sum_{v \in S' \setminus S}w'_v.          
\end{align*}

Unfortunately, it is known that there is no deterministic Lipschitz continuous algorithm with respect to $d_{\mathrm{w}}$ for most problems~\cite{kumabe2023lipschitz}.
Hence, our focus is on randomized algorithms, and we need to define the distance between distributions over sets.
For two probability distributions $\mathcal{S}$ and $\mathcal{S}'$ over subsets of $V$, the \emph{earth mover's distance} between $\mathcal{S}$ and $\mathcal{S}'$ is defined as
\[
    \EMW\left((\mathcal{S},w),(\mathcal{S}',w')\right):=\min_{\mathcal{D}}\E_{(S,S')\sim \mathcal{D}} \left[d_{\mathrm{w}}\left((S,w),(S',w')\right)\right],
\]
where the minimum is taken over \emph{couplings} of $\mathcal{S}$ and $\mathcal{S}'$, that is, distributions over pairs of sets such that its marginal distributions on the first and second coordinates are equal to $\mathcal{S}$ and $\mathcal{S}'$, respectively.
We note that $\EMW$ coincides with $d_{\mathrm{w}}$ if the distributions $\mathcal{S}$ and $\mathcal{S}'$ are supported by single vertex sets.

For a randomized algorithm $\mathcal{A}$, a graph $G=(V,E)$, and a weight vector $w \in \mathbb{R}_{\geq 0}^V$, let $\mathcal{A}(G,w)$ denote the (random) output of $\mathcal{A}$ on $G$ and $w$.
Abusing the notation, we often identify the output with its distribution.
Then, we define the Lipschitz constant of a randomized algorithm as follows:
\begin{Definition}[Lipschitz constant of a randomized algorithm~\cite{kumabe2023lipschitz}]\label{def:randomized}
    Let $\mathcal{A}$ be a randomized algorithm that, given a graph $G=(V,E)$ and a weight vector $w \in \mathbb{R}_{\geq 0}^V$, outputs a (random) vertex set $\mathcal{A}(G,w) \subseteq V$.
    Then, the \emph{Lipschitz constant} of the algorithm $\mathcal{A}$ on a graph $G=(V,E)$ is
    \begin{align*}
        \sup_{\substack{w,w' \in \mathbb{R}_{\geq 0}^V,\\w\neq w'}}\frac{\EMW\left((\mathcal{A}(G,w),w),(\mathcal{A}(G,w'),w')\right)}{\|w-w'\|_1}.
    \end{align*}
    We say that $\mathcal{A}$ is \emph{Lipschitz continuous} if its Lipschitz constant is bounded for any graph $G=(V,E)$.
\end{Definition}

It is not hard to show that, if there exists an $\alpha$-approximation algorithm for a minimization problem on graphs, then for any $\epsilon > 0$, we can transform it to a $(1+\epsilon)\alpha$-approximation algorithm with Lipschitz constant $\epsilon^{-1} |V|$ (see Appendix~\ref{sec:naive-lipschitz-constant-bound} for details).
Hence, the question is whether we can design algorithms with Lipschitz constant sublinear in $|V|$.

Although Definition~\ref{def:randomized} is natural, it may not be apparent whether bounding the Lipschitz constant of a randomized algorithm is practically relevant because, even if the output distributions for two weighted graphs $(G,w)$ and $(G,w')$ are close, the outputs obtained by running the algorithm on $(G,w)$ and $(G,w')$ independently may be completely different.
Therefore, we consider randomized algorithms such that the outputs for two similar weighted graphs are close in expectation over the internal randomness.
More specifically, for a randomized algorithm $\mathcal{A}$, let $A_\pi$ denote the deterministic algorithm obtained from $\mathcal{A}$ by fixing its internal randomness to $\pi$.
Then, we define the Lipschitz constant with shared randomness of $\mathcal{A}$ as follows:
\begin{Definition}[Lipschitz constant of a randomized algorithm with shared randomness~\cite{kumabe2023lipschitz}]\label{def:randomized-shared-randomness}
    Let $\mathcal{A}$ be a randomized algorithm that, given a graph $G=(V,E)$ and a weight vector $w \in \mathbb{R}_{\geq 0}^V$, outputs a (random) vertex set $\mathcal{A}(G,w) \subseteq V$.
    Then, the \emph{Lipschitz constant under shared randomness} of the algorithm $\mathcal{A}$ on a graph $G=(V,E)$ is
    \begin{align*}
        \sup_{\substack{w,w' \in \mathbb{R}_{\geq 0}^V,\\w\neq w'}}\E_\pi\left[\frac{d_{\mathrm{w}}\left((\mathcal{A}_\pi(G,w),w),(\mathcal{A}_\pi(G,w'),w')\right)}{\|w-w'\|_1}\right],
    \end{align*}
    where $\pi$ represents the internal randomness of $\mathcal{A}$.
    We say that $\mathcal{A}$ is \emph{Lipschitz continuous under shared randomness} if its Lipschitz constant under shared randomness is bounded for any graph $G=(V,E)$.
\end{Definition}
Note that if a randomized algorithm has a Lipschitz constant $L$ under shared randomness, then it has Lipschitz constant $L$, and hence the former is a stronger property.

\subsection{Our Results}

\begin{table}[t!]
    \centering
    \caption{Summary of our results. For the minimum set cover problem, $n$ and $m$ denote the numbers of elements and sets, respectively, and $s$ and $f$ denote the maximum size of a set and maximum frequency of an element, respectively. For the minimum feedback vertex set problem, $n$ denotes the number of vertices. $\epsilon > 0$ is an arbitrary constant and $H_s = \sum_{i=1}^s 1/i$.}\label{tab:results}
    \begin{tabular}{llll}
        \toprule
        \multirow{2}{*}{Problem} & \multirow{2}{*}{Approximation ratio} & Lipschitz constant & \multirow{2}{*}{Reference} \\
        & & (under shared randomness) & \\
        \midrule
        Vertex Cover & $2$ & $4$ & Sec.~\ref{sec:vertex-cover} \\
        \midrule
        \multirow{4}{*}{Set Cover} & $s + \epsilon$ & $O(\epsilon^{-1}s^2)$ & Sec.~\ref{sec:set-cover-naive} \\
         & $ H_s + \epsilon$ & $\exp O(\epsilon^{-2}(s+\log f)s^2 \log^3 s)$ & Sec.~\ref{sec:set-cover-greedy} \\
        & $O(\log n)$ & $O(\sqrt{\min\{n,m\} \log m}\log n)$ & Sec.~\ref{sec:set-cover} \\
        & $f + \epsilon$ & $O(\epsilon^{-1}f^3 \sqrt{\min\{n,m\} \log m}\log n)$ & Sec.~\ref{sec:set-cover-f} \\
        \midrule
        Feedback Vertex Set & $O(\log n)$ & $O(\sqrt{n} \log^{3/2} n)$ & Sec.~\ref{sec:fedback-vertex-set} \\
        \bottomrule
    \end{tabular}
\end{table}

In this work, we provide Lipschitz continuous algorithms under shared randomness for various covering problems. 
Our results are summarized in Table~\ref{tab:results}.

For a graph $G=(V,E)$, we say that a set $S \subseteq V$ is a \emph{vertex cover} if every edge $e \in E$ is incident to a vertex in $S$.
In the minimum vertex cover problem, given a graph $G=(V,E)$ and a weight vector $w \in \mathbb{R}_{\geq 0}^V$, the goal is to find a vertex cover $S \subseteq V$ that minimizes the total weight $\sum_{v \in S}w_v$.
For the minimum vertex cover problem, we show a polynomial-time $2$-approximation algorithm with Lipschitz constant of at most $4$.
Given that computing $(2-\epsilon)$-approximate solution for the minimum vertex cover problem is NP-hard under the unique games conjecture for any $\epsilon > 0$~\cite{khot2008vertex}, our results show that we can reduce the Lipschitz constant to a small constant without sacrificing the approximation ratio.
We also prove that any exact (not necessarily polynomial-time) algorithm for the vertex cover problem has a Lipschitz constant of $\Omega(n)$, where $n$ is the number of vertices.

Let $U$ be a set of $n$ elements and $\mathcal{F}$ be a family of $m$ subsets of $U$.
Then, we say that $\mathcal{S} \subseteq \mathcal{F}$ is a \emph{set cover} if every element $e \in U$ is included in a set in $\mathcal{S}$.
In the minimum set cover problem, given a pair $(U,\mathcal{F})$ and a weight vector $w \in \mathbb{R}_{\geq 0}^\mathcal{F}$, the goal is to find a set cover $\mathcal{S} \subseteq \mathcal{F}$ that minimizes the total weight $\sum_{S \in \mathcal{S}}w_S$.
We provide three algorithms for the minimum set cover problem that are Lipschitz continuous with respect to the weight vector $w$.

First, we show that for any $\epsilon > 0$, there is a polynomial-time $(s+\epsilon)$-approximation algorithm with Lipschitz constant $O(\epsilon^{-1}s^2)$, where $s$ is the maximum set size in the instance, that is, $s := \max_{S \in \mathcal{F}}|S|$.
Our algorithm is obtained by modifying a naive $s$-approximation algorithm that selects, for each element $e \in U$, a set $S \subseteq \mathcal{F}$ including $e$ with the minimum weight, and adds it to the output.

Next, we show that for any $\epsilon > 0$, there is an  $(H_s+\epsilon)$-approximation algorithm with Lipschitz constant $\exp O(\epsilon^{-2}(s+\log f)s^2 \log^3 s)$, where $H_s = \sum_{i=1}^s (1/i) = O(\log s)$ and $f$ is the maximum frequency of an element, that is, $f := \max_{e \in U}|\{S \in \mathcal{F} : e \in S\}|$.
The time complexity is polynomial when $s = O(\log n)$.
It is known that the greedy algorithm achieves $H_s$-approximation~\cite{chvatal1979greedy,johnson1973approximation,lovasz1975ratio},
and we modify the algorithm to have a small Lipschitz constant when $s$ and $f$ are bounded.
Compared to the first algorithm, this algorithm has a better approximation ratio and a worse Lipschitz constant, which shows the trade-off between them.

Finally, we show that we can compute an $O(\log n)$- and $(f+\epsilon)$-approximate set cover in polynomial time with Lipschitz constant $O(\sqrt{\min\{n,m\} \log m}\log n)$ and $O(\epsilon^{-1}f^3 \sqrt{\min\{n,m\} \log m}\log n)$, respectively.
We note that it is NP-hard to achieve approximation ratio of $(1-\epsilon)\log n$ for any $\epsilon > 0$~\cite{dinur2014analytical} and NP-hard to achieve approximation ratio of $f-\epsilon$ for any $\epsilon > 0$ assuming the unique games conjecture~\cite{khot2008vertex}.
Hence, the approximation ratios of our algorithms are almost tight.
Also, the algorithms work even when $s$ and $f$ are not bounded.
We note that it is not hard to show that there is an $O((1+\epsilon)\log n)$-approximation (or $O((1+\epsilon)f)$-approximation) algorithm with Lipschitz constant $O(\epsilon^{-1}\min\{n,m\})$ (using the argument in Appendix~\ref{sec:naive-lipschitz-constant-bound}), and hence our algorithm (almost) quadratically improves the Lipschitz constant.

Given a graph $G=(V,E)$, a vertex set $S \subseteq V$ is called a \emph{feedback vertex set} if every cycle in $G$ has a vertex in $S$.
In the minimum feedback vertex set problem, given a graph $G=(V,E)$ and a weight vector $w \in \mathbb{R}_{\geq 0}^V$, the goal is to find a feedback vertex set $S \subseteq V$ that minimizes the total weight $\sum_{v \in S}w_v$.
We provide a polynomial-time algorithm for the minimum feedback problem whose performance is similar to the LP-based algorithm for the minimum set cover problem.
Specifically, we show that there exists a polynomial-time algorithm that outputs an $O(\log n)$-approximate feedback vertex set with Lipschitz constant $O(\sqrt{n}\log^{3/2} n)$, where $n$ is the number of vertices.
We mention that the minimum feedback vertex set problem admits $2$-approximation~\cite{bafna19992} and determining whether we can improve the multiplicative error of our algorithm to two (or a constant) is an interesting open problem.

\subsection{Proof Overview}

If an algorithm comprises multiple steps and the later steps depend on the results of earlier steps, the Lipschitz constant generally increases at each step, eventually reaching a large value. To achieve a small Lipschitz constant, the Lipschitz constant should not increase at each step or we should break the dependencies between steps. 
We provide an overview of our algorithm and how we achieve a small Lipschitz constant (without shared randomness) for each problem.
Then we discuss modifications required to achieve a small Lipschitz constant under shared randomness.

\paragraph{Vertex Cover}
Our algorithm is a primal-dual algorithm based on the following LP:

\begin{align*}
    \begin{array}{llll}
    \text{Primal}(G,w):= & \text{minimize} & \displaystyle \sum_{v \in V}w_v x_{v}, \\
    & \text{subject to} & x_{u}+x_{v} \geq 1 & \forall e=(u,v)\in E, \\
    & & x_{v} \geq 0 & \forall v \in V.
    \end{array}
\end{align*}
\begin{align*}
    \begin{array}{llll}
        \text{Dual}(G,w):= & \text{maximize} & \displaystyle \sum_{e \in E}y_{e}, \\
        & \text{subject to} & \displaystyle \sum_{e\in \delta(v)}y_e \leq w_v & \forall v\in V, \\
        & & y_e \geq 0 & \forall e \in E.
    \end{array}
\end{align*}
Here $\delta(v)$ denotes the set of edges incident to the vertex $v$.
The algorithm uses the concept of time to obtain feasible solutions for the dual problem. A vertex $v \in V$ is called \emph{tight} when it satisfies $\sum_{e\in \delta(v)}y_e=w_v$. At time $0$, we set $y_e=0$ for all edges $e \in E$, and simultaneously increment the value of $y_e$ by $1$ per unit of time as long as neither of the endpoints of $e$ is tight. 
Once all edges are incident to tight vertices, the algorithm adds each vertex $v$ to the output with probability $\sum_{e\in \delta(v)}y_e/w_v$. The approximation guarantee of $2$ follows from weak LP duality.

The main technical part lies in the analysis of the Lipschitz constant. Let $w$ and $w'$ be two different weight vectors. We denote the value of $y_e$ at time $t$ when the algorithm is executed with weights $w$ and $w'$ as $y_e(t)$ and $y'_e(t)$, respectively. We define the \emph{residual distance} as
\[
\mathrm{RD}(t) := \sum_{v\in V}\left|\left(w_v-\sum_{e\in \delta(v)}y_e(t)\right)-\left(w'_v-\sum_{e\in \delta(v)}y'_e(t)\right)\right|.
\]
We can prove that the residual distance does not increase over time.
It implies that for any given time $t \in \mathbb{R}_{\geq 0}$, $\sum_{e\in E}|y_e(t)-y'_e(t)|\leq |w-w'|$, which can be used to bound the Lipschitz constant.

\paragraph{Naive algorithm for Set Cover}

Since the minimum vertex cover problem is a special case of the minimum set cover problem, it is natural to expect that our Lipschitz continuous algorithm for the former can be extended to the latter. 
However, the proof of the monotone non-increasingness of the residual distance heavily relies on the fact that each element is included in exactly two sets, so this approach does not succeed. 
Therefore, we need to design new algorithms specific to the minimum set cover problem. 
We design multiple algorithms, and here we overview the simplest one.

We start with the observation that the algorithm that selects, for each element $e\in U$, a set $S \ni e$ with the minimum weight and outputs them is a trivial $s$-approximation algorithm, where $s$ is the maximum set size in $\mathcal{F}$. 
We make slight modifications to obtain a Lipschitz continuous algorithm. Specifically, instead of selecting the set with the minimum weight that contains each $e \in U$, we sample a set uniformly at random from the family of sets with their weight less than a constant times the minimum weight independently from others. 
By setting the constant to achieve an approximation ratio of $s + \epsilon$, we obtain the Lipschitz constant of $O(\epsilon^{-1}s^2)$ using the independence of the sampling process.

\paragraph{Greedy-based algorithm for Set Cover}

A well-known greedy algorithm for the minimum set cover problem is as follows: Start with an empty solution. At each step, let $R \subseteq U$ be the set of vertices still not covered, select the set $S \in \mathcal{F}$ that minimizes $w_S/|R \cap S|$, and add it to the solution. This algorithm is an $H_s$-approximation algorithm, where $H_s := \sum_{i=1}^s (1/i)$.

We modify this algorithm to obtain a Lipschitz continuous algorithm. 
First, we rewrite the operations of the original algorithm as follows: Define a multifamily $\mathcal{F}^\downarrow$ as $\bigcup_{S \in \mathcal{F}} (2^S \setminus \{\emptyset\})$, and define the weight $w^\downarrow_A$ for each element $A \in \mathcal{F}^\downarrow$ as the weight $w_S$ of the corresponding $S \in \mathcal{F}$.
By this rephrasing, each step of the greedy algorithm can be seen as the operation of selecting the set $A\subseteq \mathcal{F}^{\downarrow}$ of the minimum relative weight $w^{\downarrow}_A/|A|$ that has no intersection with already covered elements.
Then, we introduce a positive integer constant $M$ and round $w^\downarrow_A$ to an integer power of $s^{1/M}$.

The main trick of our algorithm is to apply the greedy algorithm after discarding some sets of $\mathcal{F}^\downarrow$. 
Specifically, we introduce a positive integer constant $K$ and discard a set $A \in \mathcal{F}^\downarrow$ if $\log_s w^\downarrow_A$ belongs to a specific interval of width $1$ modulo $K$. 
Then, whether to include $A$ in the solution only affects choices of having elements with weights less than $s^{K}\cdot w^\downarrow_A$, which breaks the dependencies between steps in the greedy algorithm.
Because we round the weights, there are at most $sKM$ possible values for $w^{\downarrow}_A/|A|$.
If we randomly decide whether to include elements with equal (rounded) weights in the solution, we can prove that the Lipschitz constant can be bounded by a constant determined by $s$ and $f$ raised to the power of $sKM$.

\paragraph{LP-based algorithm for Set Cover}

The general approach of solving an LP relaxation and then rounding the obtained fractional solution is powerful in designing approximation algorithms.
Unfortunately, it was not known whether this approach can be made Lipschitz continuous (with respect to the weighted distance).
In this work, we show a Lipschitz continuous algorithm that solves a covering LP and then use it to design Lipschitz continuous algorithms for various settings.

% To handle general set cover instances, we utilize the standard LP-based algorithm and introduce modifications to ensure a low Lipschitz constant.

Let us begin by recalling the standard LP-based algorithm for the minimum set cover problem. 
Given an instance $(U,\mathcal{F})$ of the minimum set cover problem and a weight vector $w \in \mathbb{R}_{\geq 0}^{\mathcal{F}}$, we consider the following LP relaxation:
\begin{align}
    \begin{array}{lll}
        \text{minimize} & \displaystyle \sum_{S \in \mathcal{F}}w_S x_S, \\
        \text{subject to} & \displaystyle \sum_{S \in \mathcal{F}: e \in S}x_S \geq 1 & \forall e \in U, \\
        & 0 \leq x_S \leq 1 & \forall S \in \mathcal{F},
    \end{array}
    \label{eq:lp-set-cover-intro}
\end{align}
where $n = |U|$ is the number of elements.
Let $\OPTLP$ denote the optimal value of LP~\eqref{eq:lp-set-cover-intro} and $x \in \mathbb{R}^{\mathcal{F}}$ be the corresponding solution.
Then for each $S \in \mathcal{F}$, we independently add it to the solution with probability $x_S$, repeating this process $O(\log n)$ times to ensure that the output solution becomes a set cover with a high probability.
Clearly, the expected total weight of the obtained set cover is $O(\OPTLP \cdot \log n)$.

Now, we explain the modification required to make the algorithm above Lipschitz continuous. 
Suppose we have two weight vectors $w,w' \in \mathbb{R}_{\geq 0}^{\mathcal{F}}$ and their corresponding solutions to LP~\eqref{eq:lp-set-cover-intro} are denoted by $x$ and $x'$, respectively.
To achieve Lipschitz continuity in the rounding part, we need to ensure that $\sum_{S \in \mathcal{F}}|w_S x_S - w'_S x'_S|$ is small. 
To accomplish this, we introduce two regularizers in the objective function:
\[
    \sum_{S \in \mathcal{F}}w_S x_S + \frac{\lambda}{2} \sum_{S \in \mathcal{F}}x_S^2 + \frac{\kappa}{2} \sum_{S \in \mathcal{F}} \left(\frac{w_S x_S}{2W}+\frac{1}{2m}\right) \log \left(\frac{w_S x_S}{2W} + \frac{1}{2m}\right),
\]
where $m = |\mathcal{F}|$ is the number of sets, $\lambda = \Theta(\OPTLP / m)$, $\kappa = \Theta(\OPTLP/\log m)$, and $W = \Theta(\OPTLP)$\footnote{We denote the optimal LP value as $\OPTLP$ without specifying the weights, but this is an inaccurate representation since the optimal LP values for $w$ and $w'$ are different. However, we consider the situation where $w$ and $w'$ are close, and thus two optimal values are also close. Therefore, in this overview, we simply write $O(\OPTLP)$ to represent the amount proportional to both optimal values.}.
The addition of the first type of regularizer has been utilized in~\cite{Varma2021} to design an algorithm for the $s$-$t$ minimum cut problem with low average sensitivity, and in~\cite{kumabe2023lipschitz} to design a Lipschitz continuous algorithm for the bipartite maximum matching problem, where the distance between two matchings is computed without using the respective weights (as opposed to $\EMW$). 

To control the stability with respect to the weighted distance $d_{\mathrm{w}}(\cdot,\cdot)$, we also incorporate the second type of regularizer, which is the negative entropy of the vector $(\frac{w_S x_S}{2W}+\frac{1}{2m})_{S \in \mathcal{F}}$.
%The additional constraint ensures that the entropy value is bounded by $O(\log m)$.
Exploiting strong convexity of these regularizers, respectively, we can show that $\sum_{S \in \mathcal{F}}|w_S x_S - w'_S x'_S| = O(\sqrt{m \log m} \cdot \|w-w'\|_1)$ while the additional error in the objective value is $O(\OPTLP)$.
% We note that the use of LP for bounding the Lipschitz constant (with respect to $d_w(\cdot,\cdot)$) is novel.

An intuitive reason that we have a factor of $O(\sqrt{m \log m})$ in the above bound is as follows:
Assuming that the weight vector $w'$ is obtained from $w$ by increasing $w_S$ by $\delta>0$ and $\OPTLP = \Theta(1)$, the first regularizer guarantees that we have $\lambda (x_S - x'_S)^2 \leq \lambda \|x-x'\|_2^2 =O(\delta(x_S - x'_S))$, and hence we have $x_S - x'_S = O(\delta/\lambda)$.
Then, the second regularizer guarantees that $\kappa (\sum_{S \in \mathcal{F}}|w_S x_S - w'_S x'_S|)^2 = O(\delta (x_S-x'_S))$, and hence we have $\sum_{S \in \mathcal{F}}|w_S x_S - w'_S x'_S| = O(\delta/\sqrt{\kappa \lambda}) = O(\delta \sqrt{m \log m})$.

We can show that the Lipschitz constant for a single iteration of the rounding part is $\sum_{S \in \mathcal{F}}|w_S x_S - w'_S x'_S|/\|w-w'\|_1 = O(\sqrt{m \log m})$ using the independence of the sampling process.
Since we repeat the iteration $O(\log n)$ times, the final Lipschitz constant becomes $O(\sqrt{m \log m}\log n)$, while the expected total weight of the output is $O(\log n \cdot \OPTLP) = O(\log n \cdot \OPT)$.
Also, we note that we can improve the Lipschitz constant to $O(\sqrt{\min\{n,m\} \log m}\log n)$ by exploiting the fact that there exists an optimal solution with at most $\min\{n,m\}$ sets in $\mathcal{F}$.

By replacing the rounding scheme, we can also obtain a $(f+\epsilon)$-approximation algorithm with Lipschitz constant $O(\epsilon^{-1}f^3 \sqrt{m \log m}\log n)$, where $f$ is the maximum frequency of an element, which shows the versatility of the linear programming approach.

\paragraph{Feedback Vertex Set}

The minimum feedback vertex set problem can be represented as a minimum set cover problem, where vertices correspond to sets and cycles correspond to elements.
Therefore, given a graph $G=(V,E)$ and a weight vector $w \in \mathbb{R}_{\geq 0}^V$, we consider the LP relaxation~\eqref{eq:lp-set-cover-intro}, and we solve it using our algorithm for the minimum set cover problem with $\lambda = \Theta(\OPTLP/n)$ and $\kappa = \Theta(\OPTLP/\log n)$, where $n := |V|$ is the number of vertices (corresponding to $m$ in the set cover case).
This will yield an LP solution $x \in \mathbb{R}_{\geq 0}^V$ such that $\sum_{v \in V}w_v x_v = O(\OPTLP)$ and $\sum_{v \in C}x_v \geq 1$ for every cycle $C$ in $G$. (For technical reasons, we consider a tighter LP relaxation in the formal analysis.)
In addition, for LP solutions $x,x' \in \mathbb{R}^V$ constructed from weight vectors $w,w' \in \mathbb{R}^V$, respectively, we have $\sum_{v \in V}|w_v x_v - w'_v x'_v| = O(\sqrt{n \log n} \cdot \|w-w'\|_1)$.

However, an issue arises in the rounding part. 
To ensure that the output is a feedback vertex set, the independent sampling process needs to be repeated $O(\log |\mathcal{C}|)$ times, where $\mathcal{C}$ is the set of simple cycles in $G$.
However, because $|\mathcal{C}|$ can be exponential in the number of vertices in $G$, this rounding approach leads to polynomial approximation ratio and Lipschitz constant, which are excessively large.

To address this issue, we introduce a technique called \emph{cycle sparsification} and use it to round the LP solution.
For two weight vectors $z,\tilde{z} \in \mathbb{R}^V$, we say that $(G,\tilde{z})$ is an \emph{$\epsilon$-cycle sparsifier} of $(G,z)$ if
\[
    (1-\epsilon) z(C) \leq \tilde{z}(C) \leq (1 + \epsilon)z(C)
\]
holds for every cycle $C$ in $G$, where $z(C) = \sum_{v \in C} z_v$ and $\tilde{z}(C) = \sum_{v \in C} \tilde{z}_v$.
A similar notion for cuts has been extensively studied under the name of \emph{cut sparsifier}~\cite{benczur2015randomized,fung2019general}.

To obtain a cycle sparsifier, we modify the cut sparsification algorithm for integer-weighted graphs proposed by Fung~et~al.~\cite{fung2019general}.
In their algorithm, they compute the \emph{edge connectivity} $k_e$ of each edge $e \in E$, which is the minimum weight of a cut that separates the endpoints of $e$.
Then for each edge $e \in E$ with weight $z_e\in \mathbb{Z}_{\geq 0}$, they sample a number $c_e$ from the binomial distribution $\mathcal{B}(z_e, p_e)$, where $p_e = O(\log n / k_e)$, and assign the weight $c_e/p_e$ to it.
To construct a cycle sparsifier, we first compute the \emph{girth} $g_v$ of a vertex $v \in V$, the minimum weight of a cycle passing through the vertex.
Then for each vertex $v \in V$ with weight $z_e \in \mathbb{Z}_{\geq 0}$, we sample a number $c_v$ from the binomial distribution $\mathcal{B}(z_v, p_v)$, where  $p_v = O(\log n/g_v)$, and assign the weight $c_v/p_v$ to it.

In the analysis of the cut sparsifier proposed by Fung~et~al.~\cite{fung2019general}, they used a cut counting lemma, which asserts that, for any edge set $F \subseteq E$, the number of subsets of $F$ induced by small cuts is small.
We need a counterpart of this lemma for cycles, that is, we want to show that for any vertex set $S \subseteq V$, the number of subsets of $S$ induced by small cycles is small.
We prove this by generalizing the known argument for $S = V$~\cite{subramanian1995polynomial}.

Now, we explain how we use the cycle sparsification result for rounding.
Consider an LP solution $x \in \mathbb{R}_{\geq 0}^V$ for the minimum feedback vertex set problem.
Note that in the weighted graph $(G,x)$, every vertex has a girth of at least one. 
To round the LP solution $x$, we first construct an integer vector $z \in \mathbb{Z}_{\geq 0}^V$ by scaling it by a factor $n^2$.
Then, we apply cycle sparsification to $z$ using the girth lower bound of $n^2$.
The resulting vector $\tilde{z} \in \mathbb{R}_{\geq 0}^V$ is a cycle sparsifier with a high probability.
This implies that the set $S = \{v \in V: \tilde{z}_v > 0\}$ is a feedback vertex set because every cycle has a positive weight in $\tilde{z}$.
Furthermore, we can show that the total weight of $S$ is $O(\sum_{v \in V}w_v x_v \log n)$, which is $O(\log n \cdot \OPTLP)$.

Now let us consider Lipschitz continuity of the algorithm.
Let $x,x' \in \mathbb{R}^V$ be the LP solutions for weight vectors $w,w' \in \mathbb{R}^V$, respectively.
Then, considering that our cycle sparsification algorithm samples weights of vertices independently, we can show that it amplifies the Lipschitz constant merely by $O(\log n)$.
Combined with the bound $\sum_{v \in V}|w_v x_v - w'_v x'_v| = O(\sqrt{n \log n} \cdot \|w-w'\|_1)$, we can show that the Lipschitz constant of the entire algorithm is $O(\sqrt{n}\log^{3/2} n)$.

\paragraph{Lipschitz Continuity under Shared Randomness}
Our algorithms can straightforwardly be made Lipschitz continuous even in the shared randomness setting. 
Specifically, we replace the random sampling processes in our algorithms with stable sampling procedures developed by Kumabe and Yoshida~\cite{kumabe2023lipschitz} to ensure Lipschitz continuity under shared randomness.

\subsection{Related Work}

As mentioned earlier, Kumabe and Yoshida introduced the concept of Lipschitzness for graph algorithms~\cite{kumabe2023lipschitz}.
They devised Lipschitz continuous algorithms for the minimum spanning tree, shortest path, and maximum matching problems. 
This work expands the scope of Lipschitz continuous algorithms to covering problems.
Although different possible notions of Lipschitz continuity were considered, the ones used in this work (Definitions~\ref{def:randomized} and~\ref{def:randomized-shared-randomness}) are arguably the most natural.

The average sensitivity of algorithms, introduced by Varma and Yoshida~\cite{Varma2021}, is a concept closely related to Lipschitz continuity.
The \emph{average sensitivity} of an algorithm on an (unweighted) instance $X=(x_1,\ldots,x_n)$ consisting of $n$ elements is the average earth mover's distance between the output for $X$  and that for $X \setminus \{x_i\}$, where we take the expectation over $i \in \{1,2,\ldots,n\}$.
They showed algorithms for the minimum spanning tree, minimum cut, and maximum matching problems with low average sensitivity.
Since then algorithms with low average sensitivity have been proposed for various problems, including the minimum cut and  maximum matching problems~\cite{Yoshida2021}, $k$-means~\cite{yoshida2022average}, decision tree learning~\cite{hara2023average}, spectral clustering~\cite{peng2020average}.
We note that bounding the Lipschitz constant is usually more challenging than bounding the average sensitivity because it considers the worst-case input modification and requires dealing with arbitrarily small yet non-zero perturbations.

\subsection{Preliminaries}
We use bold symbols to denote random variables.
For a (continuous or discrete) set $X$, let $\mathcal{U}(X)$ denote the uniform distribution over $X$.
For a vector $w \in \mathbb{R}^V$ and a set $S \subseteq V$, we define $w(S) := \sum_{v \in S}w_v$.

The following is useful to bound the Lipschitz constant. 
The proof is quite similar to that of Lemma~1.7 of~\cite{kumabe2023lipschitz}.
\begin{lemma}\label{lem:seeoneelement}
Let $\mathcal{A}$ be an algorithm that takes a graph $G=(V,E)$ and a weight vector $w \in \mathbb{R}_{\geq 0}^V$ and outputs a subset of $V$.
Suppose that there exist some $c>0$ and $L>0$ such that
\[
    \EMW\left((\mathcal{A}(G,w),w), (\mathcal{A}(G,w+\delta \mathbf{1}_v),w+\delta \mathbf{1}_v)\right)\leq \delta L
\]
holds for any $v\in V$, $w\in \mathbb{R}_{\geq 0}^V$, and $\delta > 0$ with either $\delta\leq c\cdot w_v$ or $w_v=0$.
Then, $\mathcal{A}$ has Lipschitz constant $L$ on $G$.

Similarly, if the above inequality holds for any $v\in V$, $w\in \mathbb{R}_{\geq 0}^V$, and $0 < \delta < c$, then $\mathcal{A}$ has Lipschitz constant $L$ on $G$.
\end{lemma}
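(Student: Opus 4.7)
The plan is to use a hybrid path argument. Fix two weight vectors $w,w' \in \mathbb{R}_{\geq 0}^V$ with $w \neq w'$, enumerate $V = \{v_1,\dots,v_n\}$, and define intermediate vectors $w^{(0)},\dots,w^{(n)}$ by setting $w^{(i)}_{v_j} = w'_{v_j}$ for $j \leq i$ and $w^{(i)}_{v_j} = w_{v_j}$ for $j > i$, so that $w^{(0)}=w$ and $w^{(n)}=w'$. Since $d_{\mathrm{w}}$ is defined as the $\ell_1$ distance between embeddings $\sum_{v \in S}w_v\mathbf{1}_v$ in $\mathbb{R}^V$, it is a (pseudo)metric, and the standard gluing lemma for couplings then promotes this to the triangle inequality for $\EMW$. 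Hence it suffices to bound each one-coordinate transition $\EMW((\mathcal{A}(G,w^{(i-1)}),w^{(i-1)}),(\mathcal{A}(G,w^{(i)}),w^{(i)}))$ by $L\,|w_{v_i}-w'_{v_i}|$ and sum.

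Fix a single transition and write $v=v_i$, $a = w^{(i-1)}_v$, $b = w^{(i)}_v$; by symmetry of $\EMW$ we may assume $a \leq b$. If $a=0$, the hypothesis applies directly with $\delta=b$ (the ``$w_v=0$'' branch), giving an $\EMW$ bound of $Lb=L|b-a|$. If $a>0$, I would chain the transition through a geometric subdivision $a=a_0<a_1<\cdots<a_k=b$ with $a_{j+1} = \min(b,(1+c)a_j)$, so each increment satisfies $a_{j+1}-a_j\leq c\cdot a_j$. Defining intermediate weight vectors that agree with $w^{(i-1)}$ except that coordinate $v$ is set to $a_j$, the hypothesis (the ``$0<\delta\leq c\cdot w_v$'' branch) applies at every step with $w_v=a_j$ and $\delta=a_{j+1}-a_j$. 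Applying triangle inequality for $\EMW$ once more along this chain gives a bound of $L\sum_{j=0}^{k-1}(a_{j+1}-a_j)=L(b-a)$.

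Combining the two ingredients yields
\[
\EMW\bigl((\mathcal{A}(G,w),w),(\mathcal{A}(G,w'),w')\bigr)\leq \sum_{i=1}^n L\,|w_{v_i}-w'_{v_i}| = L\,\|w-w'\|_1,
\]
which is exactly the claimed Lipschitz bound. The only step where I expect any care to be required is the geometric subdivision: when $b/a$ is large the chain has $\Theta(\log(b/a))$ pieces, and one must verify that the hypothesis is invoked with respect to the \emph{intermediate} weight vector at each piece (so that the local weight on coordinate $v$ really is $a_j$, not the original $w_v$), and that the telescoping sum $\sum_j(a_{j+1}-a_j)=b-a$ keeps the cumulative bound finite and tight independent of $k$. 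A minor secondary point is confirming the triangle inequality for $\EMW$ under $d_{\mathrm{w}}$, but this is a routine consequence of the $\ell_1$ embedding structure of $d_{\mathrm{w}}$ together with the gluing lemma.
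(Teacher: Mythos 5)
Your proof is correct and follows essentially the same route as the argument the paper relies on (the proof of Lemma~1.7 in Kumabe--Yoshida): a coordinate-by-coordinate hybrid decomposition combined with a geometric subdivision of each single-coordinate change so that every step satisfies $\delta \leq c\cdot w_v$, glued together by the triangle inequality for $\EMW$, which holds because $d_{\mathrm{w}}$ is an $\ell_1$-embedding pseudometric. The only cosmetic omission is the trivial case $a=b$ (where the transition contributes zero), so nothing substantive is missing.
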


\subsection{Organization}

We provide Lipschitz continuous algorithm for the minimum vertex cover problem in Section~\ref{sec:vertex-cover}.
Then, we provide naive and greedy-based Lipschitz continuous algorithms for the minimum set cover problem in Sections~\ref{sec:set-cover-naive} and ~\ref{sec:set-cover-greedy}, respectively.
Then, we provide LP-based Lipschitz continuous algorithms for the minimum set cover problem with approximation ratio $O(\log n)$-approximation and $(f+\epsilon)$-approximation in Sections~\ref{sec:set-cover} and~\ref{sec:set-cover-f}, respectively.
We provide Lipschitz continuous algorithm for the feedback vertex set problem and the cycle sparsification result in Section~\ref{sec:fedback-vertex-set}.
Finally, we discuss modifications needed to bound the Lipschitz constant under shared randomness in Section~\ref{sec:shared-randomness}.

\section{Vertex Cover}\label{sec:vertex-cover}

The goal of this section is to show the following:
\begin{theorem}\label{thm:vertex-cover}
    There exists a polynomial-time $2$-approximation algorithm for the minimum vertex cover problem with Lipschitz constant $4$.
\end{theorem}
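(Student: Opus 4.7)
The plan is to realize the primal-dual scheme from the overview. Given $(G,w)$, raise every dual $y_e$ from zero at unit rate, freezing $y_e$ as soon as one endpoint of $e$ becomes tight; once every edge is incident to a tight vertex, include each $v$ in the output independently with probability $p_v := \sum_{e\in\delta(v)}y_e/w_v$ (interpreting $0/0$ as $1$). The output is always a vertex cover since every tight vertex is kept deterministically, and the approximation ratio of $2$ is immediate from weak LP duality: $\E[w(S)] = \sum_v w_v p_v = 2\sum_e y_e \le 2\,\OPT$.

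For the Lipschitz analysis I would couple the executions on $w$ and $w'$ along a shared time axis. Let $y_e(t), y'_e(t)$ be the dual variables, $a_v(t) = w_v - \sum_{e \in \delta(v)} y_e(t)$ and $a'_v(t)$ the slacks, and set $b_v(t) = a_v(t) - a'_v(t)$. Differentiating the residual distance,
\[
\frac{d}{dt}\mathrm{RD}(t) = -\sum_{e=(u,v)} r_e(t)\bigl(\operatorname{sign}(b_u) + \operatorname{sign}(b_v)\bigr),
\]
where $r_e = \dot y_e - \dot y'_e \in \{-1,0,1\}$. Whenever $r_e \ne 0$, some endpoint is tight in precisely one execution: if $u$ is tight in the primed execution but not in the original, then $b_u = a_u > 0$, so $\operatorname{sign}(b_u)$ is forced to be the sign of $-r_e$, and a short case split on whether the other endpoint $v$ is tight in each execution shows that each edge's contribution is non-positive. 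Hence $\mathrm{RD}$ is non-increasing and $\mathrm{RD}(t) \le \mathrm{RD}(0) = \|w-w'\|_1$ throughout. A parallel charging argument that pairs each unit increment of $\sum_e|y_e - y'_e|$ with a matching unit decrease of $\mathrm{RD}$ (both driven by the same tightness events) then yields the edge-level bound $\sum_e|y_e(t) - y'_e(t)| \le \|w-w'\|_1$ at every time $t$.

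To finish, I would couple the two random outputs by sampling a shared uniform threshold $\pi_v \in [0,1]$ per vertex and including $v$ in each output exactly when $\pi_v < p_v$ (respectively $\pi_v < p'_v$). A direct case analysis on the signs of $p_v - p'_v$ and $w_v - w'_v$ (four cases) shows that the expected contribution of $v$ to $d_{\mathrm{w}}$ is at most
\[
2|w_v - w'_v| + |p_v w_v - p'_v w'_v| \;\le\; 2|w_v - w'_v| + \sum_{e \in \delta(v)} |y_e - y'_e|,
\]
where I use $p_v w_v - p'_v w'_v = \sum_{e\in \delta(v)}(y_e - y'_e)$. Summing over $v$ and applying the edge bound gives $\EMW \le 2\|w-w'\|_1 + 2\sum_e|y_e - y'_e| \le 4\|w-w'\|_1$, so the Lipschitz constant is at most $4$; invoking Lemma~\ref{lem:seeoneelement} handles the reduction to single-coordinate perturbations if a streamlined argument is needed.

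The main obstacle is the monotonicity step together with the accompanying edge-charging: the delicate moments are when a vertex becomes tight in one execution but not the other, so that the two dynamics decouple and both $\mathrm{RD}$ can drop and $\sum_e|y_e - y'_e|$ can grow simultaneously. The nontrivial part is verifying that the latter can never grow faster than the former drops, so that the initial budget $\|w-w'\|_1$ controls the final edge difference. Once this is in hand, the LP-duality step and the per-vertex coupling are routine.
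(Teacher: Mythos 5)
Your overall architecture matches the paper's (primal-dual with a continuous time axis, monotonicity of $\mathrm{RD}$, then a per-vertex coupling of the rounding step), and your per-vertex estimate $2|w_v-w'_v|+|p_vw_v-p'_vw'_v|$ with $p_vw_v-p'_vw'_v=\sum_{e\in\delta(v)}(y_e-y'_e)$ is fine. The genuine gap is exactly the step you flagged as the main obstacle: the edge-level bound $\sum_{e}|y_e-y'_e|\leq \|w-w'\|_1$ and the charging argument behind it are false. Take a star with center $u$ of weight $1+k(1+\epsilon)$, one leaf $v$ of weight $1$ (perturbed to $1+\epsilon$), and $k$ leaves $x_1,\dots,x_k$ of huge weight. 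Under $w$, the leaf $v$ becomes tight at $t=1$ (freezing $(u,v)$ at $1$) and $u$ becomes tight at $t=1+\epsilon$ (freezing each $(u,x_i)$ at $1+\epsilon$); under $w'$, $u$ becomes tight first at $t^*=1+\tfrac{k}{k+1}\epsilon$, freezing every edge at $t^*$. Then $\sum_e|y_e-y'_e|=\tfrac{2k}{k+1}\epsilon$, which exceeds $\|w-w'\|_1=\epsilon$ for $k\geq 2$ and approaches $2\epsilon$; moreover one can check that $\mathrm{RD}(t)\equiv\epsilon$ throughout this run, so the claimed pairing of "each unit increment of $\sum_e|y_e-y'_e|$ with a matching unit decrease of $\mathrm{RD}$" cannot exist — the edge difference grows while $\mathrm{RD}$ does not move. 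Plugging the best one could hope for (a constant-$2$ edge bound) into your chain $\sum_v|\sum_{e\in\delta(v)}(y_e-y'_e)|\leq 2\sum_e|y_e-y'_e|$ would only give Lipschitz constant $6$, not $4$.

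The repair is the paper's route, which is both simpler and stronger: skip edges entirely and bound the vertex-level quantity directly. From $\mathrm{RD}(T)\leq \mathrm{RD}(0)=\|w-w'\|_1$ and the triangle inequality,
\begin{align*}
\sum_{v\in V}\Bigl|\sum_{e\in\delta(v)}y_e-\sum_{e\in\delta(v)}y'_e\Bigr|
\leq \mathrm{RD}(T)+\sum_{v\in V}|w_v-w'_v|\leq 2\|w-w'\|_1,
\end{align*}
which is exactly the quantity your per-vertex coupling needs; substituting it yields $\EMW\leq 2\|w-w'\|_1+2\|w-w'\|_1=4\|w-w'\|_1$ as desired (the paper also handles the $w_u=0$ case and the reduction via Lemma~\ref{lem:seeoneelement} separately, which you should do as well since $p_v$ is undefined there). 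The $\mathrm{RD}$-monotonicity sketch and the rest of your argument are consistent with the paper's proof.
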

We first look at the primal and dual LP formulations for the minimum vertex cover problem. 
In the primal LP, the variable $x_v$ for a vertex $v \in V$ represents whether it belongs to a vertex cover.
\begin{align*}
    \begin{array}{llll}
        \text{Primal}(G,w):= & \text{minimize} & \displaystyle \sum_{v \in V}w_v x_{v}, \\
        & \text{subject to} & x_{u}+x_{v} \geq 1 & \forall e=(u,v)\in E,\\
        & & x_{v} \geq 0 & \forall v \in V.
    \end{array}
\end{align*}
\begin{align*}
    \begin{array}{llll}
        \text{Dual}(G,w):= & \text{maximize} & \displaystyle \sum_{e \in E}y_{e},\\
        & \text{subject to} & \sum_{e\in \delta(v)}y_e \leq w_v & \forall v\in V,\\
        && y_e \geq 0 & \forall e \in E.
    \end{array}
\end{align*}

\begin{algorithm}[t!]
\caption{Lipschitz continuous algorithm for the minimum vertex cover problem}\label{alg:vertexcover}
\Procedure{\emph{\Call{VertexCover}{$G,w$}}}{
    \KwIn{A graph $G=(V,E)$ and a weight vector $w \in \mathbb{R}_{\geq 0}^V$.}
    Let $y_e\leftarrow 0$ for all $e\in E$\;
    Let $A\leftarrow E$\;
    \While{$A\neq \emptyset$}{\label{line:vc-while}
        Let $\Delta t$ be the minimum value of $\frac{w_v-\sum_{e\in \delta(v)}y_e}{|\delta(v)\cap A|}$ among all non-tight vertices $v\in V$\;
        \For{$e\in A$}{
            $y_e\leftarrow y_e + \Delta t$\;
            Remove all edges incident to a tight vertex from $A$\;
        }
    }
    Let $\bm{C}\leftarrow \emptyset$\;
    \For{$v\in V$}{
        Sample $\bm{z}(v)$ uniformly from $[0,1]$\;
        \If{$w_v=0$ or $\bm{z}(v)\leq \frac{\sum_{e\in \delta(v)}y_e}{w_v}$}{
            Add $v$ to $\bm{C}$\;
        }
    }
    \Return $\bm{C}$\;
}
\end{algorithm}

Our algorithm, \Call{VertexCover}{}, is presented in Algorithm~\ref{alg:vertexcover}. 
In the while loop starting from Line~\ref{line:vc-while}, we call a vertex \emph{tight} if $w_v =\sum_{e\in \delta(v)}y_e$.
This algorithm is nearly identical to the standard primal-dual algorithm (see, e.g., Section 7.1 of~\cite{williamson2011design}) for the minimum vertex cover problem, and our main contribution lies in the analysis of the Lipschitz constant.

We analyze the approximation ratio and Lipschitz continuity of \Call{VertexCover}{} in Sections~\ref{subsec:vertex-cover-approximation-ratio} and~\ref{subsec:vertex-cover-lipschitz-continuity}, respectively.

\subsection{Approximation Ratio}\label{subsec:vertex-cover-approximation-ratio}

The following holds.
\begin{lemma}\label{lem:vertex-cover-approximation-ratio}
\Call{VertexCover}{} is a $2$-approximation algorithm.
\end{lemma}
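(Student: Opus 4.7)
The plan is to follow the standard primal-dual analysis for vertex cover, with a minor twist to accommodate the randomized rounding in the final loop. The key observations are that (i) throughout the while loop the vector $y$ remains dual feasible, (ii) at termination every edge has a tight endpoint, and (iii) tight vertices are rounded with probability $1$.

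First, I would show that dual feasibility $\sum_{e \in \delta(v)} y_e \leq w_v$ is maintained for every $v \in V$. The choice of $\Delta t$ as the minimum of $(w_v - \sum_{e \in \delta(v)} y_e)/|\delta(v) \cap A|$ over non-tight $v$ guarantees that, after adding $\Delta t$ to each active $y_e$, no vertex exceeds its budget; once a vertex becomes tight, all its incident edges are removed from $A$ and its dual sum is frozen. An immediate consequence is that the while loop terminates in at most $|V|$ iterations, since at least one new vertex becomes tight per iteration.

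Next, I would argue that the output $\bm{C}$ is always a vertex cover. Termination requires $A = \emptyset$, meaning every edge is incident to some tight vertex. For any tight $v$ we have $\sum_{e \in \delta(v)} y_e / w_v = 1$, so $v$ is placed in $\bm{C}$ deterministically; vertices with $w_v = 0$ are also placed deterministically. Therefore every edge $e = (u,v) \in E$ has at least one endpoint in $\bm{C}$.

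Finally, I would bound the expected weight via a direct computation plus weak LP duality. For each $v$ with $w_v > 0$, $\Pr[v \in \bm{C}] = \sum_{e \in \delta(v)} y_e / w_v$, while vertices with $w_v = 0$ contribute nothing to the weight. Hence
\[
\E[w(\bm{C})] = \sum_{v: w_v > 0} w_v \cdot \frac{\sum_{e \in \delta(v)} y_e}{w_v} \leq \sum_{v \in V} \sum_{e \in \delta(v)} y_e = 2 \sum_{e \in E} y_e \leq 2 \cdot \OPT,
\]
where the last inequality uses dual feasibility of $y$ together with weak LP duality and the fact that the LP optimum lower bounds the integer optimum. There is no real obstacle here: the analysis is essentially the textbook primal-dual proof, and the randomized rounding is benign because it only randomizes over non-tight vertices (each contributing its exact dual sum in expectation), while the vertices that actually cover edges are tight and thus selected deterministically.
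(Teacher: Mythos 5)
Your proof is correct and follows essentially the same route as the paper: dual feasibility of $y$, tightness of at least one endpoint of every edge at termination (so tight vertices, selected with probability one, form a cover), and the expected-weight computation $\E[w(\bm{C})] = \sum_v \sum_{e \in \delta(v)} y_e = 2\sum_e y_e \leq 2\cdot\OPT$ via weak duality. The only cosmetic difference is that you handle zero-weight vertices directly while the paper removes them without loss of generality.
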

\begin{proof}
Since removing vertices $v \in V$ with $w_v=0$ does not affect the algorithm and the optimal value, we can assume $w_v>0$ for every $v\in V$ to analyze approximation ratio without loss of generality.
At the end of each iteration in the while loop starting at Line~\ref{line:vc-while}, $A$ forms a set of edges such that at least one of the endpoints is tight. 
By the termination condition of the loop, at the end of the loop, at least one endpoint of each edge is tight. 
Since the output must include all tight vertices, Algorithm~\ref{alg:vertexcover} outputs a vertex cover.

Note that the value of $y_e$ corresponding to an edge $e \in E$ incident to a tight vertex does not increase further. 
Therefore, at the end of the algorithm, for every vertex $v \in V$, we have $\sum_{e\in \delta(v)}y_e \leq w_v$. 
In particular, $y$ is a feasible solution for $\mathrm{Dual}(G,w)$.
Let $\bm{C} = \Call{VertexCover}{G,w}$.
Then, we have
\begin{align*}
\E\left[\sum_{v \in \bm{C}} w_v \right]
&= \sum_{v\in V}\left(w_v\cdot \frac{\sum_{e\in \delta(v)}y_e}{w_v}\right)
= \sum_{v\in V}\left(\sum_{e\in \delta(v)}y_e\right)
=  2\sum_{e\in E}y_e
\leq 2\cdot \OPT,
\end{align*}
where the last inequality is because $y$ is a feasible solution for $\mathrm{Dual}(G,w)$ and the weak duality theorem of linear programming.
\end{proof}

\subsection{Lipschitz Continuity}\label{subsec:vertex-cover-lipschitz-continuity}

Now we evaluate the Lipschitz continuity. 
For two weight vectors $w,w'\in \mathbb{R}_{\geq 0}^{V}$, let us evaluate
\begin{align*}
\EMW\left((\Call{VertexCover}{G,w},w), (\Call{VertexCover}{G,w'},w')\right).
\end{align*}
Let $y_e$ and $y'_e$ denote the values of the variable $y_e$ in \Call{VertexCover}{$G,w$} and \Call{VertexCover}{$G,w'$}, respectively.
We introduce the concept of \emph{time} to the loop starting from Line~\ref{line:vc-while} and regard that the algorithm continuously increments the value of $y_e$ for each edge $e \in E$ by $1$ per unit time unless one of the endpoints is tight.
Let $y_e(t)$ and $y'_e(t)$ be the value of $y_e$ and $y'_e$ at time $t$ during the execution of \Call{VertexCover}{$G,w$} and \Call{VertexCover}{$G,w'$}, respectively. It can be easily observed that, for every $e \in E$, there exists a time $t_e$ such that $y_e(t)$ is expressed as follows:
\begin{align*}
y_e(t)=\left\{\begin{array}{cc}
t & (t\leq t_e)\\
t_e & (t_e < t).
\end{array}\right.
\end{align*}
We call the value 
\[
\mathrm{RD}(t) = \sum_{v\in V}\left|\left(w_v-\sum_{e\in \delta(v)}y_e(t)\right)-\left(w'(v)-\sum_{e\in \delta(v)}y'_e(t)\right)\right|
\]
the \emph{residual distance} between $y_e$ and $y'_e$ at time $t$. 
Let us evaluate how the residual distance changes during the loop starting at Line~\ref{line:vc-while}. 
At the beginning of the loop, the residual distance is $\mathrm{RD}(0)=\|w-w'\|_1$.

For a function $f:\mathbb{R} \to \mathbb{R}$, we denote the right derivative of $f$ by $\partial_{+}f(t)$, i.e., 
\begin{align*}
\partial_{+}f(t) := \lim_{x\rightarrow +0}\frac{f(t+x)-f(t)}{x}.
\end{align*}
The following lemma is the heart of our Lipschitz continuity analysis.
\begin{lemma}\label{lem:RDdecrease}
$\partial_{+}\mathrm{RD}(t)\leq 0$. \end{lemma}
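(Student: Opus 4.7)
The plan is to express $\partial_{+}\mathrm{RD}(t)$ as a sum of per-edge contributions plus a correction term coming from vertices at which $d_v(t) = 0$, and then verify that each edge contributes non-positively. Set $r_v(t) := w_v - \sum_{e \in \delta(v)} y_e(t)$ and $r'_v(t) := w'_v - \sum_{e \in \delta(v)} y'_e(t)$ for the residual weights in the two runs, $d_v(t) := r_v(t) - r'_v(t)$, and $a_v(t) := |\delta(v) \cap A(t)|$, $a'_v(t) := |\delta(v) \cap A'(t)|$. Between consecutive tightness events, $r_v$ is linear with right derivative $-a_v(t)$ and likewise for $r'_v$, so $\partial_{+} d_v(t) = a'_v(t) - a_v(t)$. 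Since $\mathrm{RD}$ is piecewise linear in $t$, it suffices to bound its right derivative at an arbitrary such $t$.

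Partition $V$ into $S_+ := \{v : d_v(t) > 0\}$, $S_- := \{v : d_v(t) < 0\}$, and $S_0 := \{v : d_v(t) = 0\}$. Using $\partial_{+}|x| = \mathrm{sgn}(x)\,\partial_{+} x$ for $x \ne 0$ and $\partial_{+}|x| = |\partial_{+} x|$ at $x = 0$, I would derive
\[
    \partial_{+}\mathrm{RD}(t) = \sum_{v \in S_+}(a'_v - a_v) + \sum_{v \in S_-}(a_v - a'_v) + \sum_{v \in S_0}|a'_v - a_v|.
\]
For the $S_0$ term I would apply the triangle inequality $|a'_v - a_v| \le |\delta(v)\cap(A\setminus A')| + |\delta(v)\cap(A'\setminus A)|$. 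Writing $n_+(e), n_-(e), n_0(e)$ for the number of endpoints of $e$ in the respective sets and regrouping by edges (so that edges in $A \cap A'$ cancel), the bound becomes
\[
    \partial_{+}\mathrm{RD}(t) \le \sum_{e \in A \setminus A'}\bigl(n_-(e) - n_+(e) + n_0(e)\bigr) + \sum_{e \in A' \setminus A}\bigl(n_+(e) - n_-(e) + n_0(e)\bigr).
\]

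The main step, and the structural content of the lemma, is the edgewise non-positivity of each summand. For $e = (u,v) \in A \setminus A'$, membership in $A$ forces $r_u, r_v > 0$, while $e \notin A'$ forces at least one endpoint, say $u$, to satisfy $r'_u = 0$. Then $d_u = r_u > 0$, so $u \in S_+$ and $n_+(e) \ge 1$. Combined with $n_+(e) + n_-(e) + n_0(e) = 2$, this yields $n_-(e) + n_0(e) \le 1 \le n_+(e)$, hence $n_-(e) - n_+(e) + n_0(e) \le 0$. A symmetric argument with the two runs swapped handles edges in $A' \setminus A$, and the lemma follows.

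The hardest point is the accounting at vertices in $S_0$, where $|d_v|$ is not differentiable and could, a priori, contribute positively to the right derivative. The concern is that this contribution might overwhelm the cancellation from the tightness asymmetry between the two runs, but the triangle-inequality slack introduced for $S_0$ is exactly absorbed by the unit of slack $n_+(e) \ge 1$ (respectively $n_-(e) \ge 1$) guaranteed by the definitions of $A$ and $A'$; that is the key observation making the bound tight enough.
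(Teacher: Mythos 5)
Your proof is correct and takes essentially the same route as the paper's: both decompose $\partial_{+}\mathrm{RD}(t)$ edgewise over $A(t)\setminus A'(t)$ and $A'(t)\setminus A(t)$ and exploit the fact that such an edge has an endpoint that is tight in one run but not the other, so the residual difference there has a definite sign whose contribution dominates that of the other endpoint. Your explicit $S_0$/triangle-inequality bookkeeping with the counts $n_+(e),n_-(e),n_0(e)$ simply makes rigorous what the paper handles implicitly via the coefficients $c_e\in\{-2,-1,0,1,2\}$.
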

\begin{proof}
Treating $y_e(t)$'s and $y'_e(t)$'s as functions of $t$, using an integer function $c_e: \mathbb{R}_{\geq 0} \to  \{-2,-1,0,1,2\}$ defined for each $e\in E$, the right derivative of $\mathrm{RD}(t)$ can be represented as
\begin{align*}
\partial_{+}\mathrm{RD}(t) = \sum_{e\in E}c_e(t)\left(\partial_{+}y_e(t) - \partial_{+}y'_e(t)\right),
\end{align*}
where $c_e(t)$ takes only values from $\{-2,-1,0,1,2\}$ because for each edge $e\in E$, the term $y_e(t)-y'_e(t)$ appears twice in the expression of $\mathrm{RD}(t)$.

Let $A(t)$ and $A'(t)$ denote the set $A$ at time $t$ in $\Call{VertexCover}{G,w}$ and $\Call{VertexCover}{G,w'}$, respectively.
Then, $\partial_{+}y_e(t)$ takes the value $1$ if $e\in A(t)$ and $0$ otherwise.
Similarly, $\partial_{+}y'_e(t)$ takes the value $1$ if $e\in A'(t)$ and $0$ otherwise. Using these facts, we have
\begin{align*}
\partial_{+}\mathrm{RD}(t) 
&= \sum_{e\in E}c_e\left(\partial_{+}y_e(t) - \partial_{+}y'_e(t)\right)\\
&= \sum_{e\in A(t)\cap A'(t)}c_e\cdot 0
+\sum_{e\in A(t)\setminus A'(t)}c_e\cdot 1
+\sum_{e\in A'(t)\setminus A(t)}c_e\cdot (-1)
+\sum_{e\in E\setminus (A(t)\cap A'(t))}c_e\cdot 0\\
&=\sum_{e\in A(t)\setminus A'(t)}c_e - \sum_{e\in A'(t)\setminus A(t)}c_e.
\end{align*}

Let $e=(u,v)\in A(t)\setminus A'(t)$. From the definitions of $A(t)$ and $A'(t)$, neither $u$ nor $v$ is tight at time $t$ in \Call{VertexCover}{$G,w$}, while in \Call{VertexCover}{$G,w'$}, at least one of them, say $u$, is tight. Therefore, we have
\begin{align*}
& \partial_{+}\left|\left(w_u-\sum_{e\in \delta(u)}y_e(t)\right)-\left(w'_u-\sum_{e\in \delta(u)}y'_e(t)\right)\right|
= \partial_{+}\left|\left(w_u-\sum_{e\in \delta(u)}y_e(t)\right)-0\right|\\
&= \partial_{+}\left(w_u-\sum_{e\in \delta(u)}y_e(t)\right)
= -\sum_{e\in \delta(u)}\partial_{+}y_e(t).
\end{align*}
In particular, at least once in the sum of $\partial_+\mathrm{RD}(t)$, the term $\partial_{+}y_e(t)$ makes a negative contribution. 
Additionally, because $y_e(t)-y'_e(t)$ appears twice in the expression of $\mathrm{RD}(t)$, we have $c_e\leq 0$.

By a similar argument, we obtain $c_e\geq 0$ for $e\in A'(t)\setminus A(t)$. Therefore, the claim holds.
\end{proof}

The following is immediate from Lemma~\ref{lem:RDdecrease}.
\begin{lemma}
At the end of the loop starting at Line~\ref{line:vc-while}, we have
\begin{align*}
\sum_{v\in V}\left|\sum_{e\in \delta(v)}y_e - \sum_{e\in \delta(v)}y'_e\right|\leq 2\|w-w'\|_1.
\end{align*}
\end{lemma}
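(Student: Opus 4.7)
The plan is to derive the bound directly from Lemma~\ref{lem:RDdecrease} together with the triangle inequality. The quantity $\mathrm{RD}(t)$ that we control is built out of the \emph{slack} values $w_v - \sum_{e \in \delta(v)} y_e$, whereas the quantity we want to bound involves only $\sum_{e \in \delta(v)} y_e$. The key observation is that these two differ precisely by $w_v$, so the passage from one to the other costs at most an additional $\|w-w'\|_1$.

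First I would note that at $t=0$ we have $y_e(0) = y'_e(0) = 0$ for every edge, so
\[
    \mathrm{RD}(0) = \sum_{v \in V} |w_v - w'_v| = \|w - w'\|_1.
\]
Next I would invoke Lemma~\ref{lem:RDdecrease}. Since $\mathrm{RD}(t)$ is a continuous, piecewise-linear function of $t$ (each $y_e(t)$ is piecewise linear with slope in $\{0,1\}$) and its right derivative is non-positive throughout, $\mathrm{RD}(t)$ is non-increasing. In particular, at the terminal time $t^*$ of the while loop,
\[
    \mathrm{RD}(t^*) \leq \mathrm{RD}(0) = \|w-w'\|_1.
\]

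Finally, for every vertex $v \in V$ I would write the identity
\[
    \sum_{e\in \delta(v)} y_e - \sum_{e\in \delta(v)} y'_e
    = (w_v - w'_v) - \Bigl[\Bigl(w_v - \sum_{e\in \delta(v)} y_e\Bigr) - \Bigl(w'_v - \sum_{e\in \delta(v)} y'_e\Bigr)\Bigr],
\]
apply the triangle inequality, and sum over $v$ to obtain
\[
    \sum_{v \in V} \Bigl|\sum_{e\in \delta(v)} y_e - \sum_{e\in \delta(v)} y'_e\Bigr|
    \leq \|w-w'\|_1 + \mathrm{RD}(t^*) \leq 2\|w-w'\|_1,
\]
as desired. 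There is no real obstacle here; the only subtle point is justifying that a non-positive right derivative on a continuous piecewise-linear function implies monotone non-increase, which is immediate once one notes that on each linear piece the right derivative agrees with the slope. All substantive work was done in Lemma~\ref{lem:RDdecrease}.
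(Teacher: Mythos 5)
Your proposal is correct and follows essentially the same route as the paper: bound $\mathrm{RD}$ at the terminal time by $\mathrm{RD}(0)=\|w-w'\|_1$ via Lemma~\ref{lem:RDdecrease}, then use the triangle inequality on the identity relating $\sum_{e\in\delta(v)}y_e-\sum_{e\in\delta(v)}y'_e$ to the slacks and $w_v-w'_v$. The only difference is cosmetic: you spell out why a non-positive right derivative of the piecewise-linear $\mathrm{RD}$ gives monotonicity, and the paper is slightly more careful in taking the terminal time to be the later of the two runs' termination times (which your argument accommodates since both $y$ and $y'$ are constant after their respective loops end).
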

\begin{proof}
Let $T$ be the later of the times $\Call{VertexCover}{G,w}$ and $\Call{VertexCover}{G,w'}$ terminate. 
Then, from Lemma~\ref{lem:RDdecrease}, we have $\mathrm{RD}(T)\leq \|w-w'\|_1$. Therefore,
\begin{align*}
\sum_{v\in V}\left|\sum_{e\in \delta(v)}y_e(T) - \sum_{e\in \delta(v)}y'_e(T)\right|
&\leq \sum_{v\in V}\left(\left|\left(w_v-\sum_{e\in \delta(v)}y_e(t)\right)-\left(w'_v-\sum_{e\in \delta(v)}y'_e(t)\right)\right| + \left|w_v-w'_v\right|\right)\\
&= \mathrm{RD}(T)+\|w-w'\|_1\leq 2\|w-w'\|_1.\qedhere
\end{align*}
\end{proof}

\begin{lemma}\label{lem:vertex-cover-lipschitz-continuity}
\Call{VertexCover}{} has Lipschitz constant $4$.
\end{lemma}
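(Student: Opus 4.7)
\medskip

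My plan is to couple the two executions $\Call{VertexCover}{G,w}$ and $\Call{VertexCover}{G,w'}$ by using the same uniform random variable $\bm{z}(v)$ for each vertex $v$, and then bound the expected $d_{\mathrm{w}}$-distance coordinate-by-coordinate. Letting $S_v=\sum_{e\in\delta(v)}y_e$ and $S'_v=\sum_{e\in\delta(v)}y'_e$, define the acceptance thresholds $p_v=S_v/w_v$ (with $p_v:=1$ when $w_v=0$) and similarly $p'_v$, and set $X_v=w_v\mathbf{1}[\bm{z}(v)\leq p_v]$, $X'_v=w'_v\mathbf{1}[\bm{z}(v)\leq p'_v]$. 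By the definition of $d_{\mathrm{w}}$, the quantity to bound is $\E\sum_v|X_v-X'_v|$, and it suffices to establish that this is at most $4\|w-w'\|_1$.

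The key step is to apply the triangle inequality with the hybrid random variable $Y_v=w'_v\mathbf{1}[\bm{z}(v)\leq p_v]$, which uses the new weight but the old threshold. The first term $\E|X_v-Y_v|=p_v|w_v-w'_v|$ is bounded by $|w_v-w'_v|$ since $p_v\leq 1$, so summing gives at most $\|w-w'\|_1$. The second term $\E|Y_v-X'_v|=w'_v|p_v-p'_v|$ is the heart of the argument: I would rewrite it as $|w'_v p_v-w'_v p'_v|=|S_v w'_v/w_v-S'_v|$, add and subtract $S_v$, and use dual feasibility $S_v\leq w_v$ (ensured by the loop invariant in the algorithm) to obtain
\[
w'_v|p_v-p'_v|\leq |S_v-S'_v|+S_v\cdot\frac{|w_v-w'_v|}{w_v}\leq |S_v-S'_v|+|w_v-w'_v|.
\]

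Finally, summing over $v$ and invoking the preceding lemma that $\sum_v|S_v-S'_v|\leq 2\|w-w'\|_1$ yields $\sum_v\E|Y_v-X'_v|\leq 3\|w-w'\|_1$, so the total is $4\|w-w'\|_1$, as required. The main technical obstacle is handling the degenerate case $w_v=0$: here my formula for $p_v$ is only a convention and $X_v\equiv 0$, but a direct check shows $\E|X_v-X'_v|\leq w'_v=|w_v-w'_v|$, which is even better than the generic bound, so no separate argument is needed beyond noting that vertices with $w_v=w'_v=0$ contribute nothing to $d_{\mathrm{w}}$.
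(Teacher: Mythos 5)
Your proof is correct and follows essentially the same route as the paper's: couple the two runs by sharing the uniform variables $\bm{z}(v)$, split each vertex's contribution into a weight-change term and a threshold-change term, use dual feasibility $\sum_{e\in\delta(v)}y_e\leq w_v$ to bound the ratio terms, and invoke the preceding lemma's bound $\sum_v\bigl|\sum_{e\in\delta(v)}y_e-\sum_{e\in\delta(v)}y'_e\bigr|\leq 2\|w-w'\|_1$ to get the total of $4\|w-w'\|_1$. The only (cosmetic) difference is that you argue uniformly for arbitrary $w,w'$ via the hybrid $Y_v$, whereas the paper first reduces to single-coordinate perturbations via Lemma~\ref{lem:seeoneelement} and then does a case analysis; your handling of the $w_v=0$ case is also fine.
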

\begin{proof}
By Lemma~\ref{lem:seeoneelement}, we can assume there exists a vertex $u\in V$ such that $w_v=w'_v$ holds for $v\in V\setminus \{u\}$ and $w_u<w'_u$. 
To bound the earth mover’s distance between the output vertex covers for $(G, w)$ and $(G, w')$,
we consider the coupling $\mathcal{W}$ between $\bm{z}$ and $\bm{z}'$, defined as follows: For each $z \in [0,1]^V$, we transport the probability mass for $\bm{z} = z$ to that for $\bm{z}'=z$. 
When $w_u=0$, we have
\begin{align*}
&\EMW\left((\Call{VertexCover}{G,w},w), (\Call{VertexCover}{G,w'},w')\right)\\
&\leq \sum_{v\in V\setminus \{u\}\colon w_v > 0}\left|\frac{\sum_{e\in \delta(v)}y_e}{w_v}-\frac{\sum_{e\in \delta(v)}y'_e}{w_v}\right|\cdot w_v
+ \frac{\sum_{e\in \delta(u)}y'_e}{w'_u}\cdot w'_u \\
&\leq \sum_{v\in V\setminus \{u\}\colon w_v > 0}\left|\sum_{e\in \delta(v)}y_e - \sum_{e\in \delta(v)}y'_e\right|
+ 1\cdot w'_u\\
&\leq \sum_{v\in V}\left|\sum_{e\in \delta(v)}y_e - \sum_{e\in \delta(v)}y'_e\right|+\|w-w'\|_1
\leq 3\|w-w'\|_1,
\end{align*}
where the first inequality is because $w_v\neq w'_v$ holds for $u=v$ and $w_u=0$.
Otherwise, we have
\begin{align*}
&\EMW\left((\Call{VertexCover}{G,w},w), (\Call{VertexCover}{G,w'},w')\right)\\
&\leq \sum_{v\in V\setminus \{u\}\colon w_v > 0}\left|\frac{\sum_{e\in \delta(v)}y_e}{w_v}-\frac{\sum_{e\in \delta(v)}y'_e}{w_v}\right|\cdot w_v\\
&\quad + \min\left(\frac{\sum_{e\in \delta(u)}y_e}{w_u},\frac{\sum_{e\in \delta(u)}y'_e}{w'_u}\right)\cdot (w'_u-w_u) + \left|\frac{\sum_{e\in \delta(u)}y_e}{w_u}-\frac{\sum_{e\in \delta(u)}y'_e}{w'_u}\right|\cdot w'_u 
\\
&\leq \sum_{v\in V\setminus \{u\}\colon w_v > 0}\left|\sum_{e\in \delta(v)}y_e - \sum_{e\in \delta(v)}y'_e\right|
+ 1\cdot (w'_u-w_u) + \left|\frac{\sum_{e\in \delta(u)}y_e}{w_u}-\frac{\sum_{e\in \delta(u)}y'_e}{w'_u}\right|\cdot w'_u \\
&\leq \sum_{v\in V\setminus \{u\}\colon w_v > 0}\left|\sum_{e\in \delta(v)}y_e - \sum_{e\in \delta(v)}y'_e\right|
+(w'_u-w_u)\\
&\quad +\left(\left|\frac{\sum_{e\in \delta(u)}y_e}{w_u}-\frac{\sum_{e\in \delta(u)}y_e}{w'_u}\right|+\left|\frac{\sum_{e\in \delta(u)}y_e}{w'_u}-\frac{\sum_{e\in \delta(u)}y'_e}{w'_u}\right|\right)\cdot w'_u\\
&\leq \sum_{v\in V\setminus \{u\}\colon w_v > 0}\left|\sum_{e\in \delta(v)}y_e - \sum_{e\in \delta(v)}y'_e\right|
+(w'_u-w_u)+\left|\frac{w_u}{w_u}-\frac{w_u}{w'_u}\right|\cdot w'_u+\left|\sum_{e\in \delta(u)}y_e-\sum_{e\in \delta(u)}y'_e\right|\\
&\leq \sum_{v\in V\colon w_v > 0}\left|\sum_{e\in \delta(v)}y_e - \sum_{e\in \delta(v)}y'_e\right|
+2(w'_u-w_u)\\
&\leq 2\|w-w'\|_1 + 2\|w-w'\|_1 = 4\|w-w'\|_1,
\end{align*}
where the first inequality is because $w_v = w'_v$ holds for $v\neq u$.
The contribution of vertex $u$ to the earth mover's distance is
\begin{align*}
& \frac{\sum_{e\in \delta(u)}y_e}{w_u}\cdot |w'_u-w_u| + \left(\frac{\sum_{e\in \delta(u)}y'_e}{w'_u}-\frac{\sum_{e\in \delta(u)}y_e}{w_u}\right)\cdot w'_u\\
&\leq \min\left(\frac{\sum_{e\in \delta(u)}y_e}{w_u},\frac{\sum_{e\in \delta(u)}y'_e}{w'_u}\right)\cdot (w'_u-w_u) + \left|\frac{\sum_{e\in \delta(u)}y_e}{w_u}-\frac{\sum_{e\in \delta(u)}y'_e}{w'_u}\right|\cdot w'_u 
\end{align*}
for $\frac{\sum_{e\in \delta(u)}y_e}{w_u}\leq \frac{\sum_{e\in \delta(u)}y'_e}{w'_u}$ and
\begin{align*}
& \frac{\sum_{e\in \delta(u)}y'_e}{w'_u}\cdot |w'_u-w_u| + \left(\frac{\sum_{e\in \delta(u)}y_e}{w_u}-\frac{\sum_{e\in \delta(u)}y'_e}{w'_u}\right)\cdot w_u\\
&\leq \min\left(\frac{\sum_{e\in \delta(u)}y_e}{w_u},\frac{\sum_{e\in \delta(u)}y'_e}{w'_u}\right)\cdot (w'_u-w_u) + \left|\frac{\sum_{e\in \delta(u)}y_e}{w_u}-\frac{\sum_{e\in \delta(u)}y'_e}{w'_u}\right|\cdot w'_u 
\end{align*}
otherwise.

Therefore, the Lipschitz constant of Algorithm~\ref{alg:vertexcover} is
\begin{align*}
\sup_{\substack{w,w' \in \mathbb{R}_{\geq 0}^V,\\w\neq w'}}\frac{\EMW\left((\Call{VertexCover}{G,w},w), (\Call{VertexCover}{G,w'},w')\right)}{\|w-w'\|_1} 
\leq  \sup_{\substack{w,w' \in \mathbb{R}_{\geq 0}^V,\\w\neq w'}}\frac{4\|w-w'\|_1}{\|w-w'\|_1} 
= 4.
\end{align*}
Hence, the claim holds.
\end{proof}

Theorem~\ref{thm:vertex-cover} follows by combining Lemmas~\ref{lem:vertex-cover-approximation-ratio} and~\ref{lem:vertex-cover-lipschitz-continuity}.

\subsection{Lower Bound}

Here, we give a lower bound on the Lipschitz constant for the minimum vertex cover problem.
\begin{lemma}
For $\epsilon>0$, any (not necessarily polynomial-time) exact algorithm for the minimum vertex cover problem has Lipschitz constant $\Omega(|V|)$.
\end{lemma}
\begin{proof}
Let $n$ be a positive integer and $G = (V, E)$ be a complete bipartite graph, where each part $V_1, V_2$ has size $n$. 
Clearly, the minimum vertex cover for the weight vector $w \in \mathbb{R}_{\geq 0}^V$ is the part with the smaller total weight. Let $v_1 \in V_1$ and $v_2 \in V_2$.
Now, consider the weight vectors $w, w' \in \mathbb{R}_{\geq 0}^V$ given by
\begin{align*}
    w_v=\left\{\begin{array}{cc}
        1 & (v\neq v_1)\\
        0 & (v=v_1)
    \end{array}\right.,\quad 
    w'_v=\left\{\begin{array}{cc}
        1 & (v\neq v_2)\\
        0 & (v=v_2)
    \end{array}\right..
\end{align*}
Since any exact algorithm outputs $V_1$ for the weight $w$ and $V_2$ for the weight $w'$, the Lipschitz constant is at least
\[
\frac{\|\bm{1}_{V_1\setminus \{v_1\}}-\bm{1}_{V_2\setminus \{v_2\}}\|_1}{\|w-w'\|_1}=\frac{2n-2}{2}\geq \Omega(|V|).
\qedhere
\]
\end{proof}
%!TEX root=./main.tex

\section{Naive Algorithm for Set Cover}\label{sec:set-cover-naive}

We first formally define the Lipschitz constant of an algorithm for the set cover problem:
\begin{Definition}[Lipschitz constant of a randomized algorithm for the set cover problem]\label{def:randomized-set-cover}
    Let $\mathcal{A}$ be a randomized algorithm that, given an instance $(U,\mathcal{F})$ of the set cover problem and a weight vector $w \in \mathbb{R}_{\geq 0}^{\mathcal{F}}$, outputs a (random) set cover $\mathcal{A}(U,\mathcal{F},w) \subseteq \mathcal{F}$.
    Then, the \emph{Lipschitz constant} of the algorithm $\mathcal{A}$ on an instance $(U,\mathcal{F})$ is
    \begin{align*}
        \sup_{\substack{w,w' \in \mathbb{R}_{\geq 0}^{\mathcal{F}},\\w\neq w'}}\frac{\EMW\left((\mathcal{A}(U,\mathcal{F},w),w),(\mathcal{A}(U,\mathcal{F},w'),w')\right)}{\|w-w'\|_1}.
    \end{align*}
    We say that $\mathcal{A}$ is \emph{Lipschitz continuous} if its Lipschitz constant is bounded for any instance $(U,\mathcal{F})$.
\end{Definition}

The goal of this section is to show the following:
\begin{theorem}\label{thm:set-cover-naive}
    For any $\epsilon > 0$, there exists a polynomial-time $(s+\epsilon)$-approximation algorithm for the minimum set cover problem with Lipschitz constant $O(\epsilon^{-1}s^2)$, where $s$ is the maximum size of a set in the instance.
\end{theorem}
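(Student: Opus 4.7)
The plan is to define an algorithm that, for each element $e \in U$, independently samples a set $\bm{S}_e$ uniformly at random from the \emph{candidate family}
\[
\mathcal{F}_e(w) := \{ S \in \mathcal{F} : e \in S,\ w_S \leq (1 + \epsilon/s)\, m_e(w) \},
\]
where $m_e(w) := \min_{S \in \mathcal{F},\, e \in S} w_S$, and to return $\bm{C} := \{\bm{S}_e : e \in U\}$. This is clearly a set cover. For the approximation guarantee, $\E[w_{\bm{S}_e}] \leq (1 + \epsilon/s)\, m_e(w)$ for every $e$. Since $w(\bm{C}) \leq \sum_{e \in U} w_{\bm{S}_e}$ (each chosen set is counted at least once on the right) and any fixed optimal cover $\mathcal{S}^\star$ satisfies $\sum_{e \in U} m_e(w) \leq \sum_{e \in U} w_{S^\star(e)} \leq \sum_{S \in \mathcal{S}^\star} |S|\, w_S \leq s \cdot \OPT$, we obtain $\E[w(\bm{C})] \leq (s + \epsilon)\, \OPT$.

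For the Lipschitz bound, I invoke Lemma~\ref{lem:seeoneelement} and restrict attention to perturbations of the form $w' := w + \delta \bm{1}_v$. For every element $e \notin v$, the weight $w_v$ plays no role in $m_e$ or in $\mathcal{F}_e$, so the marginal distribution of $\bm{S}_e$ is unchanged, and an independent-per-element coupling can make $\bm{S}_e = \bm{S}'_e$ deterministically. Hence only the at most $s$ elements with $e \in v$ can produce a mismatch. Using the inequality
\[
d_{\mathrm{w}}\bigl((\bm{C}, w), (\bm{C}', w')\bigr) \leq \delta\, \mathbb{I}[v \in \bm{C} \cap \bm{C}'] + \sum_{e \in U,\ \bm{S}_e \neq \bm{S}'_e} \bigl(w_{\bm{S}_e} + w'_{\bm{S}'_e}\bigr),
\]
and choosing for each affected $e$ an optimal-transport coupling between $\bm{S}_e(w)$ and $\bm{S}_e(w')$, I would prove that the expected contribution of each affected $e$ is $O(\delta s/\epsilon)$; summing over the at most $s$ affected elements then yields Lipschitz constant $O(\epsilon^{-1} s^2)$.

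The main obstacle is establishing the per-element $O(\delta s/\epsilon)$ bound, because changing $w_v$ can alter $\mathcal{F}_e$ discontinuously in two ways: the threshold $(1 + \epsilon/s)\, m_e$ can shift (if $v$ attains the minimum in $\mathcal{F}_e$), and $v$ itself can enter or leave $\mathcal{F}_e$ (when $w_v$ crosses the threshold). My plan is to split the perturbation interval at the at most two critical values of $w_v$ where such discontinuities occur and to treat each sub-interval separately. Inside each sub-interval $\mathcal{F}_e$ is constant, so the coupling keeps the two samples equal and only the weight $w_v$ drifts, contributing $O(\delta/|\mathcal{F}_e|)$ to $d_{\mathrm{w}}$. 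At each discontinuity, any set swapped in or out has weight within a $(1+\epsilon/s)$ factor of $m_e$, while the gap $(\epsilon/s)\, m_e$ between $m_e$ and the threshold forces a movement of $\Omega(\epsilon m_e/s)$ in $w_v$ before the next swap; after amortization, this spacing absorbs into the $s/\epsilon$ factor in the claimed bound.
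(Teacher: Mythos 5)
Your algorithm and approximation analysis match the paper's in spirit, but the Lipschitz part has a genuine gap, and it is exactly the point where the paper's algorithm differs from yours. Your candidate family $\mathcal{F}_e(w)=\{S\ni e: w_S\le (1+\epsilon/s)m_e(w)\}$ has \emph{deterministic} thresholds, so the output distribution of $\bm{S}_e$ jumps discontinuously as $w_v$ crosses a critical value (a set entering or leaving the window, or the minimizer changing). Take $w$ and $w'=w+\delta\bm{1}_v$ straddling one such critical value with $\delta\to 0$: the uniform distribution over $\mathcal{F}_e$ changes by total variation at least $1/(|\mathcal{F}_e|+1)$ (e.g.\ from uniform on $\{v\}$ to uniform on $\{v,S'\}$, a TV change of $1/2$), and any mismatch costs order $m_e$ in $d_{\mathrm{w}}$, an amount independent of $\delta$. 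Hence $\EMW/\delta$ is unbounded, and no coupling can save this; your algorithm as stated is not Lipschitz continuous. Your proposed repair does not close this gap: the Lipschitz constant (and Lemma~\ref{lem:seeoneelement}) requires the bound $\EMW\le \delta L$ for \emph{every} admissible $\delta$, including arbitrarily small $\delta$ placed right at a single discontinuity, so an amortization over the spacing of critical values only controls a path-averaged sensitivity, not the worst-case ratio. Moreover, the premise of the amortization is false: while $w_v$ is the minimizer, every other set $S\ni e$ with $w_S\le(1+\epsilon/s)m_e$ contributes a critical value $w_S/(1+\epsilon/s)$, so there can be many critical values and they can be arbitrarily closely clustered, not ``at most two'' with gaps $\Omega(\epsilon m_e/s)$.

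The missing idea is to randomize the thresholds. The paper samples a single uniform offset $\bm{b}\in[0,1]$ and, for each element $e$, takes the lowest nonempty geometric bucket $\{S\ni e: (1+\epsilon/s)^{\bm{b}+t}\le w_S<(1+\epsilon/s)^{\bm{b}+t+1}\}$ and samples uniformly from it. Under the coupling that shares $\bm{b}$ (and then matches the uniform samples as much as possible), perturbing $w_T$ by $\delta$ changes the buckets only if $w_T$ crosses a random bucket boundary, which happens with probability at most $\log_{1+\epsilon/s}(1+\delta/w_T)\le 2s\delta/(\epsilon w_T)$ over $\bm{b}$; combined with the fact that only the at most $s$ elements of $T$ are affected and each mismatch costs $O(w_T)$ (using $\delta\le w_T$ or $w_T=0$ from Lemma~\ref{lem:seeoneelement}), this gives the $O(\epsilon^{-1}s^2)$ bound. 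Without some such randomization of the cutoffs, the per-element $O(\delta s/\epsilon)$ bound you need simply does not hold.
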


Algorithm, \Call{NaiveSetCover}{}, is presented in Algorithm~\ref{alg:simplesetcover_const}. 
The idea is simple: For each element $e\in U$, we sample one set containing $e$ with nearly minimum weight, and collect these sets to form the output.

\begin{algorithm}[t!]
\caption{Naive Lipschitz continuous algorithm for the minimum set cover problem}\label{alg:simplesetcover_const}
\Procedure{\emph{\Call{NaiveSetCover}{$U,\mathcal{F}, w, \epsilon$}}}{
    \KwIn{A set $U$, $\mathcal{F} \subseteq 2^U$, a weight vector $w \in \mathbb{R}_{\geq 0}^{\mathcal{F}}$, and $\epsilon > 0$. }
    $s \gets \max_{S \in \mathcal{F}}|S|$\;
    Sample $\bm{b}$ uniformly from $[0,1]$\;\label{line:simplesetcover_sampleb}
    Let $\bm{\mathcal{C}}\leftarrow \emptyset$\;
    \For{$e\in U$}{
        \If{$\min_{S\in \mathcal{F}\colon e\in S}w_S=0$}{
            Let $\bm{t}_e=-\infty$ and $\bm{X}_{e,\bm{t}_e}=\left\{S \in \mathcal{F} : e \in S \wedge w_S=0\right\}$
        }
        \Else{
            Let $\bm{t}_e$ be the minimum integer such that the set $\bm{X}_{e,\bm{t}_e}=\left\{S \in \mathcal{F} : e \in S \wedge \left(1+\frac{\epsilon}{s}\right)^{\bm{b}+\bm{t}_e}\leq w_S < \left(1+\frac{\epsilon}{s}\right)^{\bm{b}+\bm{t}_e+1}\right\}$ is nonempty\;
        }
        Sample $\bm{S}_e$ uniformly from $\bm{X}_{e,\bm{t}_e}$ and add it to $\bm{\mathcal{C}}$\;       
    }
    \Return $\bm{\mathcal{C}}$\;
}
\end{algorithm}

We begin by analyzing the approximation ratio.
\begin{lemma}\label{lem:set-cover-naive-approximation}
\Call{NaiveSetCover}{} has approximation ratio $s+\epsilon$.
\end{lemma}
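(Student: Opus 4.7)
The plan is to show that the bound holds pointwise for every realization of the algorithm's randomness; taking expectations is then immediate. Fix any realization of $\bm{b}$ and of the samples $\bm{S}_e$, and fix an optimal set cover $\mathcal{C}^* \subseteq \mathcal{F}$ with $\sum_{S \in \mathcal{C}^*} w_S = \OPT$. For each element $e \in U$, choose some $S^*(e) \in \mathcal{C}^*$ with $e \in S^*(e)$ (breaking ties arbitrarily).

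The first step is a per-element weight comparison. By the minimality of $\bm{t}_e$, every set $S \in \mathcal{F}$ containing $e$ satisfies $w_S \geq (1+\epsilon/s)^{\bm{b}+\bm{t}_e}$; otherwise $S$ would lie in an earlier nonempty bucket, contradicting the definition of $\bm{t}_e$. In particular $w_{S^*(e)} \geq (1+\epsilon/s)^{\bm{b}+\bm{t}_e}$. On the other hand, the sampled set $\bm{S}_e$ lies in $\bm{X}_{e,\bm{t}_e}$, so $w_{\bm{S}_e} < (1+\epsilon/s)^{\bm{b}+\bm{t}_e+1}$. Combining,
\[
w_{\bm{S}_e} \leq \left(1+\frac{\epsilon}{s}\right) w_{S^*(e)}.
\]

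The second step sums over $e \in U$ and converts to $\OPT$. Since each set $S \in \mathcal{C}^*$ is chosen as $S^*(e)$ for at most $|S| \leq s$ elements,
\[
\sum_{e \in U} w_{S^*(e)} \leq \sum_{S \in \mathcal{C}^*} |S| \cdot w_S \leq s \cdot \OPT.
\]
Therefore $\sum_{e \in U} w_{\bm{S}_e} \leq (1+\epsilon/s) \cdot s \cdot \OPT = (s+\epsilon)\OPT$. Finally, since $\bm{\mathcal{C}} = \{\bm{S}_e : e \in U\}$ is a set (duplicates collapse), we get $\sum_{S \in \bm{\mathcal{C}}} w_S \leq \sum_{e \in U} w_{\bm{S}_e} \leq (s+\epsilon)\OPT$. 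The bound is deterministic, so taking expectation over $\bm{b}$ and the $\bm{S}_e$'s preserves it.

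There is no real obstacle here; the only subtle point is verifying that $\bm{t}_e$ being the minimum nonempty index correctly implies the lower bound $w_{S^*(e)} \geq (1+\epsilon/s)^{\bm{b}+\bm{t}_e}$ for every set containing $e$, which is the sole place where the definition of the bucketing is used. The result is that $\bm{\mathcal{C}}$ is automatically a valid set cover (each $e$ is covered by $\bm{S}_e$) with expected weight at most $(s+\epsilon)\OPT$, as required.
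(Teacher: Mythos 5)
Your proof is correct and follows essentially the same argument as the paper: bound $w_{\bm{S}_e}\leq(1+\epsilon/s)\min_{S\ni e}w_S$ via the bucket definition, then charge each element to the optimal cover and use $|S|\leq s$ in a double-counting step. The only cosmetic difference is that you charge each element to a chosen representative set $S^*(e)\in\mathcal{C}^*$, whereas the paper bounds $\min_{S\ni e}w_S$ by the sum over all optimal sets containing $e$; both yield the same pointwise bound $w(\bm{\mathcal{C}})\leq(s+\epsilon)\OPT$.
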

\begin{proof}
Let $\mathcal{F}^*$ be a minimum set cover of $(U,\mathcal{F},w)$. Then, we have
\begin{align*}
& w(\bm{\mathcal{C}})
\leq \sum_{e\in U}\left(1+\frac{\epsilon}{s}\right)\min_{S\ni e}w_S
=(s+\epsilon)\sum_{e\in U}\min_{S\ni e}\frac{w_S}{s}\\
&\leq (s+\epsilon)\sum_{e\in U}\sum_{S\ni e\colon S\in \mathcal{F}^*}\frac{w_S}{s}
= (s+\epsilon)\sum_{S\in \mathcal{F}^*}\sum_{e\in S}\frac{w_S}{s}
\leq (s+\epsilon)\sum_{S\in \mathcal{F}^*}w_S = (s+\epsilon)\OPT,
\end{align*}
where the last inequality is because $|S|\leq s$ holds for all $S\in \mathcal{F}$.
\end{proof}

Next we analyze the Lipschitz continuity.
For a weight vector $w \in \mathbb{R}_{\geq 0}^{\mathcal{F}}$, let $w' = w + \delta \bm{1}_T$ for $T \in \mathcal{F}$.
Then by Lemma~\ref{lem:seeoneelement}, it suffices to analyze
\begin{align*}
\EMW\left((\Call{NaiveSetCover}{U,\mathcal{F},w,\epsilon},w), (\Call{NaiveSetCover}{U,\mathcal{F},w', \epsilon},w')\right).
\end{align*}
Moreover, we can assume $\delta \leq w_T$ or $w_T=0$.

Let $\bm{b}$ and $\bm{b}'$ be the parameter $\bm{b}$ sampled in $\Call{NaiveSetCover}{U,\mathcal{F},w,\epsilon}$ and $\Call{NaiveSetCover}{U,\mathcal{F},w', \epsilon}$, respectively. 
We use similar notations for $\bm{t}_e$, $\bm{X}_{e,\bm{t}_e}$ and $\bm{S}_e$.
To bound the earth mover’s distance, we consider the coupling $\mathcal{W}$ between $\left(\bm{b}, (\bm{S}_e)_{e\in U}\right)$ and $\left(\bm{b}', (\bm{S}'_e)_{e\in U}\right)$ defined as follows: for each $b\in [0,1]$, the probability mass for $\bm{b} = b$ is transported to that for $\bm{b}' = b$. For the probability mass with $\bm{b}=\bm{b}'=b$, for each $e \in E$ and $S\in \mathcal{F}$, the probability mass for $\bm{S}_e=S$ is transported to that for $\bm{S}'_e=S$ as much as possible. 
Suppose $\left(\left(\bm{b}, (\bm{S}_e)_{e\in U}\right),\left(\bm{b}', (\bm{S}'_e)_{e\in U}\right)\right)\sim \mathcal{W}$. Then, $\bm{S}_e=\bm{S}'_e$ holds for $e\in U\setminus T$. The following lemma bounds the probability that $\bm{S}_e=\bm{S}'_e$ holds for $e \in T$.

\begin{lemma}\label{lem:simplesetcover_prob}
Let $e\in T$. 
If $w_T>0$, then we have 
\begin{align*}
\Pr_{\mathcal{W}}\left[\bm{S}_e\neq \bm{S}'_e\right]\leq \frac{2s\delta }{\epsilon w_T}.
\end{align*}
\end{lemma}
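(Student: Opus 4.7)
The plan is to condition on $\bm{b} = b$ (the coupling $\mathcal{W}$ makes $\bm{b}' = \bm{b}$) and show that the event $\bm{S}_e \neq \bm{S}'_e$ can occur only if $T$'s weight crosses a bucket boundary under the perturbation, after which the claimed bound follows from the uniformity of $\bm{b}$. Let $\alpha := 1 + \epsilon/s$, let $t^* := \lfloor \log_\alpha w_T - b \rfloor$ be $T$'s bucket index under $w$, and let $t'^* := \lfloor \log_\alpha(w_T + \delta) - b \rfloor \geq t^*$ be its bucket index under $w'$. For each integer $t$, let $\bm{X}_{e,t}$ and $\bm{X}'_{e,t}$ denote the sets of $S \ni e$ whose weight under $w$, respectively $w'$, lies in $[\alpha^{b+t}, \alpha^{b+t+1})$; since $e \in T$, we always have $t^* \geq \bm{t}_e$.

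The core structural claim is that $\bm{S}_e \neq \bm{S}'_e$ implies both (i) $T \in \bm{X}_{e, \bm{t}_e}$, equivalently $\bm{t}_e = t^*$, and (ii) $t'^* > t^*$. To see this, note that since only $w_T$ is perturbed, the supports $\bm{X}_{e,t}$ and $\bm{X}'_{e,t}$ differ only through the membership of $T$, which lies in exactly the buckets $t^*$ and $t'^*$ respectively. If (ii) fails, then $t'^* = t^*$, so $\bm{X}_{e,t} = \bm{X}'_{e,t}$ for every $t$; hence $\bm{t}_e = \bm{t}'_e$ and the two uniform samplers coincide, forcing $\bm{S}_e = \bm{S}'_e$ under the coupling. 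If instead (i) fails, then $t^* > \bm{t}_e$ and $t'^* \geq t^* > \bm{t}_e$, so $T$ lies in a bucket strictly above $\bm{t}_e$ in both runs; consequently $\bm{X}_{e, t} = \bm{X}'_{e, t}$ for every $t \leq \bm{t}_e$, which yields $\bm{t}'_e = \bm{t}_e$ and $\bm{X}_{e, \bm{t}_e} = \bm{X}'_{e, \bm{t}'_e}$, again forcing $\bm{S}_e = \bm{S}'_e$.

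It therefore suffices to bound $\Pr_b[t'^* > t^*]$, i.e., the probability that the upper boundary $\alpha^{b + t^* + 1}$ of $T$'s bucket lies in $(w_T, w_T + \delta]$. Setting $u := \{\log_\alpha w_T - b\}$ (the fractional part), which is uniform on $[0, 1)$ by the uniformity of $\bm{b}$, the crossing condition rearranges to $\alpha^{1-u} \leq 1 + \delta/w_T$, equivalently $u \geq 1 - \log_\alpha(1 + \delta/w_T)$, and hence occurs with probability $\min\{\log_\alpha(1+\delta/w_T),\, 1\}$. Bounding $\ln(1+\delta/w_T) \leq \delta/w_T$ in the numerator and $\ln(1+\epsilon/s) \geq \epsilon/s - \epsilon^2/(2s^2) \geq \epsilon/(2s)$ in the denominator (the latter valid for $\epsilon \leq s$; in the regime $\epsilon > s$ one may instead run the algorithm with $\epsilon$ replaced by $\min(\epsilon, s)$, which only improves the approximation), we conclude $\log_\alpha(1 + \delta/w_T) \leq 2s\delta/(\epsilon w_T)$.

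The only real technical step is the structural claim of the second paragraph; once one sees that a mismatch requires $T$ to simultaneously occupy the minimum bucket and slip across its upper boundary, the remainder is a one-variable calculation exploiting the uniform shift $\bm{b}$.
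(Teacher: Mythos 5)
Your proof is correct and takes essentially the same route as the paper's: condition on the shared $\bm{b}$, reduce the mismatch event to the event that $T$'s bucket index changes (the paper writes this as $\lfloor \log_{1+\epsilon/s} w_T - b\rfloor \neq \lfloor \log_{1+\epsilon/s}(w_T+\delta)-b\rfloor$), and bound the crossing probability by $\log_{1+\epsilon/s}(1+\delta/w_T) \leq 2s\delta/(\epsilon w_T)$ via $\ln(1+x)\leq x$ and $\ln(1+\epsilon/s)\geq \epsilon/(2s)$ under the same implicit assumption $\epsilon \leq s$ the paper uses. Your additional structural observation that a mismatch also forces $T$ into the minimum bucket $\bm{t}_e$ is a correct refinement, but it is not needed for the final bound and the paper omits it.
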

\begin{proof}
We have
\begin{align*}
\Pr_{\mathcal{W}}\left[\bm{S}_e\neq \bm{S}'_e\right]
&\leq \Pr_{b\sim \mathcal{U}([0,1])}\left[\exists t, \bm{X}_{e,t_e}\neq \bm{X}'_{e,t_e}\mid \bm{b}=\bm{b}'=b\right]\\
&= \Pr_{b\sim \mathcal{U}([0,1])}\left[\floor{\log_{1+\epsilon/s}w_T-b}\neq \floor{\log_{1+\epsilon/s}(w_T+\delta)-b}\right]\\
&= \left(\log_{1+\epsilon/s}(w_T+\delta) - \log_{1+\epsilon/s} w_T\right)
= \log_{1+\epsilon/s}\left(1+\frac{\delta}{w_T}\right)\\
&\leq \frac{\delta}{w_T\log (1+\epsilon/s)}
\leq \frac{\delta}{w_T}\cdot \frac{2s}{\epsilon},
\end{align*}
where the third inequality is from $\log(1+x)\leq x$ and the last inequality is from $\log\left(1+\epsilon/s\right)\geq \epsilon/(2s)$ for $\epsilon/s\leq  1$.
\end{proof}

Using this, we can bound the Lipschitz constant of \Call{NaiveSetCover}{}.
\begin{lemma}\label{lem:set-cover-naive-lipschitz}
\Call{NaiveSetCover}{} has Lipschitz constant $O(\epsilon^{-1}s^2)$.
\end{lemma}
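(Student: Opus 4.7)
The plan is to reduce via Lemma~\ref{lem:seeoneelement} to analyzing a single-coordinate perturbation, and then to bound the resulting earth mover's distance using the coupling $\mathcal{W}$ already constructed just above Lemma~\ref{lem:simplesetcover_prob}. Concretely, it suffices to take $w' = w + \delta \bm{1}_T$ for some $T \in \mathcal{F}$ with either $0 < \delta \leq w_T$ or $w_T = 0$, and to show
\[
\EMW\big((\bm{\mathcal{C}},w),(\bm{\mathcal{C}}',w')\big) = O(\epsilon^{-1}s^2)\,\delta.
\]

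The key structural observation is that, under $\mathcal{W}$, $\bm{S}_e = \bm{S}'_e$ for every $e \in U \setminus T$, and the only set whose weight differs between $w$ and $w'$ is $T$ itself. Unpacking the definition of $d_{\mathrm{w}}$, I expect a bound of the shape
\[
\E_{\mathcal{W}}[d_{\mathrm{w}}(\bm{\mathcal{C}},\bm{\mathcal{C}}')] \leq \delta + \sum_{e \in T}\E_{\mathcal{W}}\Big[\mathbf{1}[\bm{S}_e \neq \bm{S}'_e]\,\big(w_{\bm{S}_e} + w'_{\bm{S}'_e}\big)\Big],
\]
where the leading $\delta$ accounts for the possibility $T \in \bm{\mathcal{C}} \cap \bm{\mathcal{C}}'$, and each summand upper bounds the weight that the corresponding sets contribute through $\bm{\mathcal{C}} \triangle \bm{\mathcal{C}}'$; any multiset-level double-counting only weakens the inequality.

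To control each summand in the main case $w_T > 0$, I combine Lemma~\ref{lem:simplesetcover_prob}, which gives $\Pr_{\mathcal{W}}[\bm{S}_e \neq \bm{S}'_e] \leq 2s\delta/(\epsilon w_T)$, with the bin definition in Algorithm~\ref{alg:simplesetcover_const}: since $T$ itself is a candidate for $\min_{S \ni e} w_S$ when $e \in T$, we have $w_{\bm{S}_e} \leq (1+\epsilon/s)\,w_T$ and $w'_{\bm{S}'_e} \leq (1+\epsilon/s)(w_T + \delta) \leq 2(1+\epsilon/s)\,w_T$. Multiplying yields $O(s\delta/\epsilon)$ per element, and summing over the at most $|T| \leq s$ elements of $T$ gives the required $O(\epsilon^{-1}s^2)\,\delta$. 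The boundary case $w_T = 0$ will need a separate (easier) argument since Lemma~\ref{lem:simplesetcover_prob} does not apply: there $w_{\bm{S}_e} = 0$ and $w'_{\bm{S}'_e} \leq (1+\epsilon/s)\,\delta$, so each element contributes only $O(\delta)$ and the total is $O(s)\cdot \|w-w'\|_1$, comfortably within the target.

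I expect the most delicate step to be the set-level bookkeeping that reduces $\bm{\mathcal{C}} \triangle \bm{\mathcal{C}}'$ to per-element disagreements $\bm{S}_e \neq \bm{S}'_e$: distinct $e \in T$ may sample the same set, and a set removed from $\bm{\mathcal{C}}'$ by one element might be re-supplied by another. Once this bookkeeping is justified, the remainder is a routine combination of the $(1+\epsilon/s)$-wide bin structure and Lemma~\ref{lem:simplesetcover_prob}, followed by a sum over the at most $s$ elements of $T$.
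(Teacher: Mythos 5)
Your proposal is correct and follows essentially the same route as the paper: reduce via Lemma~\ref{lem:seeoneelement} to a single-coordinate perturbation, use the coupling $\mathcal{W}$ so that $\bm{S}_e = \bm{S}'_e$ off $T$, bound each disagreeing element's contribution by combining Lemma~\ref{lem:simplesetcover_prob} with the $(1+\epsilon/s)$-bin bounds $w_{\bm{S}_e},w'_{\bm{S}'_e} = O(w_T)$ (resp.\ $O(\delta)$ when $w_T=0$), and sum over the at most $s$ elements of $T$. Your explicit leading $\delta$ for the event $T \in \bm{\mathcal{C}} \cap \bm{\mathcal{C}}'$ and the remark that set-level double-counting only weakens the per-element bound are fine refinements of the same argument and do not change the asymptotics.
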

\begin{proof}
For each $e\in T$, we have 
\begin{align*}
w_{\bm{S}_e} &\leq \left(1+\frac{\epsilon}{s}\right)\min_{S\ni e}w_S\leq \left(1+\frac{\epsilon}{s}\right)w_T, \text{ and}\\
w'_{\bm{S}'_e} &\leq \left(1+\frac{\epsilon}{s}\right)\min_{S\ni e}w'_S\leq \left(1+\frac{\epsilon}{s}\right)w'_T=\left(1+\frac{\epsilon}{s}\right)(w_T+\delta).
\end{align*}
Assume $w_T=0$. Then, we have
\begin{align*}
&\EMW\left((\Call{NaiveSetCover}{U,\mathcal{F},w,\epsilon},w), (\Call{NaiveSetCover}{U,\mathcal{F},w+\delta \bm{1}_T, \epsilon},w+\delta \bm{1}_T)\right)\\
&\leq \E_{\mathcal{W}}\left[\sum_{e\in U, \bm{S}_e\neq \bm{S}'_e}\left(w_{\bm{S}_e}+w'_{\bm{S}'_e}\right)\right]
\leq \E_{\mathcal{W}}\left[\sum_{e\in T }\left(w_{\bm{S}_e}+w'_{\bm{S}'_e}\right)\right]
\leq s\cdot \left(0+\left(1+\frac{\epsilon}{s}\right)\delta\right) \leq 2s\delta \leq 12s^2\epsilon^{-1}\delta,
\end{align*}
where the third inequality is from $w_T=0$ and the last inequality is from $\frac{\epsilon}{s}\leq 1$.
Otherwise, we have
\begin{align*}
&\EMW\left((\Call{NaiveSetCover}{U,\mathcal{F},w,\epsilon},w), (\Call{NaiveSetCover}{U,\mathcal{F},w+\delta \bm{1}_T, \epsilon},w+\delta \bm{1}_T)\right)\\
&\leq \E_{\mathcal{W}}\left[\sum_{e\in U, \bm{S}_e\neq \bm{S}'_e}\left(w_{\bm{S}_e}+w'_{\bm{S}'_e}\right)\right]\\
&\leq \sum_{e\in T} \left(\Pr_{\mathcal{W}}\left[\bm{S}_{e}\neq \bm{S}'_e\right]\cdot \E_{\mathcal{W}}\left[w_{\bm{S}_e}+w'_{\bm{S}'_e}\mid \bm{S}_e\neq \bm{S}'_e\right]\right) \\ 
&\leq \sum_{e\in T} \left(\Pr_{\mathcal{W}}\left[\bm{S}_{e}\neq \bm{S}'_e\right]\cdot \left(1+\frac{\epsilon}{s}\right)(w_T+(w_T+\delta))\right)\\
&\leq s\cdot \frac{2s\delta}{\epsilon w_T}\cdot \left(1+\frac{\epsilon}{s}\right)\cdot 3w_T
\leq \frac{12s^2}{\epsilon}\cdot \delta ,
\end{align*}
where the fourth inequality is from Lemma~\ref{lem:simplesetcover_prob} and $\delta\leq w_T$, and the last inequality is from $\epsilon/s\leq 1$.
Therefore, Algorithm~\ref{alg:simplesetcover_const} has Lipschitz constant $\frac{12\epsilon^{-1}s^2 \cdot \delta}{\delta}=O(\epsilon^{-1}s^2)$.
\end{proof}

Theorem~\ref{thm:set-cover-naive} follows by combining Lemmas~\ref{lem:set-cover-naive-approximation} and~\ref{lem:set-cover-naive-lipschitz}.
\section{Greedy-based Algorithm for Set Cover}\label{sec:set-cover-greedy}

In this section, we provide an algorithm for the set cover problem where the size of each set is at most $s$ and each element appears in at most $f$ sets. 
Specifically, we show the following:
\begin{theorem}\label{thm:set-cover-greedy}
For any $\epsilon\in (0,1]$, there exists an $\left(H_s+\epsilon\right)$-approximation algorithm for the minimum set cover problem with Lipschitz constant $\exp{O\left(\epsilon^{-2}\left(s+\log f\right)s^2\log^3 s\right)}$, where $H_s := \sum_{i=1}^{s}(1/i)$. 
The time complexity is polynomial when $s = O(\log n)$, where $n$ is the number of elements.
\end{theorem}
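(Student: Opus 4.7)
The plan is to realize the modified greedy procedure sketched in the proof overview, depending on two positive-integer parameters $K$ and $M$ that will ultimately be set to $\Theta(\epsilon^{-1}\log s)$. First construct the multifamily $\mathcal{F}^\downarrow=\bigcup_{S\in\mathcal{F}}(2^S\setminus\{\emptyset\})$ with weights $w^\downarrow_A$ inherited from the parent set, so that a step of the classical greedy corresponds to picking an $A\in\mathcal{F}^\downarrow$ of minimum ratio $w^\downarrow_A/|A|$ disjoint from the already-covered elements and adding its parent to the solution. Then (i) round each $w^\downarrow_A$ upward to the nearest integer power of $s^{1/M}$, reducing the number of weight classes; (ii) sample a uniform random offset $\mathbf{b}\in\{0,1,\ldots,K-1\}$ and, once and for all, \emph{discard} every $A$ whose rounded $\log_s$-exponent falls in a fixed window of length $1$ modulo $K$ (the window being determined by $\mathbf{b}$); (iii) execute the greedy rule on surviving sets, breaking ties among sets of equal rounded ratio by the stable independent uniform sampling already used in Section~\ref{sec:set-cover-naive}.

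For the approximation ratio I would recall the dual-fitting analysis underlying the $H_s$ bound: each element $e$ is charged the ratio $w^\downarrow_A/|A|$ at the moment it is first covered, and $\sum_{e\in S}\text{price}(e)\le H_{|S|}w_S\le H_s w_S$ for every $S\in\mathcal{F}$. Our modifications perturb these prices in two ways. Rounding multiplies each price by at most $s^{1/M}=1+O(M^{-1}\log s)$. Discarding inflates the price by a factor at most $s^{1}$ (one full weight class is skipped) only when the optimal layer for the current element happens to fall in the discard window, which by the uniform choice of $\mathbf{b}$ has probability $1/K$. Summing the inflated prices and choosing $M,K=\lceil C\epsilon^{-1}\log s\rceil$ for a suitable constant $C$ yields expected total weight at most $(H_s+\epsilon)\OPT$.

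For the Lipschitz constant, apply Lemma~\ref{lem:seeoneelement} to reduce to a single-coordinate perturbation $w\mapsto w+\delta\bm{1}_T$ with $\delta\le w_T$ (or $w_T=0$). Couple the two executions by sharing $\mathbf{b}$ and all uniform tie-breaking bits; additionally, as in Lemma~\ref{lem:simplesetcover_prob}, a small shift in $w_T$ changes its rounded class only with probability $O(\delta/(M w_T)\cdot\log s)$, so up to this rare event the rounded weights agree. Conditioned on agreement of all rounded classes, the two runs produce identical decisions on every ratio level outside the $K$-periodic window containing $\log_s w^\downarrow_T$: the discarded layer immediately above the window acts as a \emph{barrier}, because once every surviving set of ratio below the barrier has been processed, the set of covered elements is identical in both runs, and symmetrically every surviving set of ratio above the barrier is processed on identical input. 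Hence the perturbation can only propagate through the $O(sKM)$ ratio classes inside that window, and at each class the stable sampling over the at most $O(sf)$ competing sets introduces an amplification factor bounded by a fixed polynomial in $sf$. Composing the amplifications across the window gives a bound of the form $(Csf)^{O(sKM)}=\exp O(sKM\,(s+\log f))$, which with $K=M=\Theta(\epsilon^{-1}\log s)$ evaluates to $\exp O\!\bigl(\epsilon^{-2}(s+\log f)s^2\log^3 s\bigr)$.

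The main obstacle will be the barrier argument in the Lipschitz analysis: one must verify rigorously that, under the shared coupling and conditioned on agreement of rounded classes, the sequence of greedy decisions is really identical beyond the discarded layer, despite the fact that elements $e$ appear in many different $A\in\mathcal{F}^\downarrow$ and the greedy rule interleaves their processing. A secondary subtlety is that inside the window we must combine the stable sampling bound with the event that rounded classes agree, so that the $O(\delta/w_T)$ probability of a class change is amplified by at most $(Csf)^{O(sKM)}$; balancing these two sources of sensitivity is what produces the final exponential-in-$sKM$ bound, and a clean induction over ratio classes in the window is the cleanest way to present it. Combining the approximation and Lipschitz lemmas then gives Theorem~\ref{thm:set-cover-greedy}, with the polynomial-time claim following because $|\mathcal{F}^\downarrow|\le m\cdot 2^s$ is polynomial whenever $s=O(\log n)$.
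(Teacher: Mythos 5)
Your plan follows the paper's construction (round weights to powers of $s^{1/M}$ with a random offset, discard a window modulo $K$, run randomized greedy on $\mathcal{F}^\downarrow$, localize the effect of a perturbation to one level via a barrier, and induct over the $O(sKM)$ ratio classes), but three points do not go through as written. First, the parameter choice $K=M=\Theta(\epsilon^{-1}\log s)$ is too small: by your own accounting the discarding step costs a multiplicative $\bigl(1+\tfrac{s-1}{K}\bigr)$ and the rounding a multiplicative $s^{1/M}$, both sitting \emph{on top of} $H_s$, so you would only get $H_s\bigl(1+\Theta(s\epsilon/\log s)\bigr)=H_s+\Theta(\epsilon s)$ rather than $H_s+\epsilon$. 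One needs $K=\Theta(\epsilon^{-1}s H_s)$ and $M=\Theta(\epsilon^{-1}H_s\log s)$ (the paper takes $K=\lceil 4\epsilon^{-1}H_s\rceil(s-1)$, $M=\lceil 4\epsilon^{-1}H_s\log s\rceil$), and it is precisely these larger values that give $sKM=\Theta(\epsilon^{-2}s^2\log^3 s)$; with your values $sKM=\Theta(\epsilon^{-2}s\log^2 s)$, so your final arithmetic does not produce the exponent you state. (A smaller slip: the probability that the rounded class of $T$ changes is $\Theta(\delta M/(w_T\log s))$, growing with $M$, not $O(\delta\log s/(Mw_T))$.) Second, your discard rule removes \emph{every} $A\subseteq S$, including singletons, whenever the exponent of $w_S$ lands in the window; an element all of whose containing sets are hit by the window then cannot be covered at all, so the output need not be a set cover, and the ``inflate the price by at most a factor $s$'' step of your charging argument also breaks, since it needs the singletons of the discarded optimal set to survive. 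The paper's condition $\bm{i}_S \bmod KM\geq M\log_s|A|$ depends on $|A|$ exactly so that singletons are never discarded, which is what both feasibility and Lemma~\ref{lem:setcoverconst_approx} rely on.

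Third, the heart of the Lipschitz analysis---your claim that within each ratio class the perturbation is amplified by only a ``fixed polynomial in $sf$''---is asserted without justification, and this is where the real work lies. Within one class the algorithm is a random-order greedy maximal set packing, and the paper's proof hinges on the stability lemma of Censor-Hillel, Haramaty, and Karnin (Lemma~\ref{lem:randomizedgreedy}): removing or adding one candidate changes the random-order greedy packing by at most $2$ sets in expectation. Combining this with the counting that each changed element lies in at most $2^s f$ tuples of $\bm{Q}$ and each changed set alters at most $s$ elements yields the per-class factor $2^s s f+1$ of Lemma~\ref{lem:setcoverconst_pack} (note it is $2^s f$, not $\mathrm{poly}(sf)$, per class; this is absorbed by the $(s+\log f)$ in the exponent, but your polynomial claim has no proof). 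Without some such packing-stability lemma the induction over classes has no base amplification bound, so the argument is incomplete. All of this is fixable by adopting the paper's parameters, its $|A|$-dependent discard condition, and the random-permutation packing-stability lemma, but as written the proposal establishes neither the $H_s+\epsilon$ guarantee nor a complete Lipschitz bound.
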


\subsection{Algorithm Description}

\begin{algorithm}[t!]
\caption{Greedy algorithm for the minimum set cover problem~\cite{chvatal1979greedy,johnson1973approximation,lovasz1975ratio}}\label{alg:setcover_known}
\Procedure{\emph{\Call{ClassicalGreedy}{$U,\mathcal{F},w$}}}{
    \KwIn{A set $U$, $\mathcal{F} \subseteq 2^U$, and a weight vector $w \in \mathbb{R}_{\geq 0}^{\mathcal{F}}$.}
    $\mathcal{C}\leftarrow \emptyset, R\leftarrow V$\;
    \While{$R\neq \emptyset$}{\label{line:setcoverknown_loop}
        Let $S=\mathrm{argmin}_{S' \in \mathcal{F}\colon R\cap S'\neq \emptyset}w_{S'}/|R\cap S'|$\;\label{line:setcoverknown_chooseS}
        Add $S$ to $\mathcal{C}$\;
        $R\leftarrow R\setminus S$\;
    }
    \Return $\mathcal{C}$\;
}
\end{algorithm}

\begin{algorithm}[t!]
\caption{Lipschitz continuous algorithm for the minimum set cover problem}\label{alg:setcover_const}
\Procedure{\emph{\Call{LipschitzGreedy}{$U,\mathcal{F}, w,K, M$}}}{
    \KwIn{A set $U$, $\mathcal{F} \subseteq 2^U$, a weight vector $w \in \mathbb{R}_{\geq 0}^{\mathcal{F}}$, and $K,M \in \mathbb{Z}_{\geq 1}$.}
    $s \gets \max_{S \in \mathcal{F}}|S|$\;
    Sample $\bm{b}$ uniformly from $[0,1]$\;\label{line:setcover_sampleb}
    Let $\bm{\pi}$ be a random permutation over $\{A \subseteq U\colon |A|\leq s\}$\;\label{line:setcoverconst_samplepi}
    $\bm{Q}\leftarrow \emptyset$\;
    \For{$S\in \mathcal{F}$}{
        Let $\bm{i}_S$ be the unique integer with $s^{K\bm{b}+\bm{i}_S/M}\leq w_S < s^{K\bm{b}+(\bm{i}_S+1)/M}$ if $w_S>0$, and $\bm{i}_S=-\infty$ otherwise\;\label{line:setcoverconst_is}
        \For{$\emptyset\neq A\subseteq S$}{
            \If{$\bm{i}_S>-\infty$ and $\bm{i}_S \bmod KM \geq M \log_s |A|$\label{line:setcoverconst_add_condition} }{
                Add $\left(\bm{i}_S- M \log_s |A|, A, S\right)$ to $\bm{Q}$\;\label{line:setcoverconst_add}
            }\ElseIf{$\bm{i}_S=-\infty$}{
                Add $\left(-\infty, A, S\right)$ to $\bm{Q}$\;            
            }
        }
    }
    $\bm{\mathcal{C}}\leftarrow \emptyset, \bm{R}\leftarrow U$\;
    \For{$(x,A,S)\in \bm{Q}$ in increasing order of $x$, where ties are broken according to $\pi(A)$}{\label{line:setcoverconst_loop}
        \If{$A\subseteq \bm{R}$}{
            Add $S$ to $\bm{\mathcal{C}}$\;
            Remove all elements in $A$ from $\bm{R}$\;
        }
    }
    \Return $\bm{\mathcal{C}}$\;
}
\end{algorithm}

Before explaining the algorithm, let us first introduce the $H_s$-approximation algorithm for the set cover problem given by Johnson~\cite{johnson1973approximation}, Lov\'{a}sz~\cite{lovasz1975ratio} and Chv\'{a}tal~\cite{chvatal1979greedy}.
The $H_s$-approximation algorithm in Algorithm~\ref{alg:setcover_known}  follows a greedy approach where sets are sequentially added to the solution starting from an empty set. At each step of the greedy algorithm, denoting the set of currently uncovered elements as $R \subseteq U$, the algorithm selects a set $S \in \mathcal{F}$ with $R\cap S\neq \emptyset$ that minimizes $w_S/|R\cap S|$ and adds it to the solution.

To evaluate the Lipschitz continuity, let us compare the processes of \Call{ClassicalGreedy}{$U,\mathcal{F},w$} and \Call{ClassicalGreedy}{$U,\mathcal{F},w+\delta \bm{1}_T$} for $\delta>0$ and $T\in \mathcal{F}$. 
Let $S_i$ and $S'_i$ be the set selected during the $i$-th iteration of the loop starting from Line~\ref{line:setcoverknown_loop} for \Call{ClassicalGreedy}{$U,\mathcal{F},w$} and \Call{ClassicalGreedy}{$U,\mathcal{F},w+\delta \bm{1}_T$}, respectively. 
Let $R_i$ and $R'_i$ be the $R$ sets at the end of that iteration for \Call{ClassicalGreedy}{$U,\mathcal{F},w$} and \Call{ClassicalGreedy}{$U,\mathcal{F},w+\delta \bm{1}_T$}, respectively. 
Assume there exists $i$ such that $T=S_i$. 
Up until the $i$-th iteration, \Call{ClassicalGreedy}{} behaves the same for both instances, and particularly, $R_j=R'_j$ for $j<i$. 
However, since $S_i$ may not be equal to $S'_i$, $R_i$ may not be equal to $R'_i$. 
Thus, the value of $|R\cap S|$ in the $(i+1)$-st iteration of \Call{ClassicalGreedy}{$U,\mathcal{F},w$} may differ from its value in \Call{ClassicalGreedy}{$U,\mathcal{F},w+\delta \bm{1}_{T}$} (even for $S\neq T$), and we cannot guarantee $S_{i+1} = S'_{i+1}$, causing the algorithm's behavior to diverge for the two instances.

To address this problem, we modify the criterion for selecting a set $S$ to be added to set $\mathcal{C}$ in the loop starting from Line~\ref{line:setcoverknown_chooseS} in \Call{ClassicalGreedy}{}. 
We define a multifamily of sets $\mathcal{F}^\downarrow$ as $\bigcup_{S\in \mathcal{F}}(2^S\setminus \emptyset)$ as the union of (nonempty) subsets of sets in $\mathcal{F}$, and we define the weight $w^\downarrow_A$ of set $A \in \mathcal{F}^\downarrow$ as the weight of its corresponding set in $\mathcal{F}$. 
In this context, we can interpret Line~\ref{line:setcoverknown_chooseS} in \Call{ClassicalGreedy}{} as the process of selecting an set $A$ from $\mathcal{F}^\downarrow$ that satisfies $A\subseteq R$ and minimizes $w^\downarrow_A/|A|$.

Let us describe our algorithm, \Call{LipschitzGreedy}{}, given in  Algorithm~\ref{alg:setcover_const}. 
It takes two positive integers, $K$ and $M$. 
Then, we sample two parameters $\bm{b}$ and $\bm{\pi}$ from uniform distributions over $[0,1]$ and permutations over $\{A \subseteq U\colon |A|\leq s\}$, respectively.
For the purpose we describe later, at Line~\ref{line:setcoverconst_is}, we round the weight of each set $S$ by setting $\bm{i}_S=\floor{M\log_s w_S+\bm{c}_1}$, where $\bm{c}_1$ is a constant determined by $K$, $M$, and $\bm{b}$, and the new weight $\widehat{\bm{w}}_S$ is given by $\bm{c}_2\cdot s^{\bm{i}_S/M} \approx w_S$, where $\bm{c}_2$ is a constant determined by $K$, $M$, and $\bm{b}$. 

At Lines~\ref{line:setcoverconst_add_condition} and~\ref{line:setcoverconst_add}, we extend the weight $\widehat{\bm{w}}_S$ onto $\bm{\mathcal{F}}^\downarrow$ described above.
Here, we prepare a triplet $(x,A,S)$ for each $A\in \bm{\mathcal{F}}^\downarrow$, where $x$ is the logarithm of the extended weight $\widehat{\bm{w}}^\downarrow_A$ and $S$ is the set in $\mathcal{F}$ corresponding to $A$. 
However, we do so only for $A\in \bm{\mathcal{F}}^\downarrow$ with $\bm{i}_S=-\infty$ or $\floor{\frac{\bm{i}_S-M\log_s |A|}{KM}} = \floor{\frac{\bm{i}_S}{KM}} =: \bm{l}(S)$, where the equality is guaranteed by Line~\ref{line:setcoverconst_add_condition}.
We call the value $\bm{l}(S)$ the \emph{level} of $S$. If $\bm{i}_S=-\infty$, we set $\bm{l}(S)=-\infty$.
We note that, for every $e \in S$, the tuple $(x,\{e\},S)$ is always added to $\bm{Q}$ because $M\log_s |\{e\}|=0$. 
Finally, in the loop starting at Line~\ref{line:setcoverconst_loop}, we execute the classical greedy algorithm on these sets and weights.

The approximation ratio is analyzed as follows. We know that the loop starting at Line~\ref{line:setcoverconst_loop} demonstrates $H_s$-approximation algorithm for the instance $(U,\bm{\mathcal{F}}^\downarrow,\widehat{\bm{w}}^\downarrow)$. 
To complete the analysis, we bound $\OPT_{U,\bm{\mathcal{F}}^\downarrow,\widehat{\bm{w}}^\downarrow}$ by $\OPT_{U,\mathcal{F}, w}$. First, we prove that the rounding process of the weight $w$ into $\widehat{\bm{w}}$ multiplies the approximation ratio by at most $s^{1/M}$.
Let $\mathcal{C}^*$ be the solution that attains $\OPT_{U,\mathcal{F}, w}$. 
We construct a solution $\widehat{\bm{\mathcal{C}}}^*$ for the instance $(U,\bm{\mathcal{F}}^\downarrow,\widehat{\bm{w}}^\downarrow)$ as follows: For each $S\in \mathcal{C}^*$, include it in $\widehat{\bm{\mathcal{C}}}^*$ if $(x,S,S)$ is added to $\bm{Q}$ in Line~\ref{line:setcoverconst_add}; otherwise, include all one-element subsets of $S$ in $\widehat{\bm{\mathcal{C}}}^*$.
The probability that each $S\in \mathcal{C}^*$ is not included in $\widehat{\bm{\mathcal{C}}}^*$ is at most $K^{-1}$ and in this case, at most $s$ one-element subsets of $S$ is included in $\widehat{\bm{\mathcal{C}}}^*$ instead. 
Therefore, $\OPT_{U,\mathcal{F}^\downarrow,\widehat{\bm{w}}^\downarrow}$ is bounded by $\left(1+\frac{s-1}{K}\right)$ times $\OPT_{U,\mathcal{F},\widehat{\bm{w}}}$, that is, $\left(1+\frac{s-1}{K}\right)\cdot s^{1/M}$ times $\OPT_{U,\mathcal{F},w}$, in expectation. By taking sufficiently large $K$ and $M$, we obtain an approximation guarantee of $H_s+\epsilon$.

Now we overview the analysis of the Lipschitz continuity.
Let us introduce an intuition to bound 
\begin{align*}
\EMW\left((\Call{LipschitzGreedy}{U,\mathcal{F},w,K,M},w), (\Call{LipschitzGreedy}{U,\mathcal{F},w+\delta \bm{1}_T, K, M}, w+\delta \bm{1}_T)\right).
\end{align*} 
For simplicity, consider the case where the level of $T$ is $l$ in both \Call{LipschitzGreedy}{$U,\mathcal{F},w,K,M$} and \Call{LipschitzGreedy}{$U,\mathcal{F},w+\delta \bm{1}_T,K,M$}. Before adding sets of level $l$, the algorithm's behavior is identical for both instances.
Similarly, after adding the sets of level $l$, $\bm{R}=\bigcup_{S\in \mathcal{F}, \bm{l}(S)\leq l}S$ is the same for both instances. Therefore, the behavior remains the same when adding sets of level $l+1$ or higher. 
Thus, only the sets with level $l$ affect the Lipschitz constant of the whole algorithm.

Here, we review the results of the dynamic distributed independent set by Censor-Hillel, Haramaty, and Karnin~\cite{censor2016optimal}. 
Although their original result is for the independent set problem on graphs, it can be easily interpreted for the set cover problem using a line graph.
For a set system $(U,\mathcal{F})$, a subfamily $\mathcal{P}\subseteq \mathcal{F}$ is called a \emph{set packing} if no two sets in $\mathcal{P}$ intersect. For a permutation $\pi$ over $\mathcal{F}$, we consider the following greedy algorithm \Call{Packing}{$U,\mathcal{F},\pi$} for the set packing problem: It iterates through the sets $S\in \mathcal{F}$ in order of $\pi(S)$ and includes a set $S$ in the solution only if it has no  intersection with any set included in the solution. 
This process allows us to construct a maximal set packing. The result in Censor-Hillel, Haramaty, and Karnin states the following. 
\begin{lemma}[\rm{\cite{censor2016optimal}}]\label{lem:randomizedgreedy}
If $\pi$ is chosen uniformly at random, then for any $X \in \mathcal{F}$, we have
\begin{align*}
\E_{\pi}\left[\Call{Packing}{U,\mathcal{F},\pi}\; \triangle\; \Call{Packing}{U,\mathcal{F}\setminus \{X\},\pi}\right]\leq 2,
\end{align*}
where we abuse notation $\pi$ to represent the permutation obtained by removing $\pi(X)$ from $\pi$, indicating the permutation on $\mathcal{F}\setminus X$.
\end{lemma}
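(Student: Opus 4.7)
The plan is to couple the two greedy executions $\Call{Packing}{U,\mathcal{F},\pi}$ and $\Call{Packing}{U,\mathcal{F}\setminus\{X\},\pi}$ under the shared permutation $\pi$ (with $\pi(X)$ simply skipped in the second) and analyze how the removal of $X$ propagates as an alternating ``influence chain'' whose expected length is at most $2$. Let $P$ and $P'$ denote the two outputs. As a warm-up, I note that the two processes make identical decisions on every set $S$ with $\pi(S) < \pi(X)$, so no such $S$ can lie in $P \triangle P'$. At step $\pi(X)$, either $X$ intersects a set already accepted into the common packing---in which case both processes continue in identical states and $P = P'$, contributing $0$ to the symmetric difference---or $X$ is added only to $P$, giving $X \in P \setminus P'$ and seeding a chain of further divergences.

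The structural part of the argument is to define an alternating chain $Y_0 := X, Y_1, Y_2, \ldots, Y_k$ inductively: for odd $i$, $Y_i$ is the $\pi$-earliest set after $Y_{i-1}$ that intersects $Y_{i-1}$ and is accepted into $P'$ but not $P$; for even $i \geq 2$, the roles of $P$ and $P'$ are swapped. A structural lemma, proved by induction on $\pi$-order, shows that $P \triangle P' = \{Y_0, \ldots, Y_k\}$: every disagreement between $P$ and $P'$ at some set $Z$ is ``caused'' by a unique predecessor in the chain (the latest accepted chain element blocking $Z$ on one side but not the other), and tracing these causes back always leads to $X$. Thus it suffices to bound $\E_\pi[k+1]$.

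The main obstacle is then the probabilistic estimate $\E_\pi[k+1] \leq 2$. My approach is to condition on the chain prefix $(Y_0,\ldots,Y_i)$ and exploit the uniform randomness of $\pi$ on the sets not yet revealed: one argues that the continuation event ``$Y_{i+1}$ exists'' can be dominated, via a principle-of-deferred-decisions argument on the restriction of $\pi$ to the subfamily of sets competing for the $Y_{i+1}$ role, by a ``terminating'' event of comparable or larger probability. This produces a telescoping inequality whose total sums to at most $2$. The genuinely delicate step is identifying, for each chain index, the correct collection of competitor sets so that the domination holds uniformly across all prefixes; once this is pinned down, the bound $\E[k+1] \leq 2$ follows by a short calculation, and the lemma is proved.
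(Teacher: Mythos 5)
First, note that the paper does not prove Lemma~\ref{lem:randomizedgreedy} at all: it is imported verbatim from Censor-Hillel, Haramaty, and Karnin~\cite{censor2016optimal}, so your attempt has to be judged on its own. It contains a genuine gap, and it is in the structural step, not only in the probabilistic one. The claim that $P \,\triangle\, P'$ is a single alternating chain $\{Y_0,\dots,Y_k\}$ is false: the divergence caused by deleting $X$ propagates as a \emph{branching} tree, because a single set that changes status can simultaneously flip the decisions of many later sets. Concretely, take $X=\{e_1,\dots,e_d\}$ and $S_i=\{e_i,f_i\}$ for $i=1,\dots,d$, where the $f_i$ are distinct elements belonging to no other set; the $S_i$ intersect $X$ but are pairwise disjoint. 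If $\pi$ places $X$ first, then the run on $\mathcal{F}$ accepts only $X$, while the run on $\mathcal{F}\setminus\{X\}$ accepts all of $S_1,\dots,S_d$, so $P\,\triangle\,P'=\{X,S_1,\dots,S_d\}$ has size $d+1$. Your chain, however, stops at $Y_1$ (the $\pi$-earliest $S_i$), since no set intersects $Y_1$ other than $X$; hence the asserted identity $P\,\triangle\,P'=\{Y_0,\dots,Y_k\}$ fails, and bounding $\E[k+1]$ does not bound the symmetric difference. (The lemma survives this example only because the bad event requires $X$ to be earliest among $d+1$ sets, which has probability $1/(d+1)$; this already shows the accounting must average over a branching structure rather than a path.)

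The second issue is that the probabilistic core — the ``domination by a terminating event'' yielding a telescoping sum at most $2$ — is only announced, not proved: you yourself flag that identifying the right competitor families is the delicate step, and without it there is no bound even for the (incorrect) chain model. The argument in~\cite{censor2016optimal} is genuinely different: it charges every set whose status changes back to $X$ while explicitly handling the tree-like propagation (each changed set has an earlier-rank changed neighbour that caused it, and the probability of a deep or wide propagation is controlled by rank comparisons over whole neighbourhoods), and this is where the constant $2$ comes from. To repair your proposal you would have to replace the chain by this causal tree and redo the expectation bound over all branches, at which point you are essentially reproving the cited result rather than giving a shortcut.
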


For a fixed level $l$, there are at most $sKM$ different weights $\widehat{\bm{w}}^\downarrow(A)$ for sets $A \in \bm{\mathcal{F}}^\downarrow$ with level $l$, because there are at most $\log_{s^{1/M}}s^{K}=KM$ candidate values for $\bm{i}_S$ and $s$ values for $|A|$ (this is why we rounded the weight $w$ into $\widehat{\bm{w}}$).
Let $y_k$ be the $k$-th smallest value among such values and $\bm{\mathcal{C}}_k$ be the family of sets $A\in \bm{\mathcal{F}}^\downarrow$ with $\widehat{\bm{w}}^\downarrow_A=y_k$ that is added to $\bm{\mathcal{C}}$ in the loop starting at Line~\ref{line:setcoverconst_loop}.
Using Lemma~\ref{lem:randomizedgreedy}, we can bound the quantity $\E\left[\left|\bm{\mathcal{C}}_{k}\;\triangle\; \bm{\mathcal{C}}'_{k}\right|\right]$ using $\E\left[\left|\bm{\mathcal{C}}_{k'}\;\triangle\; \bm{\mathcal{C}}'_{k'}\right|\right]$ for $k' < k$ and the parameters $s$ and $f$. 
Consequently, $\E\left[\left|\bm{\mathcal{C}}_{k}\;\triangle\; \bm{\mathcal{C}}'_{k}\right|\right]$ can be bounded as a function of $s$, $f$ and $k$. Since at most $sKM$ different values are possible for $\widehat{\bm{w}}^\downarrow_A)$ with $l(A)=l$, we can bound the Lipschitz constant by a function of $s$, $f$, $K$ and $M$.

\subsection{Approximation Ratio}
Let us analyze the approximation ratio. 
We say that a set $S\in \mathcal{F}$ is \emph{hashed} (with respect to $\bm{i}_S$) if $\bm{i}_S \bmod KM < M\log_s |S|$. 
When a set $S$ is hashed, only singleton subsets of $S$ are added to the set $\bm{Q}$.
\begin{lemma}\label{lem:setcoverconst_approx}
\emph{\Call{LipschitzGreedy}{$U, \mathcal{F},w, K, M$}} has an approximation ratio $H_s\cdot \left(1+\frac{s-1}{K}\right)\cdot s^{1/M}$.
\end{lemma}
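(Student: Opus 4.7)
The plan is to view the loop starting at Line~\ref{line:setcoverconst_loop} as the classical greedy algorithm of Algorithm~\ref{alg:setcover_known} on a carefully chosen instance, then compare its optimum to $\OPT_{U,\mathcal{F},w}$. Let $\widehat{\bm{w}}_S := s^{K\bm{b}+\bm{i}_S/M}$, so $\widehat{\bm{w}}_S \leq w_S < s^{1/M}\widehat{\bm{w}}_S$, and let $\widehat{\bm{\mathcal{F}}}^\downarrow$ be the multi-family of subsets $A$ for which some triplet $(x,A,S)$ appears in $\bm{Q}$, with weight $\widehat{\bm{w}}^\downarrow_A := \widehat{\bm{w}}_S$. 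A short calculation shows that the sort key $x = \bm{i}_S - M\log_s|A|$ is monotone in $\widehat{\bm{w}}^\downarrow_A/|A| = s^{K\bm{b}+x/M}$, so processing triplets in increasing $x$ (with ties broken by $\bm{\pi}$) is exactly the classical greedy rule on $(U,\widehat{\bm{\mathcal{F}}}^\downarrow,\widehat{\bm{w}}^\downarrow)$. Every singleton $\{e\}$ for $e \in \bigcup\mathcal{F}$ lies in $\widehat{\bm{\mathcal{F}}}^\downarrow$ since $M\log_s 1 = 0 \leq \bm{i}_S\bmod KM$, so the instance admits a set cover and the classical $H_s$-approximation guarantee yields $\widehat{\bm{w}}(\bm{\mathcal{C}}) \leq H_s \cdot \OPT_{U,\widehat{\bm{\mathcal{F}}}^\downarrow,\widehat{\bm{w}}^\downarrow}$.

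Next I would bound $\OPT_{U,\widehat{\bm{\mathcal{F}}}^\downarrow,\widehat{\bm{w}}^\downarrow}$ from above by the construction sketched in the overview: given an optimal cover $\mathcal{C}^*$ of $(U,\mathcal{F},w)$, define $\widehat{\bm{\mathcal{C}}}^*$ by replacing each $S \in \mathcal{C}^*$ with $S$ itself if $S$ is not hashed, and with the $|S|$ singletons $\{e\}$ for $e \in S$ otherwise (each is always present in $\widehat{\bm{\mathcal{F}}}^\downarrow$). Since $\bm{b}$ is uniform on $[0,1]$, the real $M\log_s w_S - MK\bm{b}$ is uniform on an interval of length $MK$, so $\bm{i}_S \bmod KM$ is uniform on $\{0,\ldots,KM-1\}$ and $\Pr[S \text{ is hashed}] = \log_s|S|/K \leq 1/K$. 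Combined with the pointwise bound $\widehat{\bm{w}}_S \leq w_S$, linearity of expectation gives
\begin{align*}
\E\bigl[\widehat{\bm{w}}^\downarrow(\widehat{\bm{\mathcal{C}}}^*)\bigr] \leq \sum_{S \in \mathcal{C}^*} w_S\left(1 + \frac{(|S|-1)\log_s|S|}{K}\right) \leq \left(1 + \frac{s-1}{K}\right)\OPT_{U,\mathcal{F},w}.
\end{align*}

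Finally, the rounding bound $w_S < s^{1/M}\widehat{\bm{w}}_S$ yields $w(\bm{\mathcal{C}}) \leq s^{1/M}\widehat{\bm{w}}(\bm{\mathcal{C}})$, and chaining the three inequalities produces
\begin{align*}
\E[w(\bm{\mathcal{C}})] \leq s^{1/M}\cdot H_s \cdot \left(1 + \frac{s-1}{K}\right)\cdot \OPT_{U,\mathcal{F},w},
\end{align*}
which is the claimed ratio. The main obstacle is the middle step: one must handle cleanly the coupling between the rounded weight $\widehat{\bm{w}}_S$ and the hashing indicator, both determined by $\bm{b}$. The uniformity of $\bm{i}_S \bmod KM$ together with the pointwise dominance $\widehat{\bm{w}}_S \leq w_S$ decouples them and absorbs the rounding loss into the separate $s^{1/M}$ factor. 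Verifying that the loop exactly implements classical greedy on $\widehat{\bm{\mathcal{F}}}^\downarrow$ and that $\bm{\pi}$ supplies a consistent tie-break is routine bookkeeping.
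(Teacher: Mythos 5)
Your proposal is correct and follows essentially the same route as the paper: view the loop at Line~\ref{line:setcoverconst_loop} as the classical greedy on the rounded, down-closed instance, replace each hashed set of an optimal cover by its singletons (hashing probability at most $1/K$) to obtain the $\left(1+\frac{s-1}{K}\right)$ factor, and account for the rounding loss by a separate $s^{1/M}$ factor; the paper only differs cosmetically in rounding the weights up rather than down and thus placing the $s^{1/M}$ factor on the OPT-comparison side instead of the output side. One minor slip: $\Pr[S \text{ is hashed}]$ is $\lceil M\log_s|S|\rceil/(KM)$ rather than exactly $\log_s|S|/K$, so your displayed intermediate bound is not quite right as an identity, but since this probability is still at most $1/K$ the final approximation ratio is unaffected.
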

\begin{proof}
Let $\bm{\mathcal{F}}_{\mathrm{hash}} \subseteq \mathcal{F}$ be the set of hashed sets.
Let $\widehat{\bm{\mathcal{F}}}$ be the family of sets over $U$ defined by
\begin{align*}
\widehat{\bm{\mathcal{F}}}=(\mathcal{F}\setminus \bm{\mathcal{F}}_{\mathrm{hash}})\cup \{\{e\} : S \in \bm{\mathcal{F}}_{\mathrm{hash}}, e \in S \}.
\end{align*}
The weight $\widehat{\bm{w}}_S$ of a set $S \in \mathcal{F}\setminus \bm{\mathcal{F}}_{\mathrm{hash}}$ is defined as $s^{K \bm{b}+(\bm{i}_S+1)/M}$, and the weight $\widehat{\bm{w}}_{\{e\}}$ for $e \in S \in \bm{\mathcal{F}}_{\mathrm{hash}}$ is defined as $s^{K \bm{b}+(\bm{i}_S+1)/M}$.

Let $\mathcal{C}^*\subseteq \mathcal{F}$ be a minimum set cover of the instance $(U,\mathcal{F},w)$. Let $\widehat{\bm{\mathcal{C}}}^*$ be the family of sets in $\mathcal{F}$ defined by
\begin{align*}
(\mathcal{C}^*\setminus \bm{\mathcal{F}}_{\mathrm{hash}})\cup 
\{\{e\} : S\in \mathcal{C}^*\cap \bm{\mathcal{F}}_{\mathrm{hash}}, e \in S\}.
\end{align*}
It is clear that $\widehat{\bm{\mathcal{C}}}^*$ is a set cover of $(U, \widehat{\bm{\mathcal{F}}},\widehat{\bm{w}})$. Furthermore, we have  
\begin{align*}
& \E\left[\OPT_{U,\widehat{\bm{\mathcal{F}}},\widehat{\bm{w}}}\right]
\leq \E\left[\widehat{\bm{w}}\left(\widehat{\bm{\mathcal{C}}}^*\right)\right]
\leq s^{1/M}\E\left[w\left(\widehat{\bm{\mathcal{C}}}^*\right)\right] \\
&= s^{1/M}\left(\E\left[w(\mathcal{C}^*\setminus \bm{\mathcal{F}}_{\mathrm{hash}}) + s\cdot w(\mathcal{C}^*\cap \bm{\mathcal{F}}_{\mathrm{hash}})\right]\right)\\
&= s^{1/M}\left(w(\mathcal{C}^*) + (s-1)\E\left[w(\mathcal{C}^*\cap \bm{\mathcal{F}}_{\mathrm{hash}})\right]\right)\\
&= s^{1/M}\left(w(\mathcal{C}^*) + (s-1)\int_{0}^{1}\left(\sum_{S\in \mathcal{C}^*\colon \text{hashed w.r.t.\ }\bm{i}_S}w_S\right)\mathrm{d}b\right)\\
&\leq s^{1/M}\left(w(\mathcal{C}^*) + (s-1)\int_{0}^{1}\left(\sum_{S\in \mathcal{C}^*\colon \text{hashed w.r.t.\ } \bm{i}_S}w_S\right)\mathrm{d}b\right)\\
&= s^{1/M}\left(w(\mathcal{C}^*) + \frac{s-1}{K}w(\mathcal{C}^*)\right) 
\leq s^{1/M}\left(1+\frac{s-1}{K}\right)\OPT_{U,\mathcal{F},w}.
\end{align*}
Here, $\OPT_{U,\mathcal{F},w}$ and $\OPT_{U,\widehat{\bm{\mathcal{F}}},\widehat{\bm{w}}}$ represent the optimal values for $(U,\mathcal{F},w)$ and $(U,\widehat{\bm{\mathcal{F}}},\widehat{\bm{w}})$, respectively.

Recalling that \Call{LipschitzGreedy}{} simulates \Call{ClassicalGreedy}{$U,\widehat{\bm{\mathcal{F}}},\widehat{\bm{w}}$}, we have
\begin{align*}
\E\left[w\left(\Call{LipschitzGreedy}{U,\mathcal{F},w,K,M}\right)\right]\leq H_s\E\left[\OPT_{U,\widehat{\bm{\mathcal{F}}},\widehat{\bm{w}}}\right]\leq H_s\cdot s^{1/M}\left(1+\frac{s-1}{K}\right)\OPT_{U,\mathcal{F},w}.
\end{align*}
Therefore the lemma is proved.
\end{proof}

\subsection{Lipschitz Continuity}
Now we analyze the Lipschitz constant. 
By Lemma~\ref{lem:seeoneelement}, it suffices to evaluate
\begin{align*}
\EMW\left((\Call{LipschitzGreedy}{U,\mathcal{F},w,K,M}, w), (\Call{LipschitzGreedy}{U,\mathcal{F},w+\delta \bm{1}_T, K, M}, w+\delta\bm{1}_T)\right).
\end{align*}
Moreover, we can assume $0 < \delta\leq w_T$ or $w_T=0$.

Let $\bm{b}$ and $\bm{b}'$ be the parameter $\bm{b}$ sampled in
\Call{LipschitzGreedy}{} on $w$ and that on $w+\bm{1}_T$, respectively.
We use similar notations for $\bm{\pi}$, $\bm{i}_S$, $\bm{Q}$, $\bm{R}$, and $\bm{\mathcal{C}}$.
To bound the earth mover’s distance, we consider the coupling $\mathcal{W}$ between $\Call{LipschitzGreedy}{}$ on $w$ and that on $w+ \delta \bm{1}_T$
such that for each $b\in [0,1]$ and permutation $\pi$, the probability mass for $\bm{b} = b$ and $\bm{\pi}=\pi$ are transported to that for $\bm{b}' = b$ and $\bm{\pi}'=\pi$. 

Firstly, we evaluate the probability that $\bm{i}_T$ is not equal to $\bm{i}'_T$.
\begin{lemma}\label{lem:setcover_prob}
If $w_T>0$, we have $\Pr\left[\bm{i}_T\neq \bm{i}'_{T}\right]\leq \frac{\delta M}{w_T\log s}$.
\end{lemma}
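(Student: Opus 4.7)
The plan is to recast $\bm{i}_T$ and $\bm{i}'_T$ as floor functions of affine expressions in $\bm{b}$ and then use that $\bm{b}\sim\mathcal{U}([0,1])$ (together with the fact that $KM$ is an integer) to directly compute the probability that the two floors disagree.

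First I would rewrite the defining inequality $s^{K\bm{b}+\bm{i}_S/M}\le w_S<s^{K\bm{b}+(\bm{i}_S+1)/M}$ as $\bm{i}_S=\lfloor M\log_s w_S-KM\bm{b}\rfloor$ whenever $w_S>0$. Applied to $T$, this gives
\[
\bm{i}_T=\lfloor M\log_s w_T-KM\bm{b}\rfloor,\qquad \bm{i}'_T=\lfloor M\log_s(w_T+\delta)-KM\bm{b}\rfloor.
\]
Since both floors use the same shift $-KM\bm{b}$, and since $M\log_s(w_T+\delta)\ge M\log_s w_T$, the two values differ exactly when some integer lies in the interval $(M\log_s w_T-KM\bm{b},\,M\log_s(w_T+\delta)-KM\bm{b}]$, whose length is
\[
\Delta := M\log_s(w_T+\delta)-M\log_s w_T = \frac{M}{\log s}\ln\!\Bigl(1+\frac{\delta}{w_T}\Bigr)\le \frac{M\delta}{w_T\log s},
\]
using $\ln(1+x)\le x$ (matching the convention already used in the proof of Lemma~\ref{lem:simplesetcover_prob}, where $\log$ denotes the natural logarithm).

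Next I would argue that the fractional part of $M\log_s w_T-KM\bm{b}$ is uniform on $[0,1]$. Because $KM$ is a positive integer and $\bm{b}\sim\mathcal{U}([0,1])$, the random variable $KM\bm{b}$ modulo $1$ is uniform on $[0,1]$, hence so is any additive translate of it mod $1$. Therefore, for any deterministic value of $\Delta\in[0,1]$, the probability that a uniformly random point on $[0,1]/\mathbb{Z}$ has an integer in an arc of length $\Delta$ above it is exactly $\Delta$. (If $\Delta>1$, the bound below is vacuous since probabilities are at most $1$, but one can absorb this into the same inequality when the stated upper bound exceeds $1$.) Combining these steps yields
\[
\Pr[\bm{i}_T\neq \bm{i}'_T]\le \Delta\le \frac{M\delta}{w_T\log s},
\]
which is the claim. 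The main subtlety, and the only non-routine step, is justifying the mod-$1$ uniformity: it relies crucially on $KM$ being an integer so that the shift $KM\bm{b}$ wraps cleanly around $[0,1]$. Once this is in hand, the rest is just the estimate on $\Delta$ and the union of intervals argument, both of which are immediate.
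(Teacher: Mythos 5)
Your proposal is correct and follows essentially the same route as the paper: rewrite $\bm{i}_T$ as a floor of $M\log_s w_T$ shifted by $KM\bm{b}$, use that the shift is uniform mod $1$ (the paper phrases this as $(K\bm{b}\bmod 1)\sim\mathcal{U}([0,1])$ since $K$, and hence $KM$, is an integer), and bound the interval length by $\log(1+x)\le x$. Your explicit handling of the case $\Delta>1$ is a small extra care the paper glosses over, but the argument is the same.
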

\begin{proof}
We have
\begin{align*}
\Pr_{\mathcal{W}}\left[\bm{i}_T\neq \bm{i}'_{T}\right]
&= \Pr_{\bm{b}\sim \mathcal{U}[0,1]}\bigl[\floor{\left(\log_{s}w_T-K \bm{b}\right)M}\neq \floor{\left(\log_{s}(w_T+\delta)-K \bm{b}\right)M}\bigr]\\
&= \left(\log_s(w_T+\delta) - \log_s w_T\right)M
= \log_s\left(1+\frac{\delta}{w_T}\right)\cdot M
\leq \frac{\delta M}{w_T\log s},
\end{align*}
where the second equality is from $(K\bm{b}\bmod 1)\sim \mathcal{U}([0,1])$ because $K$ is an integer, and the inequality is from $\log (1+x)\leq x$.
\end{proof}
Consider the case $\bm{i}_T\neq \bm{i}'_T$.
For each $S\in \mathcal{F}$, let the \emph{level} of $S$ be $\bm{l}(S)=\floor{\frac{\bm{i}_S}{KM}}$ if $\bm{i}_S>-\infty$ and $\bm{l}(S)=-\infty$ if $\bm{i}_S=-\infty$. 
The following is easy but important.
\begin{lemma}\label{lem:setcover_incr}
For $(x_1,A_1,S_1), (x_2,A_2,S_2)\in \bm{Q}$, if $\bm{l}(S_1) < \bm{l}(S_2)$, then we have $x_1<x_2$.
\end{lemma}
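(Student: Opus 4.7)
The plan is to show that the level $\bm{l}(S)$ of a set $S \in \mathcal{F}$ can be recovered directly from the first coordinate $x$ of any triple $(x, A, S) \in \bm{Q}$ as $\bm{l}(S) = \lfloor x/(KM) \rfloor$. Once this identity is established, the conclusion is immediate: if $\bm{l}(S_1) < \bm{l}(S_2)$ then $x_1 < KM(\bm{l}(S_1)+1) \le KM \cdot \bm{l}(S_2) \le x_2$.

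To prove the identity, I would unpack the definitions. By Line~\ref{line:setcoverconst_add}, the first coordinate of a triple corresponding to $A \subseteq S$ is $x = \bm{i}_S - M \log_s |A|$. Writing $\bm{i}_S = KM \cdot \bm{l}(S) + r$, where $r = \bm{i}_S \bmod KM \in \{0, 1, \dots, KM-1\}$, we obtain
\[
x = KM \cdot \bm{l}(S) + \bigl(r - M \log_s |A|\bigr).
\]
It remains to verify that the parenthesized residual lies in $[0, KM)$. The lower bound $r - M \log_s |A| \ge 0$ is precisely the guard condition on Line~\ref{line:setcoverconst_add_condition} that allowed the triple to be inserted into $\bm{Q}$ in the first place. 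For the upper bound, since $A \in 2^S \setminus \{\emptyset\}$ and $|S| \le s$, we have $|A| \ge 1$ and therefore $M \log_s |A| \ge 0$, so $r - M \log_s |A| \le r < KM$.

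Hence $\lfloor x/(KM) \rfloor = \bm{l}(S)$ for every $(x, A, S) \in \bm{Q}$, and the strict comparison $\bm{l}(S_1) < \bm{l}(S_2)$ yields $x_1 < x_2$ via the chain of inequalities above. There is no real obstacle here; the lemma is essentially a bookkeeping consequence of the rounding scheme and of the filter that only admits subsets $A$ with $\bm{i}_S \bmod KM \ge M \log_s |A|$, which was introduced precisely to ensure that the ordering used by the loop on Line~\ref{line:setcoverconst_loop} respects the level partition and thereby decouples the processing of different levels.
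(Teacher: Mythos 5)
Your proof is correct and follows essentially the same route as the paper: the guard condition on Line~\ref{line:setcoverconst_add_condition} gives $x \geq KM\cdot \bm{l}(S)$, while $\bm{i}_S \bmod KM < KM$ together with $M\log_s|A| \geq 0$ gives $x < KM(\bm{l}(S)+1)$, and these yield exactly the paper's chain $x_1 < KM(\bm{l}(S_1)+1) \leq KM\,\bm{l}(S_2) \leq x_2$. The only point you gloss over is the degenerate case $w_{S_1}=0$, where $\bm{i}_{S_1}=-\infty$ (Line~\ref{line:setcoverconst_is}) and the decomposition $\bm{i}_{S_1}=KM\,\bm{l}(S_1)+r$ with $r\in\{0,\dots,KM-1\}$ is not available; the paper disposes of it first, noting the claim is then trivial because $\bm{l}(S_2)>-\infty$ forces $w_{S_2}>0$ and hence a finite $x_2$, while $x_1=-\infty$. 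Adding that one line makes your argument complete.
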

\begin{proof}
$\bm{l}(S_2)>-\infty$ holds only if $w_{S_2}>0$. Therefore, the lemma is trivial if $w_{S_1}=0$. 
Assume $w_{S_1}>0$. Then, we have
\begin{align*}
x_1 < \left(\bm{l}(S_1)+1\right)K M
\leq \bm{l}(S_2)K M \leq x_2,
\end{align*}
where the last inequality is because in Line~\ref{line:setcoverconst_add} of \Call{LipschitzGreedy}{}, $(\bm{i}_{S_2}-M\log_s |A_2|, A_2, S_2)$ is added only if $\bm{i}_S\bmod KM\geq \bm{i}_{S_2}-M\log_s |A_2|$.
\end{proof}

From Lemma~\ref{lem:setcover_incr}, it can be observed that the loop starting from Line~\ref{line:setcoverconst_loop} in \Call{LipschitzGreedy}{} determines whether to include the set $S$ in the output $\bm{\mathcal{C}}$ in ascending order of their level $\bm{l}(S)$. 
Therefore, when all sets $S$ with $\bm{l}(S)\leq k$ have been examined, the set of elements covered by $\bm{\mathcal{C}}$ is equal to the set of elements belonging to at least one set with its level at most $k$.
Comparing the behavior of the loops starting from Line~\ref{line:setcoverconst_loop} in \Call{LipschitzGreedy}{$U,\mathcal{F},w,K,M$} and \Call{LipschitzGreedy}{$U,\mathcal{F},w+\delta \bm{1}_T,K,M$}, we have:
\begin{itemize}
    \item Up until the point where sets with levels less than $\bm{l}(T)$ are being examined, the behavior is exactly the same.
    \item Since the sets of elements covered by $\bm{\mathcal{C}}$ and $\bm{\mathcal{C}}'$ are the same just after all sets with level $\bm{l}'(T)$ has been examined (because all one-element subsets of a set $S \in \mathcal{F}$ with $\bm{l}(S) \leq \bm{l}'(T)$ have the same level as $S$), the algorithm behaves the same when examining sets with a level greater than $\bm{l}'(T)$.
\end{itemize}
We can assume without loss of generality that there exists no set $S \in \mathcal{F}$ such that $\bm{l}(T) < \bm{l}(S) < \bm{l}'(T)$. 
In particular, any set that differs between $\bm{\mathcal{C}}$ and $\bm{\mathcal{C}}'$ has either level $\bm{l}(T)$ or $\bm{l}'(T)$. 
Let $\bm{X}$ be the set of all possible values taken by the first component $x$ of the tuples $(x,A,S)$ in $\bm{Q}$ such that $\bm{l}(S)=\bm{l}(T)$ or $\bm{l}(S)=\bm{l}'(T)$. In other words, let
\begin{align*}
\bm{X}=&\bigcup_{i=1}^{s}\bigcup_{x=\ceil{M\log_s i}}^{K M - 1}\left\{\bm{l}(T)K M + x -M \log_s i, \bm{l}'(T)K M + x -M \log_s i\right\}
\end{align*}
when $w_T>0$ and 
\begin{align*}
\bm{X}=\{-\infty\}\cup \bigcup_{i=1}^{s}\bigcup_{x=\ceil{M\log_s i}}^{K M - 1}\left\{\bm{l}'(T)K M + x -M \log_s i\right\}
\end{align*}
when $w_T=0$.

For $k=1,\dots, |\bm{X}|$, let $\bm{y}_k$ be the $k$-th smallest number in $\bm{X}$ and let $\bm{\mathcal{C}}_k$ (resp., $\bm{\mathcal{C}}'_k$) be the family of sets $S \in \mathcal{F}$ that was added to $\bm{\mathcal{C}}$ when having examined a tuple $(x,A,S) \in \bm{Q}$ with $x=\bm{y}_k$ in \Call{LipschitzGreedy}{$U,\mathcal{F},w,K,M$} (resp., \Call{LipschitzGreedy}{$U,\mathcal{F},w+\delta \bm{1}_T,K,M$}).

\begin{lemma}\label{lem:setcoverconst_pack}
Let $k\in \{1,\dots, |X|\}$. Then, we have
\begin{align*}
\E_{\mathcal{W}}\left[\left|\bm{\mathcal{C}}_k\; \triangle\; \bm{\mathcal{C}}'_k\right|\right]\leq (2^{s}sf+1)^{k+3}.\qedhere
\end{align*}
\end{lemma}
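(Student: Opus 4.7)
The plan is to prove this by induction on $k$, treating the loop starting at Line~\ref{line:setcoverconst_loop} as a sequence of randomized greedy packing subroutines (one per distinct level value $\bm{y}_k$) that share the permutation $\bm{\pi}$, and then controlling how discrepancies propagate from one subroutine to the next. The coupling $\mathcal{W}$ is already fixed so that both $\bm{b}$ and $\bm{\pi}$ are shared across the two runs, which means that within a single level value $\bm{y}_k$ the two runs are executing \Call{Packing}{}-style algorithms on two (possibly different) collections of tuples using the same random tie-breaking permutation. This sets up Lemma~\ref{lem:randomizedgreedy} as the right black box to invoke at each level.

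The first task is to quantify how much the candidate tuple set at level $\bm{y}_k$ can differ between the two runs. By Lemma~\ref{lem:setcover_incr}, when we begin the tuples with first coordinate $\bm{y}_k$, the set $\bm{R}$ in the two runs equals $U$ minus the elements covered by sets added at strictly smaller levels; hence $\bm{R} \triangle \bm{R}'$ is contained in $\bigcup_{S \in \mathcal{D}} S$ where $\mathcal{D} := \bigcup_{k' < k}(\bm{\mathcal{C}}_{k'} \triangle \bm{\mathcal{C}}'_{k'})$. Each such $S$ contributes at most $|S| \le s$ differing elements, each element of $U$ belongs to at most $f$ original sets, and each such set generates at most $2^s$ tuples in $\bm{Q}$ with first coordinate equal to $\bm{y}_k$. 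Hence the symmetric difference of the two candidate tuple families at level $\bm{y}_k$ has size at most $2^s s f \cdot |\mathcal{D}|$.

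Next, one invokes Lemma~\ref{lem:randomizedgreedy} iteratively to bound the output discrepancy of the greedy packing at level $\bm{y}_k$. Starting from the identical candidate collection and adding or removing tuples one by one to transform the candidate collection of one run into that of the other, each single modification inflates the expected symmetric difference of \Call{Packing}{}'s output by at most $2$. Summing yields $\E[|\bm{\mathcal{C}}_k \triangle \bm{\mathcal{C}}'_k|] \le 2 \cdot 2^s s f \cdot \E[|\mathcal{D}|] + 2 \cdot C_0$, where the additive term $C_0$ comes from the base level associated with $T$ itself (at most a constant number of tuples differ directly because of the weight perturbation on $T$, controlled by Lemma~\ref{lem:setcover_prob}). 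Plugging in the inductive hypothesis $\E[|\bm{\mathcal{C}}_{k'} \triangle \bm{\mathcal{C}}'_{k'}|] \le (2^s s f + 1)^{k'+3}$ for all $k' < k$ gives the recurrence
\begin{align*}
\E[|\bm{\mathcal{C}}_k \triangle \bm{\mathcal{C}}'_k|] \le (2^s s f + 1) \sum_{k' < k}(2^s s f + 1)^{k'+3} + (2^s s f + 1)^4,
\end{align*}
which telescopes to $(2^s s f + 1)^{k+3}$, completing the induction. The base case $k = 1$ is handled by observing that the only tuples with first coordinate $\bm{y}_1$ that can behave differently between the two runs are those stemming from $T$ or from its immediate conflicts, again bounded by the constant $(2^s s f + 1)^4$ with a large margin.

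The main obstacle I anticipate is the careful bookkeeping in the two places where ``differences compound'': (i) verifying that Lemma~\ref{lem:randomizedgreedy}, which removes a single element, can be chained across the symmetric difference of the candidate tuple families at a given level without losing more than a factor of $2$ per single-tuple swap (this requires the shared $\bm{\pi}$ so that intermediate hybrids are well-defined), and (ii) handling the boundary between level $\bm{l}(T)$ and level $\bm{l}'(T)$ together with the degenerate case $w_T = 0$, where one of these levels is $-\infty$. Both difficulties are structural rather than computational, so the slack absorbed by the extra $+3$ in the exponent gives enough room to subsume constant multiplicative losses.
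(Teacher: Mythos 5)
Your proposal follows essentially the same route as the paper's proof: decompose the loop by the distinct values $\bm{y}_k$, chain Lemma~\ref{lem:randomizedgreedy} under the shared permutation over the symmetric difference of the candidate tuple families, bound that difference by $s\cdot 2^s f$ times the accumulated discrepancy $\sum_{k'<k}|\bm{\mathcal{C}}_{k'}\,\triangle\,\bm{\mathcal{C}}'_{k'}|$ plus an additive term of at most $2\cdot 2^s$ for the tuples stemming from $T$, and close by induction. The only discrepancies are constant-factor bookkeeping already present at the same level of looseness in the paper itself (the factor $2$ from chaining Lemma~\ref{lem:randomizedgreedy} is silently dropped, the additive term is governed by the tuples generated by $T$ rather than by Lemma~\ref{lem:setcover_prob}, and your telescoping needs a slightly strengthened hypothesis such as the paper's $(2^s sf+1)^{k+3}-2\cdot 2^s$, or a marginally larger base), none of which affects the theorem since the base of the exponent is absorbed in the final bound.
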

\begin{proof}
We prove this by induction on $k$.
Let $\bm{D}_k$ (resp. $\bm{D}'_k$) be the set of tuples $(x,A,S)$ in $\bm{Q}$ with $x=y_k$ such that $A \subseteq \bm{R}$ holds when \Call{LipschitzGreedy}{$U,\mathcal{F},w,K,M$} (resp., \Call{LipschitzGreedy}{$U,\mathcal{F},w+\delta \bm{1}_T,K,M$}) starts to examine tuples $(\tilde{x},\tilde{A},\tilde{S})$ in $\bm{Q}$ with $\tilde{x} = y_k$. 
Furthermore, let $\bm{R}_k$ (resp. $\bm{R}'_k$) be $\bm{R}$ at that time.
From Lemma~\ref{lem:randomizedgreedy}, we have
\begin{align*}
\E_{\mathcal{W}}\left[\left|\bm{\mathcal{C}}_k\;\triangle\; \bm{\mathcal{C}}'_k\right|\right]\leq 2\cdot \E_{\mathcal{W}}\left[\left|\bm{D}_k\;\triangle\; \bm{D}'_k\right|\right].
\end{align*}
Moreover,
\begin{align*}
\E_{\mathcal{W}}\left[\left|\bm{R}_{k+1}\;\triangle\; \bm{R}'_{k+1}\right|\right]\leq s\cdot \E_{\mathcal{W}}\left[\sum_{i=1}^{k}\left|\bm{\mathcal{C}}_k\;\triangle\; \bm{\mathcal{C}}'_k\right|\right].
\end{align*}
Since each $e \in \bm{R}_{k+1}\;\triangle\; \bm{R}'_{k+1}$ belongs to $A$ for at most $2^sf$ tuples $(x,A,S)\in Q$, we have
\begin{align*}
\E_{\mathcal{W}}\left[\left|\bm{D}_{k+1}\;\triangle\; \bm{D}'_{k+1}\right|\right]\leq f\cdot 2^{s}\cdot \E_{\mathcal{W}}\left[\left|\bm{R}_{k+1}\;\triangle\; \bm{R}'_{k+1}\right|\right]+2\cdot 2^s,
\end{align*}
where the last term counts the number of tuples $(x,A,S)$ with $S=T$.
Thus, we have
\begin{align*}
& \E_{\mathcal{W}}\left[\left|\bm{\mathcal{C}}_{k+1}\;\triangle\; \bm{\mathcal{C}}'_{k+1}\right|\right]
\leq 2\cdot \E_{\mathcal{W}}\left[\left|\bm{D}_{k+1}\;\triangle\; \bm{D}'_{k+1}\right|\right]\\
&\leq f\cdot 2^s\cdot \E_{\mathcal{W}}\left[\left|\bm{R}_{k+1}\;\triangle\; \bm{R}'_{k+1}\right|\right]+2\cdot 2^s
\leq f\cdot 2^s \cdot s\cdot \E_{\mathcal{W}}\left[\sum_{i=1}^{k}\left|\bm{\mathcal{C}}_k\;\triangle\; \bm{\mathcal{C}}'_k\right|\right]+2\cdot 2^s.
\end{align*}
Thus, we obtain
\begin{align*}
\E_{\mathcal{W}}\left[\left|\bm{\mathcal{C}}_k\;\triangle\; \bm{\mathcal{C}}'_k\right|\right]\leq \left(2^{s}sf+1\right)^{k+3}-2\cdot 2^s
\end{align*}
by induction. Therefore the lemma holds.
\end{proof}

Since there are at most $2sKM$ numbers in $X$, we have the following.
\begin{lemma}\label{lem:setcover_num}
We have
\begin{align*}
\E_{\mathcal{W}}\left[\left|\bm{\mathcal{C}}\;\triangle\; \bm{\mathcal{C}}'\right|\right]\leq \left(2^{s}sf+1\right)^{2sK M+4}.
\end{align*}
\end{lemma}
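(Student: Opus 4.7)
The plan is to decompose the symmetric difference $\bm{\mathcal{C}} \triangle \bm{\mathcal{C}}'$ into the contributions coming from each of the finitely many distinct weight indices $\bm{y}_k$, bound each contribution by Lemma~\ref{lem:setcoverconst_pack}, and sum a geometric series. First, by the discussion immediately preceding the definition of $\bm{X}$, any set $S \in \bm{\mathcal{C}} \triangle \bm{\mathcal{C}}'$ must satisfy $\bm{l}(S) \in \{\bm{l}(T), \bm{l}'(T)\}$: below this range the algorithms behave identically, and once the union of already-covered elements has been synchronised (which happens as soon as all sets of level $\bm{l}'(T)$ have been processed, using the WLOG assumption that no level strictly between $\bm{l}(T)$ and $\bm{l}'(T)$ exists), the two executions remain in lockstep. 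Thus every $S \in \bm{\mathcal{C}} \triangle \bm{\mathcal{C}}'$ comes from a tuple $(x,A,S) \in \bm{Q}$ whose first coordinate lies in $\bm{X}$, which gives the disjoint decomposition
\[
    \bm{\mathcal{C}} \triangle \bm{\mathcal{C}}' = \bigsqcup_{k=1}^{|\bm{X}|}\left(\bm{\mathcal{C}}_k \triangle \bm{\mathcal{C}}'_k\right).
\]

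Next I would bound $|\bm{X}|$. From the explicit description of $\bm{X}$, each of the (at most) two relevant levels contributes one element for every pair $(i, x)$ with $i \in \{1, \ldots, s\}$ and $x \in \{\lceil M \log_s i\rceil, \ldots, KM-1\}$, giving at most $sKM$ elements per level, hence $|\bm{X}| \leq 2sKM$ (the case $w_T = 0$ is even smaller). Applying linearity of expectation together with Lemma~\ref{lem:setcoverconst_pack},
\[
    \E_{\mathcal{W}}\bigl[|\bm{\mathcal{C}} \triangle \bm{\mathcal{C}}'|\bigr]
    \;\leq\; \sum_{k=1}^{2sKM} \E_{\mathcal{W}}\bigl[|\bm{\mathcal{C}}_k \triangle \bm{\mathcal{C}}'_k|\bigr]
    \;\leq\; \sum_{k=1}^{2sKM} (2^{s}sf+1)^{k+3}.
\]
Setting $r := 2^{s}sf+1 \geq 3$, the geometric series is dominated by its largest term up to a constant factor bounded by $r/(r-1) \leq 2 \leq r$, so the sum is at most $(2^{s}sf+1)^{2sKM+4}$, yielding the claimed bound.

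The argument is essentially bookkeeping on top of the layerwise estimate in Lemma~\ref{lem:setcoverconst_pack}; the only point that warrants care is the justification that nothing outside of the two critical levels can contribute to the symmetric difference, which has already been established in the preceding discussion based on Lemma~\ref{lem:setcover_incr} and the observation that singletons of level-$\bm{l}(S)$ sets inherit the same level.
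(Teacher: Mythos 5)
Your proposal is correct and follows essentially the same route as the paper: decompose the symmetric difference over the at most $2sKM$ values in $\bm{X}$, apply Lemma~\ref{lem:setcoverconst_pack} to each term, and absorb the resulting geometric sum into a single extra factor of $2^{s}sf+1$. The only cosmetic difference is how the geometric series is bounded (largest term times $r/(r-1)$ versus the paper's direct bound $\sum_{k\le |X|} r^{k+3}\le r^{|X|+4}$), which changes nothing of substance.
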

\begin{proof}
We have
\[
\E_{\mathcal{W}}\left[\left|\bm{\mathcal{C}}\;\triangle\; \bm{\mathcal{C}}'\right|\right]
= \sum_{k=1}^{|X|}\E_{\mathcal{W}}\left[\left|\bm{\mathcal{C}}_k\;\triangle\; \bm{\mathcal{C}}'_k\right|\right]
\leq \sum_{k=1}^{|X|}(2^{s}sf+1)^{k+3}\leq (2^{s}sf+1)^{|X|+4}\leq \left(2^{s}sf+1\right)^{2sK M+4}.\qedhere
\]
\end{proof}

The following is straightforward.
\begin{lemma}\label{lem:setcover_len}
Let $S\in \mathcal{F}$ and assume $S$ has level $\bm{l}(T)$ or $\bm{l}'(T)$.
Then, $w_S\leq s^{K}(w_T+\delta)$.
\end{lemma}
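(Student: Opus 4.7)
The plan is to unwind the definitions of $\bm{i}_S$ and $\bm{l}(S)$ from Line~\ref{line:setcoverconst_is} of Algorithm~\ref{alg:setcover_const}. Recall that, when $w_S > 0$, the integer $\bm{i}_S$ is characterized by the double inequality $s^{K\bm{b}+\bm{i}_S/M}\leq w_S < s^{K\bm{b}+(\bm{i}_S+1)/M}$, and $\bm{l}(S)=\lfloor \bm{i}_S/(KM)\rfloor$. So the hypothesis $\bm{l}(S)=l$ forces $\bm{i}_S<(l+1)KM$, giving the general upper bound
\[
w_S < s^{K\bm{b}+(\bm{i}_S+1)/M} \leq s^{K\bm{b}+(l+1)K}.
\]
In the same way, if $T$ itself has $\bm{l}(T)=l$ with $w_T>0$, then $\bm{i}_T\geq lKM$ yields the matching lower bound $w_T \geq s^{K\bm{b}+lK}$. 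Dividing one by the other would then immediately give $w_S < s^K\cdot w_T\leq s^K(w_T+\delta)$.

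I would split into the two cases suggested by the statement. First, suppose $S$ has level $\bm{l}(T)$. If $w_T>0$, the argument above applied with $l=\bm{l}(T)$ finishes the case. If $w_T=0$, then $\bm{l}(T)=-\infty$, so the only way $S$ can share this level is $w_S=0$, in which case the inequality $w_S\leq s^K(w_T+\delta)$ is trivial. Second, suppose $S$ has level $\bm{l}'(T)$. Here $\bm{l}'(T)$ is the level of $T$ under the perturbed weight $w'=w+\delta\bm{1}_T$, for which $w'_T=w_T+\delta>0$, so the defining inequality for $\bm{i}'_T$ gives $w_T+\delta \geq s^{K\bm{b}+\bm{l}'(T)K}$. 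Combining this with the general upper bound $w_S<s^{K\bm{b}+(\bm{l}'(T)+1)K}$ for any $S$ of level $\bm{l}'(T)$ again produces $w_S < s^K(w_T+\delta)$.

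The only mildly subtle point is that $\bm{i}_S$ is computed from $w$ while $\bm{i}'_T$ is computed from $w'$; both use the same sample $\bm{b}$ under the coupling $\mathcal{W}$, so the base $s^{K\bm{b}}$ cancels consistently across the two sides of the comparison. No induction or use of earlier lemmas is required — the whole argument is a short case analysis on the definitions, and I do not expect any real obstacle.
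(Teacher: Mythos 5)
Your proof is correct and takes essentially the same approach as the paper: both bound $w_S$ from above via $\bm{i}_S + 1 \leq (\bm{l}(S)+1)KM$ and bound the weight of $T$ from below via the level of $T$ (using the shared $\bm{b}$ so the factor $s^{K\bm{b}}$ cancels). The only difference is organizational: the paper collapses your two cases into one chain by noting $\bm{l}(S) \leq \bm{l}'(T)$ and comparing directly with $w'_T = w_T + \delta$, while you treat the levels $\bm{l}(T)$ and $\bm{l}'(T)$ separately (also handling the degenerate $w_T = 0$ case explicitly, which the paper leaves implicit).
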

\begin{proof}
Since $\bm{l}(S)\leq \bm{l}'(T)$, we have
\begin{align*}
w_S&\leq s^{K b+(i_S+1)/M}\leq s^{K b+(KM\cdot \bm{l}(S) + KM-1+1)/M}
\leq s^{K}\cdot s^{K b + (KM \cdot \bm{l}(S))/M}\\
&\leq s^{K}\cdot s^{K b + (KM \cdot \bm{l}'(T))/M}\leq s^{K}\cdot w'(T)=s^{K}(w_T+\delta).\qedhere
\end{align*}
\end{proof}

Therefore, the following holds.
\begin{lemma}
We have
\begin{align*}
&\EMW\left(\left(\Call{LipschitzGreedy}{U,\mathcal{F},w,K,M},w\right), \left(\Call{LipschitzGreedy}{U,\mathcal{F},w+\delta \bm{1}_T, K, M},w+\delta \bm{1}_T\right)\right)\\
&\leq 3M \left(2^{s}sf+1\right)^{5sK M}\cdot \delta.
\end{align*}
\end{lemma}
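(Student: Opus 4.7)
The plan is to decompose the expected earth mover's cost according to whether $\bm{i}_T = \bm{i}'_T$ under the coupling $\mathcal{W}$. Under this coupling, $\bm{b} = \bm{b}'$ and $\bm{\pi} = \bm{\pi}'$. Since $\bm{i}_S$ for $S \neq T$ is determined entirely by $w_S$ (which is unchanged) and $\bm{b}$, the event $\bm{i}_T = \bm{i}'_T$ implies that all of $(\bm{i}_S)_{S \in \mathcal{F}}$ coincide in the two runs, so the multisets $\bm{Q}$ coincide, the loop starting at Line~\ref{line:setcoverconst_loop} executes identically, and $\bm{\mathcal{C}} = \bm{\mathcal{C}}'$. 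In that event the cost from the definition of $d_{\mathrm{w}}$ is at most $|w_T - w'_T| = \delta$, contributed only when $T \in \bm{\mathcal{C}} \cap \bm{\mathcal{C}}'$.

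It then remains to bound the contribution from the event $\bm{i}_T \neq \bm{i}'_T$. Lemma~\ref{lem:setcover_prob} gives $\Pr[\bm{i}_T \neq \bm{i}'_T] \leq \delta M/(w_T \log s)$ when $w_T > 0$; the analysis leading to Lemma~\ref{lem:setcover_num} is carried out under exactly this event (stated just before Lemma~\ref{lem:setcover_incr}), so $\E[|\bm{\mathcal{C}} \triangle \bm{\mathcal{C}}'| \mid \bm{i}_T \neq \bm{i}'_T] \leq (2^{s}sf+1)^{2sKM+4}$. Every set $S \in \bm{\mathcal{C}} \triangle \bm{\mathcal{C}}'$ has level $\bm{l}(T)$ or $\bm{l}'(T)$, so Lemma~\ref{lem:setcover_len} yields $\max(w_S, w'_S) \leq s^{K}(w_T+\delta)$. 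Combining these three ingredients,
\begin{align*}
\EMW
&\leq \delta + s^{K}(w_T+\delta)\cdot \Pr[\bm{i}_T\neq \bm{i}'_T]\cdot \E\bigl[|\bm{\mathcal{C}}\triangle \bm{\mathcal{C}}'|\;\bigm|\;\bm{i}_T\neq \bm{i}'_T\bigr]\\
&\leq \delta + 2s^{K}w_T\cdot \frac{\delta M}{w_T\log s}\cdot (2^{s}sf+1)^{2sKM+4},
\end{align*}
using $\delta \leq w_T$. The $w_T$ cancels; using $s^{K}\leq (2^{s}sf+1)^{sKM}$ and absorbing $2/\log s$ together with the additive $\delta$ into the constant $3M$ collapses the bound to $3M \delta (2^{s}sf+1)^{5sKM}$, with generous slack. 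The case $w_T=0$ must be treated separately (the denominator in Lemma~\ref{lem:setcover_prob} disappears); there the event $\bm{i}_T\neq \bm{i}'_T$ occurs with probability $1$, but each $S$ in the symmetric difference has weight at most $s^{K}\delta$, and multiplying by $(2^{s}sf+1)^{2sKM+4}$ still fits well inside the target.

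The main obstacle is bookkeeping rather than a new idea: one must be careful that (i) the coupling argument correctly identifies $\{\bm{i}_T=\bm{i}'_T\}$ as the only source of divergence, exploiting that the only term in the algorithm sensitive to $w_T$ (given $\bm{b}$) is $\bm{i}_T$ itself, and (ii) the EMW accounting correctly uses $w'_S = w_S$ for $S \neq T$ and $w'_T = w_T + \delta$, so that the only persistent $\delta$-contribution outside the symmetric difference is the single $T \in \bm{\mathcal{C}}\cap \bm{\mathcal{C}}'$ term. Once both items are handled, the upper bound follows from a direct arithmetic combination of Lemmas~\ref{lem:setcover_prob}, \ref{lem:setcover_num}, and~\ref{lem:setcover_len}.
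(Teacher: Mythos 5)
Your proposal is correct and follows essentially the same route as the paper: condition on the event $\bm{i}_T \neq \bm{i}'_T$, bound its probability via Lemma~\ref{lem:setcover_prob}, bound the conditional expected symmetric difference via Lemma~\ref{lem:setcover_num}, bound individual set weights via Lemma~\ref{lem:setcover_len}, and treat $w_T = 0$ separately. The only (harmless) difference is that you explicitly track the additive $\delta$ contribution from $T \in \bm{\mathcal{C}} \cap \bm{\mathcal{C}}'$ when the runs coincide, which the paper's displayed chain silently absorbs into its exponent slack.
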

\begin{proof}
If $w_T=0$, we have 
\begin{align*}
&\EMW\left(\left(\Call{LipschitzGreedy}{U,\mathcal{F},w,K,M},w\right), \left(\Call{LipschitzGreedy}{U,\mathcal{F},w+\delta \bm{1}_T, K, M},w+\delta \bm{1}_T\right)\right)\\
&\leq \E\left[\left|\bm{\mathcal{C}}\;\triangle\; \bm{\mathcal{C}}'\right|\right]\cdot s^{K}(0+\delta)\leq \left(2^{s}sf+1\right)^{2sK M+4} \cdot s^{K}\cdot \delta 
\leq \left(2^{s}sf+1\right)^{2sK M+4+K}\cdot \delta
\leq \left(2^{s}sf+1\right)^{5sK M}\cdot \delta,
\end{align*}
where the second inequality is from Lemmas~\ref{lem:setcover_num} and~\ref{lem:setcover_len}, and the last inequality is because $M \geq 1$.
Otherwise, we have
\begin{align*}
&\EMW\left(\left(\Call{LipschitzGreedy}{U,\mathcal{F},w,K,M},w\right), \left(\Call{LipschitzGreedy}{U,\mathcal{F},w+\delta \bm{1}_T, K, M},w+\delta \bm{1}_T\right)\right)\\
&\leq \Pr\left[\bm{i}_T\neq \bm{i}'_T\right]\cdot \E\left[\left|\bm{\mathcal{C}}\;\triangle\; \bm{\mathcal{C}}'\right|\right]\cdot s^{K}(w_T+\delta)\\
&\leq \frac{\delta M}{w_T\log s}\cdot \left(2^{s}sf+1\right)^{2sK M+4}\cdot s^{K}\cdot 2w_T
\leq 3M \left(2^{s}sf+1\right)^{2sK M+4+K}\cdot \delta
\leq 3M \left(2^{s}sf+1\right)^{5sK M}\cdot \delta,
\end{align*}
where the second inequality is from Lemmas~\ref{lem:setcover_prob},~\ref{lem:setcover_num},~\ref{lem:setcover_len} and $\delta\leq w_T$, and the third inequality is from $\log s \geq \log 2\geq \frac{2}{3}$.
\end{proof}

Now we have the following.
\begin{lemma}
\Call{LipschitzGreedy}{} has Lipschitz constant $3M \left(2^{s}sf+1\right)^{5sK M}$.
\end{lemma}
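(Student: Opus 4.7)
The plan is to invoke Lemma~\ref{lem:seeoneelement} directly. That lemma reduces bounding the Lipschitz constant of \Call{LipschitzGreedy}{} to bounding the earth mover's distance $\EMW$ for a single-coordinate perturbation of the form $w \mapsto w + \delta \bm{1}_T$, where $T \in \mathcal{F}$ and $\delta > 0$ satisfies either $0 < \delta \leq w_T$ or $w_T = 0$. But the immediately preceding lemma establishes exactly this bound: under those same hypotheses, the earth mover's distance is at most $3M(2^s sf + 1)^{5sKM} \cdot \delta$.

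Hence the plan is simply to set $L = 3M(2^s sf + 1)^{5sKM}$ and $c = 1$ in Lemma~\ref{lem:seeoneelement}. The hypothesis of that lemma is verified by the previous lemma, so the conclusion yields the desired Lipschitz constant bound on \Call{LipschitzGreedy}{}. There is no real obstacle here; the substantive work (coupling construction, level analysis via Lemma~\ref{lem:setcoverconst_pack}, bounding $|\bm{X}| \leq 2sKM$, and combining with Lemma~\ref{lem:setcover_len}) has already been carried out in the lemmas above, so this concluding lemma is just the packaging step that converts the single-coordinate bound into the overall Lipschitz statement. Combined with Lemma~\ref{lem:setcoverconst_approx}, choosing $K$ and $M$ to be sufficiently large constants depending on $\epsilon$ (specifically $K = \Theta(s/\epsilon)$ and $M = \Theta(\log s / \epsilon)$ to drive the factor $\left(1 + \tfrac{s-1}{K}\right) s^{1/M}$ below $1 + \epsilon/H_s$) then yields Theorem~\ref{thm:set-cover-greedy}.
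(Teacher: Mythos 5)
Your proposal is correct and matches the paper's own (essentially implicit) argument: the paper also obtains this lemma directly by applying Lemma~\ref{lem:seeoneelement} with $c=1$ and $L = 3M\left(2^{s}sf+1\right)^{5sKM}$ to the single-coordinate bound $\EMW \leq 3M\left(2^{s}sf+1\right)^{5sKM}\cdot\delta$ established in the preceding lemma under the assumption $0 < \delta \leq w_T$ or $w_T = 0$. (The only substantive work is in the earlier lemmas, exactly as you say; your closing remark about the choice of $K$ and $M$ is not needed here and the paper's actual choices carry extra $H_s$ factors, but that concerns Theorem~\ref{thm:set-cover-greedy}, not this lemma.)
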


Next, we prove the main theorem.
\begin{proof}[Proof of Theorem~\ref{thm:set-cover-greedy}]
Let $K=\ceil{4\epsilon^{-1} H_s}(s-1)$ and $M=\ceil{4\epsilon^{-1} H_s\log s}$. Then, from Lemma~\ref{lem:setcoverconst_approx}, the approximation ratio is
\begin{align*}
& H_s\cdot \left(1+\frac{s-1}{K}\right)\cdot s^{1/M}
\leq H_s \cdot \left(1+\frac{\epsilon}{4H_s}\right)\cdot 2^{\frac{\epsilon}{4H_s}}\\
&\leq H_s \cdot \left(1+\frac{\epsilon}{4H_s}\right)\cdot \left(1+\frac{\epsilon}{4H_s}\right)
\leq H_s \cdot \left(1+\frac{\epsilon}{H_s}\right)
= H_s+\epsilon,
\end{align*}
where the second inequality is from $2^{x}\leq 1+x$ holds for $0\leq x \leq 1$, and the third inequality is from $\epsilon\leq 1$. In this setting of $K$ and $M$, the Lipschitz constant of \Call{LipschitzGreedy}{} is at most
\begin{align*}
& 3M \left(2^{s}sf+1\right)^{5sK M} 
\leq 3\cdot 5\epsilon^{-1} H_s\log s \cdot \left(2^{s}sf+1\right)^{5s\cdot \left(5\epsilon^{-1} H_s\right)^2\cdot (s-1)\log s}\\
&\leq 2^{O\left(\log H_s + \log \epsilon^{-1} + \log \log s + (s+\log s+\log f)\cdot s\cdot H_s^2\cdot \epsilon^{-2}\cdot s\log s\right)}
\leq 2^{O\left(\epsilon^{-2}\left(s+\log f\right)s^2\log^3 s\right)},
\end{align*}
where the first inequality is from $K, M\leq 5\epsilon^{-1}$ follows from $\epsilon \leq 1$.
\end{proof}

%!TEX root=./main.tex

\section{LP-based $O(\log n)$-Approximation Algorithm for Set Cover}\label{sec:set-cover}

In this section, we design a Lipschitz-continuous algorithm for the minimum set cover problem that has a sublinear Lipschitz constant for any instance.
Specifically, we show the following:
\begin{theorem}\label{thm:set-cover}
    There exists a polynomial-time randomized algorithm for the minimum set cover problem that, given a set cover instance $(U,\mathcal{F})$ on $n$ elements and $m$ sets and a weight vector $w \in \mathbb{R}_{\geq 0}^{\mathcal{F}}$, outputs a family of sets $\bm{\mathcal{S}} \subseteq \mathcal{F}$ with the following properties:
    \begin{itemize}
    \item $\bm{\mathcal{S}}$ is a feasible solution with probability $1-1/\mathrm{poly}(n)$,
    \item the total weight of $\bm{\mathcal{S}}$ satisfies $\E[\sum_{S \in \bm{\mathcal{S}}}w_S] = O(\log n \cdot (\OPT+c))$, where $c>0$ is an arbitrarily small constant, and
    \item the algorithm has a Lipschitz constant $O(\sqrt{\min\{n,m\}\log m} \log n)$.
    \end{itemize}
\end{theorem}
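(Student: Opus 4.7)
The plan is to solve a regularized LP relaxation and then perform $O(\log n)$ rounds of independent randomized rounding. Concretely, the idea is to minimize
\[
    F_w(x) := \sum_{S \in \mathcal{F}} w_S x_S + \frac{\lambda}{2}\sum_{S \in \mathcal{F}} x_S^2 + \frac{\kappa}{2}\sum_{S \in \mathcal{F}} w_S x_S^2
\]
over the polytope $\{x : \sum_{S \ni e} x_S \geq 1\;\forall e,\; 0 \leq x_S \leq 1\}$, with $\lambda = \Theta(\epsilon \|w\|_1/(m\log n))$ and $\kappa = \Theta(\epsilon/\log n)$. Let $x$ be the minimizer. The rounding performs $c\log n$ iterations; in each, each $S \in \mathcal{F}$ is included independently with probability $x_S$, and $\bm{\mathcal{S}}$ is taken to be the union across iterations.

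For the approximation guarantee, plugging the unregularized LP optimum $x^*$ (which satisfies $x^*_S \leq 1$) into $F_w$ bounds the regularizer overhead by $\lambda m/2 + \kappa \|w\|_1/2 = O(\epsilon\|w\|_1/\log n)$, giving $\sum_S w_S x_S \leq \OPT + O(\epsilon\|w\|_1/\log n)$. Each iteration contributes $\sum_S w_S x_S$ in expected weight, so summing over $c\log n$ iterations yields the stated $O(\log n\cdot\OPT) + \epsilon\|w\|_1$. For feasibility, per iteration an element $e$ remains uncovered with probability at most $\prod_{S\ni e}(1-x_S) \leq e^{-1}$, so a union bound over $n$ elements and $c\log n$ iterations gives the $1 - 1/\mathrm{poly}(n)$ guarantee for a suitable $c$.

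The main technical step is the Lipschitz bound. By Lemma~\ref{lem:seeoneelement} it suffices to consider $w' = w + \delta \bm{1}_T$. Since $F_w$ has diagonal Hessian $\mathrm{diag}(\lambda + \kappa w_S)$, writing the strong-convexity inequalities for $F_w$ at $x$ (with feasible $x'$) and $F_{w'}$ at $x'$ (with feasible $x$), adding, and simplifying the difference $(F_w - F_{w'})(x') - (F_w - F_{w'})(x)$, I expect to obtain
\[
    \lambda\|x - x'\|_2^2 + \kappa\sum_{S} w_S(x_S - x'_S)^2 \;\leq\; 2\delta|x_T - x'_T|.
\]
The $\lambda$ term alone yields $|x_T - x'_T| \leq 2\delta/\lambda$; plugging this back gives $\sum_S w_S(x_S - x'_S)^2 \leq O(\delta^2/(\lambda\kappa))$. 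Cauchy-Schwarz then delivers $\sum_S w_S|x_S - x'_S| \leq \sqrt{\|w\|_1 \cdot O(\delta^2/(\lambda\kappa))} = O(\delta\epsilon^{-1}\sqrt{m}\log n)$ after substituting the chosen $\lambda, \kappa$, and hence $\sum_S |w_S x_S - w'_S x'_S| = O(\delta\epsilon^{-1}\sqrt{m}\log n)$.

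To convert this into an earth mover's distance bound, I couple the two samplings via shared per-iteration uniforms $\bm{u}_S \in [0,1]$, including $S$ iff $\bm{u}_S \leq x_S$ (resp.\ $x'_S$). The expected per-iteration $d_{\mathrm{w}}$ decomposes as $\sum_S[\min(x_S,x'_S)|w_S - w'_S| + (x_S - x'_S)_+ w_S + (x'_S - x_S)_+ w'_S]$, which is dominated by $\sum_S|w_S x_S - w'_S x'_S|$ plus a $O(\delta)$ residue, i.e.\ $O(\delta\epsilon^{-1}\sqrt{m}\log n)$. Since the iterations use independent randomness, earth mover's distance is additive across them, so multiplying by the $O(\log n)$ iterations delivers the overall Lipschitz constant $O(\epsilon^{-1}\sqrt{m}\log^2 n)$. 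The main obstacle is precisely the two-step bootstrapping: using either regularizer alone only bounds $\|x-x'\|_2$ by $O(\sqrt{\delta})$, which would not produce a Lipschitz constant at all. One must first extract a linear-in-$\delta$ bound on $|x_T - x'_T|$ from the $\lambda$ regularizer, then substitute it back so that the $\kappa$ regularizer yields a linear-in-$\delta$ bound on the $w$-weighted difference that drives the entire argument.
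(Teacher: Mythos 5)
Your overall route is the same as the paper's (a doubly regularized LP followed by $O(\log n)$ rounds of independent rounding, with per-iteration coupling), and the core estimates you sketch — the combined strong-convexity inequality, the two-step bootstrap $|x_T-x'_T|\le O(\delta/\lambda)$ then $\sum_S w_S(x_S-x'_S)^2\le O(\delta^2/(\lambda\kappa))$, Cauchy--Schwarz, and the per-iteration bound $\sum_S\bigl[\min(x_S,x'_S)|w_S-w'_S|+(x_S-x'_S)_+w_S+(x'_S-x_S)_+w'_S\bigr]$ — all match Lemmas~\ref{lem:set-cover-regularized-lp} and~\ref{lem:set-cover-rounding}. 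However, there is a genuine gap in how you treat the regularization parameter $\lambda$. You set $\lambda=\Theta(\epsilon\|w\|_1/(m\log n))$, so $\lambda$ is computed \emph{from the input weights}; when the input is $w'=w+\delta\bm{1}_T$, the algorithm uses a different value $\lambda'\neq\lambda$. Your key displayed inequality $\lambda\|x-x'\|_2^2+\kappa\sum_S w_S(x_S-x'_S)^2\le 2\delta|x_T-x'_T|$ is obtained by cancelling everything in $F_w-F_{w'}$ except the $w$-dependent terms, which is only valid when both objectives carry the \emph{same} $\lambda$ and $\kappa$ ($\kappa$ is fine, as it does not depend on $w$). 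With $\lambda'\neq\lambda$ the quantity $D=F_w(x')-F_w(x)+F_{w'}(x)-F_{w'}(x')$ picks up an extra term $\tfrac{\lambda-\lambda'}{2}(\|x'\|_2^2-\|x\|_2^2)$, bounded only by $\tfrac{|\lambda-\lambda'|}{2}m=\Theta(\epsilon\delta/\log n)$; pushing this through your bootstrap gives $|x_T-x'_T|=O(\delta/\lambda+\sqrt{\delta m/\|w\|_1})$ and ultimately $\sum_S w_S|x_S-x'_S|=\Theta(\sqrt{\delta\|w\|_1})$, i.e.\ exactly the $O(\sqrt{\delta})$ failure mode you yourself point out cannot yield any Lipschitz constant as $\delta\to 0$.

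This is the step the paper handles differently: in \Call{LPBasedSetCover}{} the parameter $\bm{\lambda}$ is drawn uniformly at random from $[L,2L]$ with $L=\epsilon\|w\|_1/(Cm\log n)$, and the two runs are coupled so that $\bm{\lambda}=\bm{\lambda}'$ except with probability $O(\delta/\|w\|_1)$ (inequality~\eqref{eq:setcovergeneral_lambda}); on the matching event Lemma~\ref{lem:set-cover-regularized-lp} applies verbatim with a common $\lambda$, and the rare mismatch event contributes only $O(\delta)$ (up to log factors) to the earth mover's distance, which is absorbed into the final bound. To repair your write-up you must either adopt this randomized-$\lambda$ coupling, or add a separate argument controlling the sensitivity of the minimizer to the change $\lambda\to\lambda'$ (e.g.\ compare the minimizers of $F_{w'}$ under $\lambda$ and $\lambda'$ via strong convexity in the $\lambda$-perturbation and apply a triangle inequality); as written, the claimed bound $\sum_S|w_Sx_S-w'_Sx'_S|=O(\delta\epsilon^{-1}\sqrt{m}\log n)$ does not follow for the algorithm you actually describe.
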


%There exists a naive algorithm with approximation ratio $O(\sqrt{n} \log n)$ and Lipschitz constant $O(m/\sqrt{n})$ (Apply the argument in Appendix~\ref{sec:naive-lipschitz-constant-bound} with $\alpha = O(\log n)$ and $\epsilon = \sqrt{n}$) \ynote{not clear this algorithm outputs $O(\sqrt{n}\log n)$-approxmation with probability $1-1/\mathrm{poly}(n)$}.
% Taking the minimum Lipschitz constant of the two algorithms, we get the following:
% \begin{corollary}
%     There exists a polynomial-time randomized algorithm for the minimum set cover problem that, given a set cover instance $(U,\mathcal{F})$ on $n$ elements and $m$ sets, a weight vector $w \in \mathcal{R}_{\geq 0}^{\mathcal{F}}$, and $\epsilon > 0$, outputs a family of sets $\bm{\mathcal{S}} \subseteq \mathcal{F}$ with the following properties:
%     \begin{itemize}
%     \item $\bm{\mathcal{S}}$ is a feasible solution with probability $1-1/\mathrm{poly}(n)$,
%     \item the total weight of $\bm{\mathcal{S}}$ satisfies $\E[\sum_{S \in \bm{\mathcal{S}}}w_S] = O(\sqrt{n}\log n \cdot \OPT)$, and
%     \item the algorithm has a Lipschitz constant $O(m^{3/4} \log n)$.
%     \end{itemize}    
% \end{corollary}

Our algorithm first solves a (regularized) LP relaxation and then rounds the obtained solution so that the whole process is Lipschitz continuous.
These two steps are explained and analyzed in Sections~\ref{subsec:set-cover-lp} and~\ref{subsec:set-cover-rounding}, respectively.
Then, we provide the proof of Theorem~\ref{thm:set-cover} in Section~\ref{subsec:set-cover-whole}.

\subsection{LP Relaxation}\label{subsec:set-cover-lp}

We consider the following natural LP relaxation for the set cover problem:
\begin{align}
    \begin{array}{lll}
        \text{minimize} & \displaystyle \sum_{S \in \mathcal{F}}w_S x_S, \\
        \text{subject to} & \displaystyle \sum_{S \in \mathcal{F}: e \in S}x_S \geq 1 & \forall e \in U, \\
        & \displaystyle \sum_{S \in \mathcal{F}} x_S \leq n, \\
        & 0 \leq x_S \leq 1 & \forall S \in \mathcal{F}.
    \end{array}\label{eq:set-cover-lp}
\end{align}
Note that the second constraint is valid because there is always an optimal solution that uses at most $n$ elements.
However, the solution to this LP relaxation may not be Lipschitz continuous with respect to the weight vector $w \in \mathbb{R}_{\geq 0}^{\mathcal{F}}$.
% To stabilize the solution, we consider adding two regularizers. 
% $H_w(x) = - \sum_{S \in \mathcal{F}} w_S x_S \log(w_S x_S)$ be the weighted entropy function.
Hence, we consider the following optimization problem with additional regularizers and constraints.
\begin{align}
    \begin{array}{lll}
        \mathrm{minimize} & \displaystyle \sum_{S \in \mathcal{F}} w_S x_S + \frac{\lambda}{2} \sum_{S \in \mathcal{F}}x_S^2 + \frac{\kappa}{2} \sum_{S \in \mathcal{F}} \left(\frac{w_S x_S}{2W}+\frac{1}{2m}\right) \log \left(\frac{w_S x_S}{2W} + \frac{1}{2m}\right)\\
        \text{subject to} & \displaystyle \sum_{S \in \mathcal{F}: e \in S}x_S \geq 1 & \forall e \in U, \\
        & \displaystyle \sum_{S \in \mathcal{F}} x_S \leq n, \\        
        & 0 \leq x_S \leq 1 & \forall S \in \mathcal{F},
    \end{array}\label{eq:set-cover-regularized-lp}
\end{align}
where $\lambda,\kappa,W>0$ are parameters.
The first regularizer is simply a squared $\ell_2$ norm of $x$.
The second regularizer is the negative entropy of the vector $(w_S x_S/(2W)+1/(2m))_{S \in \mathcal{F}}$.
We will choose $W \approx \OPTLP$, where $\OPTLP$ is the optimal value of LP~\eqref{eq:set-cover-lp}, so that $\sum_{S \in \mathcal{F}}(w_S x_S/2W + 1/(2m)) \leq 1$, and hence the negative entropy term contributes by $\log m$.
% \snote{should be $1/2m$ ?}

Let \Call{RegularizedLP}{} denote the algorithm that takes an instance $(U,\mathcal{F},w)$ of the set cover problem and three parameters $\lambda$, $\kappa$, and $W$, and then outputs the solution to~\eqref{eq:set-cover-regularized-lp}.
In the rest of this section, we show the following:
\begin{lemma}\label{lem:set-cover-regularized-lp}
    Suppose $W \geq \OPTLP$. 
    Then, \Call{RegularizedLP}{} returns a feasible solution $x \in \mathbb{R}_{\geq 0}^\mathcal{F}$ to LP~\eqref{eq:set-cover-lp} such that
    \[
        \sum_{S \in \mathcal{F}}w_S x_S \leq \OPTLP + \frac{\lambda \min\{n,m\}}{2} + \frac{\kappa \log m}{2}.
    \]
    For $w,w' \in \mathbb{R}_{\geq 0}^{\mathcal{F}}$, let $x = \Call{RegularizedLP}{U,\mathcal{F},w,\lambda,\kappa,W}$ and $x' = \Call{RegularizedLP}{U,\mathcal{F},w',\lambda,\kappa,W}$.
    If $\max\{w^\top x, w^\top x', (w')^\top x, (w')^\top x' \} \leq W$, then we have
    \[
        \sum_{S \in \mathcal{F}}|w_S x_S - w'_S x'_S| 
        = O\left(\delta (\kappa + W)\sqrt{1+\frac{1}{\kappa \lambda}}\right) \cdot \|w-w'\|_1.
    \]
\end{lemma}

% \begin{proof}[Proof of the first part of Lemma~\ref{lem:set-cover-regularized-lp}]
%     Let $x^* \in \mathbb{R}^\mathcal{F}$ be the optimal solution to LP~\eqref{eq:set-cover-lp}.
%     Consider the vector $x \in \mathbb{R}^\mathcal{F}$ such that $x_S = x^*_S + W/(2 m w_S)$ for every $S \in \mathcal{F}$ if $w_S > 0$ \ynote{fix this part}.
%     We check the feasibility of $x$.
%     Clearly, we have $\sum_{S \in \mathcal{F}:e \in S}x_S \geq \sum_{S \in \mathcal{F}:e \in S}x^*_S \geq 1$ for every $e \in U$.
%     Also, we have $\sum_{S \in \mathcal{F}}w_S x_S \leq \sum_{S \in \mathcal{F}}w_S(x^*_S + W/(2 m w_S)) \leq \OPTLP + W/2 \leq W$.
%     Clearly, we have $w_S x_S \geq W/(2m)$.
%     Hence $x$ is a feasible solution to~\eqref{eq:set-cover-regularized-lp}.
%     The objective value of $x$ is 
%     \[
%         \sum_{S \in \mathcal{F}} w_S x_S + \frac{\lambda}{2} \sum_{S \in \mathcal{F}}x_S^2 - \frac{\kappa}{2} \sum_{S \in \mathcal{F}} \frac{w_S x_S}{W} \log \frac{w_S x_S}{W}
%         \leq W + \frac{\lambda m}{2} + \frac{\kappa \log m}{2}.
%         \qedhere
%     \]
% \end{proof}

\begin{proof}[Proof of the first part of Lemma~\ref{lem:set-cover-regularized-lp}]
    Let $x^* \in \mathbb{R}^\mathcal{F}$ be the optimal solution to LP~\eqref{eq:set-cover-lp}.
    Then, the objective function is bounded as
    \begin{align*}
        & \sum_{S \in \mathcal{F}} w_S x_S \\
        & \leq 
        \sum_{S \in \mathcal{F}} w_S x_S + \frac{\lambda}{2} \sum_{S \in \mathcal{F}}x_S^2 + \frac{\kappa}{2} \sum_{S \in \mathcal{F}} \left(\frac{w_S x_S}{2W}+\frac{1}{2m}\right) \log \left(\frac{w_S x_S}{2W} + \frac{1}{2m}\right) + \frac{\kappa \log m}{2} \\
        & \leq \sum_{S \in \mathcal{F}} w_S x^*_S + \frac{\lambda}{2} \sum_{S \in \mathcal{F}}(x^*_S)^2 + \frac{\kappa}{2} \sum_{S \in \mathcal{F}} \left(\frac{w_S x^*_S}{2W}+\frac{1}{2m}\right) \log \left(\frac{w_S x^*_S}{2W} + \frac{1}{2m}\right) + \frac{\kappa \log m}{2} \\
        & \leq \OPTLP + \frac{\lambda \min\{n,m\}}{2} + \frac{\kappa \log m}{2}.
        \qedhere
    \end{align*}
\end{proof}
%The approximation guarantee of \Call{RegularizedLP}{} is evident (we can use the optimal solution to LP~\eqref{eq:set-cover-lp}).
Henceforth, we focus on analyzing the Lipschitz constant of \Call{RegularizedLP}{}.
Fix $\kappa,\lambda,W >0$, and let $f_w: \mathbb{R}_{\geq 0}^\mathcal{F} \to \mathbb{R}$ denote the objective function of~\eqref{eq:set-cover-regularized-lp}, parameterized by the weight vector $w \in \mathbb{R}_{\geq 0}^{\mathcal{F}}$.

Fix a weight vector $w \in \mathbb{R}_{\geq 0}^{\mathcal{F}}$ and a set $S \in \mathcal{F}$.
Let $w'\in \mathbb{R}_{\geq 0}^{\mathcal{F}}$ be a vector obtained from $w$ by increasing $w_S$ by $\delta > 0$.
By a similar argument to the proof of Lemma~\ref{lem:seeoneelement}, it suffices to show that the claim holds for these particular $w$ and $w'$.
Let $x$ and $x'$ be the optimal solutions to~\eqref{eq:set-cover-regularized-lp} for the instances $(U,\mathcal{F},w)$ and $(U,\mathcal{F},w')$, respectively.
We consider the following quantity:
\[
    D := f_{w}(x') - f_{w}(x) + f_{w'}(x) - f_{w'}(x') \geq 0,
\]
where the inequality holds because $x$ minimizes $f_w$ and $x'$ minimizes $f_{w'}$.
% \snote{need to explain why $D\geq 0$}

For notational simplicity, for a vector $x \in \mathbb{R}^{\mathcal{F}}$, define $H(x) = \sum_{S \in \mathcal{F}}x_S \log x_S$ as the negative entropy function.
Also for two vectors $x,y \in \mathbb{R}^{\mathcal{F}}$, let $x \circ y$ be their Hadamard product, i.e., $(x\circ y)_S = x_S y_S$ for any $S \in \mathcal{F}$.
The following gives an upper bound on $D$:
\begin{lemma}\label{lem:D-upper-bound}
%    Suppose $w_S x_S \geq \epsilon U/m$ and $w'_S x'_S \geq \epsilon U/m$ for some $\epsilon > 0$.
    We have
    \[
        D \leq \delta (x_S - x'_S) \left( 1 + \frac{\kappa}{2W} \right).
    \]
    In particular, $x_S \geq x'_S$ holds because $D \geq 0$.
\end{lemma}
\begin{proof}
    We first note that 
    \begin{align}
        & D = f_w(x') - f_w(x) + f_{w'}(x) - f_{w'}(x') \nonumber \\
        & = w_S x'_S - w_S x_S + w'_S x_S - w'_S x'_S \nonumber \\
        & \qquad + \frac{\kappa}{2} \left( H\left(\frac{w \circ x'}{2W} + \frac{\bm{1}}{2m}\right) - H\left(\frac{w \circ x}{2W} + \frac{\bm{1}}{2m}\right) + H\left(\frac{w' \circ x}{2W} + \frac{\bm{1}}{2m}\right) - H\left(\frac{w' \circ x'}{2W} + \frac{\bm{1}}{2m}\right)\right) \nonumber \\
        & = \delta (x_S - x'_S) \nonumber \\
        & \qquad + \frac{\kappa}{2} \left( H\left(\frac{w \circ x'}{2W} + \frac{\bm{1}}{2m}\right) - H\left(\frac{w \circ x}{2W} + \frac{\bm{1}}{2m}\right) + H\left(\frac{w' \circ x}{2W} + \frac{\bm{1}}{2m}\right) - H\left(\frac{w' \circ x'}{2W} + \frac{\bm{1}}{2m}\right)\right)
        \label{eq:D-upper-bound}
    \end{align}

    Now, we evaluate the second term.
    Let $h(\hat{w}_S,\hat{x}) = H((w - w_S \bm{1}_S + \hat{w}_S \bm{1}_S)\circ \hat{x}/(2W) +\bm{1}/(2m))$.
    Then, by applying the mean-value theorem twice, we can show that there exist $w_S \leq \tilde{w}_S \leq w'_S$ and a convex combination $\tilde{x} \in \mathbb{R}^\mathcal{F}$ of $x$ and $x'$ such that 
    \begin{align*}
        & 
        H\left(\frac{w \circ x'}{2W} + \frac{\bm{1}}{2m}\right) - H\left(\frac{w \circ x}{2W} + \frac{\bm{1}}{2m}\right) + H\left(\frac{w' \circ x}{2W} + \frac{\bm{1}}{2m}\right) - H\left(\frac{w' \circ x'}{2W} + \frac{\bm{1}}{2m}\right) \\
        & = h(w_S,x')- h(w_S,x) + h(w'_S,x) - h(w'_S,x') \\
        & = (w'_S - w_S) \frac{\partial}{\partial \hat{w}_S} (h(\cdot,x) - h(\cdot,x'))(\tilde{w}_S) \\
        & = (w'_S - w_S) \cdot (x-x')^\top \frac{\partial^2 h}{\partial \hat{w}_S \partial \hat{x}}(\tilde{w}_S,\tilde{x}) \\
         %\tag{\snote{$\partial w_S$ is a bit confusing because $w_S$ is constant}}\\
        & = (w'_S - w_S)(x_S-x'_S) \frac{1}{2W} \left(2 - \frac{W}{W+m\tilde{w}_S\tilde{x}} + \log\left(\frac{1}{2m} + \frac{\tilde{w}_S \tilde{x}_S}{2 W}\right) \right) \\
        % \tag{\snote{is this differenciation correct?}}\\
        & = \frac{\delta}{2W}(x_S-x'_S) \left(2 - \frac{ W}{ W + m \tilde{w}_S \tilde{x}_S} + \log\left(\frac{1}{2m} + \frac{\tilde{w}_S \tilde{x}_S}{2 W}\right) \right).  
    \end{align*}
    The third factor is an increasing function in $\tilde{w}_S \tilde{x}_S$, and hence it can be bounded from above by replacing $\tilde{w}_S \tilde{x}_S$ with $W$.
    Then, we have
    \begin{align*}
        & 
        H\left(\frac{w \circ x'}{W} + \frac{\bm{1}}{2m}\right) - H\left(\frac{w \circ x}{W} + \frac{\bm{1}}{2m}\right) + H\left(\frac{w' \circ x}{W} + \frac{\bm{1}}{2m}\right) - H\left(\frac{w' \circ x'}{W} + \frac{\bm{1}}{2m}\right) \\
        & \leq \frac{\delta}{2W}(x_S-x'_S) \left(2 - \frac{1}{1 + m} + \log\frac{1 + m}{2 m}\right) \\
        & \leq \frac{\delta (x_S-x'_S)}{W}.
    \end{align*}
    Combining this with~\eqref{eq:D-upper-bound}, the the claim holds.
\end{proof}

Now using this upper bound, we analyze the difference between $x$ and $x'$.
\begin{lemma}\label{lem:x-difference}
    We have
    \[
        x_S - x'_S \leq \frac{\delta}{\lambda} \left( 1 + \frac{\kappa}{2W} \right).
    \]
\end{lemma}
\begin{proof}
    As $x \mapsto \|x\|_2^2/2$ is $1$-strongly convex with respect to $\|\cdot\|_2$ and so is the objective function of~\eqref{eq:set-cover-regularized-lp}, we have
%    \snote{need to remark about strong convexity of entropy}, we have
    \begin{align*}
        f_w(x') - f_w(x)
        & \geq \langle \nabla f_w(x), x'-x \rangle + \frac{\lambda}{2} \|x' - x\|_2^2
        \geq \frac{\lambda}{2} \|x' - x\|_2^2, \text{ and}\\
        f_{w'}(x) - f_{w'}(x')
        & \geq
        \langle \nabla f_{w'}(x), x - x' \rangle + \frac{\lambda}{2} \|x - x'\|_2^2 \geq \frac{\lambda}{2} \|x' - x\|_2^2.
    \end{align*}
    Hence, we have
    \begin{align}
        D \geq \lambda \|x - x'\|_2^2
        \label{eq:d-lower-bound-1}
    \end{align}

    Combining Lemma~\ref{lem:D-upper-bound} and~\eqref{eq:d-lower-bound-1}, we obtain
    \begin{align*}
        \|x' - x\|_2^2 \leq \frac{\delta (x_S - x'_S)}{\lambda} \left( 1 + \frac{\kappa}{2W} \right).
    \end{align*}
    In particular, this implies
    \[
        x_S - x'_S \leq \frac{\delta}{\lambda} \left( 1 + \frac{\kappa}{2W} \right). 
        \qedhere
    \]
\end{proof}
Combining Lemmas~\ref{lem:D-upper-bound} and~\ref{lem:x-difference}, the following holds.
\begin{corollary}\label{cor:D-upper-bound}
    We have
    \[
        D \leq \frac{\delta^2}{\lambda} \left( 1 + \frac{\kappa}{2W} \right)^2.
    \]
\end{corollary}

\begin{proof}[Proof of the second part of Lemma~\ref{lem:set-cover-regularized-lp}]
    Note that  $x \mapsto H(x)$ is $1$-strongly convex with respect to $\|\cdot\|_1$, given that $x \geq 0$ and $\|x\|\leq 1$.
    Then, using the fact that $\max\{w^\top x, w^\top x', (w')^\top x, (w')^\top x' \} \leq W$, we have the following:
    \begin{align*}
        & f_w(x') - f_w(x) 
        \geq \langle \nabla f_w(x), x'-x \rangle + \frac{\kappa}{2} \left\|\left(\frac{w \circ x'}{W} + \frac{\bm{1}}{2m}\right) - \left(\frac{w\circ x}{W} + \frac{\bm{1}}{2m}\right)\right\|_1^2 \\
        %\tag{\snote{first term in the norm should be $w\circ x'/W$}}\\
        & \geq \frac{\kappa}{2} \left\|\frac{w \circ x'}{W} - \frac{w \circ x}{W}\right\|_1^2 = \frac{\kappa}{2W^2} \left\|w \circ (x' - x)\right\|_1^2  , \text{ and}\\
        & f_{w'}(x) - f_{w'}(x')
        \geq
        \langle \nabla f_{w'}(x), x - x' \rangle + \frac{\kappa}{2} \left\|\left(\frac{w'\circ x}{W} + \frac{\bm{1}}{2m}\right) - \left(\frac{w'\circ x'}{W} + \frac{\bm{1}}{2m}\right)\right\|_1^2 \\
        & \geq \frac{\kappa}{2} \left\|\frac{w' \circ x'}{W} - \frac{w' \circ x}{W}\right\|_1^2
        = \frac{\kappa}{2W^2} \left\|w' \circ (x' - x)\right\|_1^2.
    \end{align*}
    Hence, we have
    \begin{align}
        & D \geq \frac{\kappa}{2W^2} \left(\left\|w \circ (x' - x)\right\|_1^2 + \left\|w' \circ (x' - x)\right\|_1^2\right) \nonumber \\
        & = \frac{\kappa}{2W^2} \left\|w' \circ x' - w \circ x - \delta x'_S \bm{1}_S \right\|_1^2 + \frac{\kappa}{2W^2} \left\|w' \circ x' - w \circ x + \delta x_S \bm{1}_S\right\|_1^2 \nonumber \\
        & \geq \frac{\kappa}{2W^2} \left(\|w' \circ x' - w \circ x\|_1^2 - 2\|w' \circ x' - w \circ x\|_1\| \delta x'_S \bm{1}_S \|_1 +  \|\delta x'_S \bm{1}_S \|_1^2   \right) \nonumber \\
        & \qquad + \frac{\kappa}{2W^2}\left(\|w' \circ x' - w \circ x\|_1^2 - 2 \|w' \circ x' - w \circ x\|_1 \|\delta x_S \bm{1}_S\|_1  + \|\delta x_S \bm{1}_S\|_1^2\right) \tag{by the inverse triangle inequality} \\
%        & \geq \frac{\kappa}{W^2}\|w' \circ x' - w \circ x\|_1^2 -  \frac{\delta \kappa}{W^2}\|w' \circ x' - w \circ x\|_1   \left( x'_S +  x_S  \right) \nonumber \\
        & \geq \frac{\kappa}{W^2}\|w' \circ x' - w \circ x\|_1^2 -  \frac{2 \delta \kappa}{W^2}\|w' \circ x' - w \circ x\|_1. \label{eq:d-lower-bound-2}
    \end{align}

    Combining Corollary~\ref{cor:D-upper-bound} and~\eqref{eq:d-lower-bound-2}, we have
    \[
        \frac{\kappa}{W^2}\|w' \circ x' - w \circ x\|_1^2 -  \frac{2 \delta \kappa}{W^2}\|w' \circ x' - w \circ x\|_1 \leq \frac{\delta^2}{\lambda} \left( 1 + \frac{\kappa}{2W} \right)^2
    \]
    Solving this inequality for $\|w' \circ x' - w \circ x\|_1^2$, we obtain
    \begin{align*}
        & \|w' \circ x' - w \circ x\|_1 
        \leq 
        \delta + 
        \frac{\delta}{2} \sqrt{\frac{(2 W + \kappa)^2 + 
         4 \kappa \lambda}{\kappa \lambda}}
        =  
        O\left(\delta (\kappa + W)\sqrt{1+\frac{1}{\kappa \lambda}}\right)
        %\frac{\delta}{4} \left(4 + \sqrt{\frac{16 (W^2 + \kappa \lambda) + \kappa \log m (8 W + \kappa \log m)}{\kappa \lambda}}\right) \\
        %& = O\left(  \delta \left(1 + \sqrt{\frac{1}{\lambda}\left( \frac{W^2}{\kappa} + W \log m+  \kappa \log^2 m\right)}\right)\right).
    \end{align*}
    Hence, the claim holds.    
\end{proof}

% \begin{proof}[Proof of the second part of Lemma~\ref{lem:set-cover-regularized-lp}]
%     By Lemma~\ref{lem:x-difference-in-l1}, we have
%     \begin{align*}
%         \sum_{T \in \mathcal{F}} |w_T x_T - w'_T x'_T| \leq \delta U \left( 1 + \frac{\kappa \log(\epsilon^{-1}m)}{2U} \right) \frac{1}{\sqrt{\kappa \lambda}}
%     \end{align*}
%     Hence, the claim holds.
% \end{proof}

\subsection{Rounding}\label{subsec:set-cover-rounding}

\begin{algorithm}[t!]
    \caption{Rounding algorithm for the set cover problem}\label{alg:set-cover-rounding}
    \Procedure{\emph{\Call{RoundingSC}{$U,\mathcal{F},x$}}}{
        \KwIn{A set $U$, $\mathcal{F} \subseteq 2^U$, and a vector $x \in [0,1]^{\mathcal{F}}$.}
        $\bm{\mathcal{S}} \gets \emptyset$\;
        $K \gets \Theta(\log n)$\;
        \For{$K$ times}{
            \For{$S \in \mathcal{F}$}{
                Add $S$ to $\bm{\mathcal{S}}$ with probability $x_S$.
            }
        }
        \Return $\bm{\mathcal{S}}$.
    }
\end{algorithm}
Next, we design a Lipschitz continuous rounding algorithm for LP~\eqref{eq:set-cover-lp}.
Our rounding algorithm, \Call{RoundingSC}{}, given in Algorithm~\ref{alg:set-cover-rounding} is standard, and our focus is to show that it has a stability property that can help show the Lipschitz continuity of our algorithm for the set cover problem.
\begin{lemma}\label{lem:set-cover-rounding}
    Let $(U,\mathcal{F})$ be a set cover instance on $n$ elements and $w \in \mathbb{R}_{\geq 0}^{\mathcal{F}}$ be a weight vector.
    Let $\bm{\mathcal{S}} = \Call{RoundingSC}{U,\mathcal{F},x}$, where $x \in \mathbb{R}^{\mathcal{F}}$ is a feasible solution to LP~\eqref{eq:set-cover-lp}.
    Then, $\bm{\mathcal{S}}$ is a feasible set cover with probability $1-1/\mathrm{poly}(n)$, and
    \[
        \E\left[\sum_{S \in \bm{\mathcal{S}}}w_S\right] = O\left(\log n \cdot \sum_{S \in \mathcal{F}}w_S x_S \right).
    \]

    Let $w,w' \in \mathbb{R}_{\geq 0}^{\mathcal{F}}$ be weight vectors and     $x,x' \in [0,1]^{\mathcal{F}}$ be feasible solutions to LP~\eqref{eq:set-cover-lp}.
    For $\bm{\mathcal{S}} = \Call{RoundingSC}{U,\mathcal{F},x}$ and $\bm{\mathcal{S}}' = \Call{RoundingSC}{U,\mathcal{F},x'}$, we have   \[
        \EMW((\bm{\mathcal{S}},w),(\bm{\mathcal{S}}',w')) = O\left(\left( \|w-w'\|_1 +\sum_{S \in \mathcal{F}} |w_S x_S  - w'_S x'_S |\right) \log n\right).
    \]
\end{lemma}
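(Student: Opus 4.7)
The plan is to verify the three assertions in order, treating the first two as standard consequences of independent rounding and reserving the real work for the earth-mover bound.

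For feasibility, fix an element $e\in U$. In a single iteration, the probability that no set containing $e$ is added is $\prod_{S\ni e}(1-x_S)\le \exp\!\bigl(-\sum_{S\ni e}x_S\bigr)\le e^{-1}$, using LP feasibility. Thus after $K=c\log n$ independent iterations, $e$ is uncovered with probability at most $e^{-K}=n^{-c}$, and a union bound over the $n$ elements yields failure probability $1/\mathrm{poly}(n)$ for sufficiently large constant $c$. For the expected weight, for each $S\in\mathcal{F}$ we have $\Pr[S\in\bm{\mathcal{S}}]=1-(1-x_S)^K\le Kx_S$, so $\E[\sum_{S\in\bm{\mathcal{S}}}w_S]\le K\sum_{S}w_S x_S=O(\log n\cdot \sum_S w_S x_S)$.

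For the earth-mover bound, I would use a natural shared-randomness coupling: for every iteration $k\in[K]$ and every $S\in\mathcal{F}$ draw a single $\bm{u}_{k,S}\sim\mathcal{U}([0,1])$ used by both executions, so that $S\in\bm{\mathcal{S}}$ iff $\min_k \bm{u}_{k,S}\le x_S$ and $S\in\bm{\mathcal{S}}'$ iff $\min_k \bm{u}_{k,S}\le x'_S$. Fix $S$ and assume without loss of generality $x_S\ge x'_S$ (the reverse case is symmetric). Then $\Pr[S\in\bm{\mathcal{S}}'\setminus\bm{\mathcal{S}}]=0$, $\Pr[S\in\bm{\mathcal{S}}\cap\bm{\mathcal{S}}']\le 1$, and
\[
\Pr[S\in\bm{\mathcal{S}}\setminus\bm{\mathcal{S}}']=(1-x'_S)^K-(1-x_S)^K\le K(x_S-x'_S),
\]
by the mean value theorem applied to $t\mapsto(1-t)^K$ on $[0,1]$. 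Using the definition of $d_{\mathrm{w}}$, the expected contribution of $S$ is therefore at most $|w_S-w'_S|+K\,w_S(x_S-x'_S)$.

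The main obstacle is turning the $Kw_S(x_S-x'_S)$ term into something expressed in the right quantities $\|w-w'\|_1$ and $\sum_S|w_Sx_S-w'_Sx'_S|$; the key algebraic step is the decomposition
\[
w_S(x_S-x'_S)=(w_Sx_S-w'_Sx'_S)+x'_S(w'_S-w_S)\le |w_Sx_S-w'_Sx'_S|+|w_S-w'_S|,
\]
using $x'_S\le 1$. Hence the per-$S$ contribution is bounded by $(K+1)|w_S-w'_S|+K|w_Sx_S-w'_Sx'_S|$, and summing over $\mathcal{F}$ gives
\[
\EMW((\bm{\mathcal{S}},w),(\bm{\mathcal{S}}',w'))\le O(\log n)\Bigl(\|w-w'\|_1+\sum_{S\in\mathcal{F}}|w_Sx_S-w'_Sx'_S|\Bigr),
\]
as required. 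The only subtlety worth flagging is that $\bm{\mathcal{S}}$ is the union over iterations rather than a single-iteration set, so one must work with $\min_k \bm{u}_{k,S}$ and the smooth bound $(1-x')^K-(1-x)^K\le K(x-x')$, rather than naively summing a per-iteration EMW inequality.
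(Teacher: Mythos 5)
Your proof is correct, and its core is the same as the paper's: the decisive algebraic step, $w_S|x_S-x'_S|\le |w_Sx_S-w'_Sx'_S|+|w_S-w'_S|\,x'_S$ with $x'_S\le 1$, is exactly the inequality the paper isolates as its equation (4.4)-style bound, and the feasibility and expected-weight parts are the standard arguments the paper also uses. The only real difference is mechanical: you couple all $K$ iterations at once through shared uniforms $\bm{u}_{k,S}$ and control $\Pr[S\in\bm{\mathcal{S}}\,\triangle\,\bm{\mathcal{S}}']$ via the mean-value bound $(1-x'_S)^K-(1-x_S)^K\le K(x_S-x'_S)$, whereas the paper first writes $\EMW((\bm{\mathcal{S}},w),(\bm{\mathcal{S}}',w'))\le\sum_{k\in[K]}\EMW((\bm{\mathcal{S}}_k,w),(\bm{\mathcal{S}}'_k,w'))$ and bounds each iteration separately; both routes give the same $O(K)=O(\log n)$ factor. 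Your closing remark that one \emph{must} avoid ``naively summing a per-iteration EMW inequality'' is too strong: that summation is legitimate (take the product of per-iteration couplings, which has the right marginals by independence, and observe that any set in the symmetric difference of the unions already contributes its weight in some single iteration), and it is precisely what the paper does. Your monotone-coupling variant is equally valid and arguably slightly cleaner, since it makes one of the two one-sided difference events have probability zero, but flag the iteration-summing route as an alternative rather than an error.
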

\begin{proof}
    We first show that $\bm{\mathcal{S}}$ is feasible with high probability.
    The probability that an element $e \in U$ is not covered is
    \begin{align*}
        & \prod_{S \in \mathcal{F} : e \in S} (1 - x_S)^{K}
        \leq \prod_{S \in \mathcal{F} : e \in S} \exp(-K x_S) 
        = \exp\left(-K \sum_{S \in \mathcal{F} : e \in S} x_S\right) \\
        & \leq \exp\left(-K  \sum_{S \in \mathcal{F} : e \in S} x_S\right)
        \leq  \exp(-K )
        \leq \frac{1}{\mathrm{poly}(n)}    
    \end{align*}
    by setting the hidden constant in $K$ to be large enough.
    By a union bound over elements in $U$, the probability that the output is infeasible is at most $1/\mathrm{poly}(n)$.

    It is clear that the expected total weight of $\bm{\mathcal{S}}$ is as stated.

    Now, we consider the Lipschitz continuity.
    For $S \in \mathcal{F}$, let $\delta_S = w'_S - w_S$.
    For $k \in [K]$, let $\bm{\mathcal{S}}_k$ and $\bm{\mathcal{S}}'_k$ be the family of sets added to $\bm{\mathcal{S}}$ and $\bm{\mathcal{S}}'$, respectively, at step $k$.
    We note that
    \begin{align}
        \sum_{S \in \mathcal{F}} |w_S x_S - w'_S x'_S|
        = \sum_{S \in \mathcal{F}} |w_S x_S - (w_S + \delta_S) x'_S|
        \geq \sum_{S \in \mathcal{F}} \left(|x_S - x'_S|w_S - |\delta_S| x'_S\right).
        \label{eq:set-cover-rounding-1}    
    \end{align}
    Then, we have
    \begin{align*}
        & \EMW((\bm{\mathcal{S}},w),(\bm{\mathcal{S}}',w'))
        \leq \sum_{k \in [K]} \EMW((\bm{\mathcal{S}}_k,w),(\bm{\mathcal{S}}'_k,w')) \\
        & \leq \sum_{k \in [K]} \sum_{S \in \mathcal{F}} \left(\min\{x_S,x'_S\} \cdot |w_S - w'_S| + \max\{x_S-x'_S,0\} \cdot w_S + \max\{x'_S - x_S,0\} \cdot w'_S\right) \\
        & = K  \sum_{S \in \mathcal{F}} \Bigl(\min\{x_S,x'_S\} |\delta_S| + \max\{x_S-x'_S,0\}w_S + \max\{x'_S - x_S,0\}(w_S+\delta_S) \Bigr)  \\
        & \leq K \sum_{S \in \mathcal{F}} \Bigl(\max\{x_S,x'_S\} |\delta_S| + |x_S-x'_S| w_S \Bigr) \\
        & \leq K \sum_{S \in \mathcal{F}} \Bigl(\max\{x_S,x'_S\} |\delta_S| + \bigl(|w_S x_S - w'_S x'_S| + x'_S |\delta_S|\bigr) \Bigr)  \tag{by~\eqref{eq:set-cover-rounding-1}} \\
        & \leq K \sum_{S \in \mathcal{F}} \Bigl(2|\delta_S| + |w_S x_S - w'_S x'_S| \Bigr)  \\
        & = O\left(K \|w - w'\|_1 + K \sum_{S \in \mathcal{F}} |w_S x_S - w'_S x'_S| \right).
        \qedhere
    \end{align*}
\end{proof}

\subsection{Putting Things Together}\label{subsec:set-cover-whole}

\begin{algorithm}[t!]
    \caption{Lipschitz continuous algorithm for the set cover problem}\label{alg:set-cover-general}
    \Procedure{\emph{\Call{LPBasedSetCover}{$U,\mathcal{F},w$}}}{
        \KwIn{A set $U$, $\mathcal{F} \subseteq 2^U$ and a weight vector $w \in \mathbb{R}_{\geq 0}^{\mathcal{F}}$.}
%       $C \gets$ a large enough constant\;
        $c \gets $ an arbitrarily small constant\;
        $\bm{\lambda} \gets $ uniformly at random from $[(\OPTLP+c) /\min\{n,m\},2(\OPTLP+c)/\min\{n,m\}]$\;
        $\bm{\kappa} \gets $ uniformly at random from $[(\OPTLP+c)/\log m,2(\OPTLP+c)/\log m]$\;
        $\bm{W} \gets$ uniformly at random from $[3(\OPTLP + 2c),6(\OPTLP+2c)]$\;
        $\bm{x} \gets \Call{RegularizedLP}{U,\mathcal{F},w,\bm{\lambda},\bm{\kappa},\bm{W}}$\;
        $\bm{\mathcal{S}} \gets \Call{RoundingSC}{U,\mathcal{F},\bm{x}}$\;
        \Return $\bm{\mathcal{S}}$.
    }
\end{algorithm}

Our algorithm, \Call{LPBasedSetCover}{}, is given in Algorithm~\ref{alg:set-cover-general}.
% \ynote{seems we can choose $C = 1$.}

\begin{proof}[Proof of Theorem~\ref{thm:set-cover}]
    We first consider the approximation ratio.
    By Lemma~\ref{lem:set-cover-regularized-lp}, $\bm{x}$ satisfies
    \begin{align}
        \sum_{S \in \mathcal{F}} w_S \bm{x}_S \leq \OPTLP + \frac{\bm{\lambda} \min\{n,m\}}{2} + \frac{\bm{\kappa} \log m}{2} 
        \leq 3(\OPTLP+c).
        \label{eq:set-cover-objective}
    \end{align}
    Then by Lemma~\ref{lem:set-cover-rounding}, we have
    \begin{align}
        \E\left[\sum_{S \in \bm{\mathcal{S}}}w_S \right]
        = O(\log n \cdot (\OPTLP+c)) = O(\log n \cdot (\OPT+c)).
        \label{eq:set-cover-rounding-expected-value}
    \end{align}

    Next, we analyze the Lipschitz continuity.
    Fix a weight vector $w \in \mathbb{R}_{\geq 0}^{\mathcal{F}}$ and a set $S \in \mathcal{F}$.
    Let $w'\in \mathbb{R}_{\geq 0}^{\mathcal{F}}$ be a vector obtained from $w$ by increasing $w_S$ by $0 < \delta < c$.
    By Lemma~\ref{lem:seeoneelement}, it suffices to show that the claim holds for these particular $w$ and $w'$.

    We consider the joint distribution $\mathcal{L}$ between $\bm{\lambda}$ and $\bm{\lambda}'$ defined as follows:
    For each $\lambda$, we transport the probability mass for $\bm{\lambda} = \lambda$ to that for $\bm{\lambda}' = \lambda$ as far as possible.
    The remaining mass is transported arbitrarily.
    Then, we have
    \begin{align}
        & \Pr_{(\lambda,\lambda') \sim \mathcal{L}}[\lambda \neq \lambda']
        \leq 1 - \Pr[\lambda' \in [\OPTLP/\min\{n,m\},2\OPTLP/\min\{n,m\}]] \nonumber \\
        & \leq \frac{2(\OPTLP'+c)/\min\{n,m\} - 2(\OPTLP+c)/\min\{n,m\}}{2(\OPTLP'+c)/\min\{n,m\} - (\OPTLP'+c)/\min\{n,m\}} \nonumber \\
        & \leq \frac{2(\OPTLP+c+\delta) - 2(\OPTLP+c)}{2(\OPTLP'+c) - (\OPTLP'+c)} 
        = O\left(\frac{\delta}{\OPTLP'+c}\right).        \label{eq:setcovergeneral_lambda}
    \end{align}
    Let $\mathcal{K}$ (resp., $\mathcal{W}$) be a joint distribution between $\bm{\kappa}$ and $\bm{\kappa}'$ (resp., $\bm{W}$ and $\bm{W}'$) defined as with $\mathcal{L}$.
    Then, a similar calculation shows that
    \begin{align}
        & \Pr_{(\kappa,\kappa') \sim \mathcal{K}}[\kappa \neq \kappa'] 
        = O\left(\frac{\delta}{\OPTLP'+c}\right),\label{eq:setcovergeneral_kappa} \\
        & \Pr_{(W,W') \sim \mathcal{W}}[W \neq W'] 
        = O\left(\frac{\delta}{\OPTLP'+c}\right). \label{eq:setcovergeneral_W}
    \end{align}
    Let $\mathcal{D}$ be the product distribution of $\mathcal{L}$, $\mathcal{K}$, and $\mathcal{W}$.
    By~\eqref{eq:setcovergeneral_lambda},~\eqref{eq:setcovergeneral_kappa}, and~\eqref{eq:setcovergeneral_W} and a union bound, we have
    \begin{align}
         \Pr_{((\lambda,\kappa,W),(\lambda',\kappa',W')) \sim \mathcal{D}}[\lambda \neq \lambda \vee \kappa \neq \kappa' \vee W \neq W' ] = O\left(\frac{\delta}{\OPTLP'+c}\right). \label{eq:setcovergeneral_params}
    \end{align}

    For $\lambda,\kappa,W>0$, let $\bm{\mathcal{S}}_{\lambda,\kappa,W}$ denote the output family of sets $\bm{\mathcal{S}}$ when $\bm{\lambda}=\lambda$, $\bm{\kappa}=\kappa$, and $\bm{W}=W$.
    From~\eqref{eq:set-cover-objective} and the range of $\bm{W}$, we have
    \begin{align*}
        & \sum_{S \in \mathcal{F}} w_S x_S \leq 3(\OPTLP + c)\leq \bm{W},\\
        & \sum_{S \in \mathcal{F}} w_S x'_S \leq \sum_{S \in \mathcal{F}} w'_S x'_S \leq 
        3(\OPTLP'+c) \leq 3(\OPTLP+2c) \leq \bm{W} \\
        & \sum_{S \in \mathcal{F}} w'_S x_S \leq \sum_{S \in \mathcal{F}} w_S x_S + c \leq 3(\OPTLP + c) + c \leq \bm{W},\\
        & \sum_{S \in \mathcal{F}} w'_S x'_S \leq 3(\OPTLP'+c) \leq 3(\OPTLP+2c) \leq \bm{W}.
    \end{align*}
    Hence, the requirement of Lemma~\ref{lem:set-cover-regularized-lp} is satisfied.
    Then, we have
    \begin{align*}
        & \EMW((\bm{\mathcal{S}},w),(\bm{\mathcal{S}}',w')) \\
        & = \E_{((\lambda,\kappa,W),(\lambda',\kappa',W')) \sim \mathcal{D}} \EMW((\bm{\mathcal{S}}_{\lambda,\kappa,W},w),(\bm{\mathcal{S}}'_{\lambda',\kappa',W'},w'))  \\
        & = \E_{((\lambda,\kappa,W),(\lambda',\kappa',W')) \sim \mathcal{D}} \left[\EMW((\bm{\mathcal{S}}_{\lambda,\kappa,W},w),(\bm{\mathcal{S}}'_{\lambda',\kappa',W'},w')) \mid \lambda = \lambda' \wedge \kappa = \kappa' \wedge W = W'\right] \\
        & \qquad \times \Pr_{\mathcal{D}}[\lambda = \lambda \vee \kappa = \kappa' \vee W = W' ] \\
        & \qquad + \Pr_{\mathcal{D}}[\lambda \neq \lambda \vee \kappa \neq \kappa' \vee W \neq W' ] \cdot \log n \cdot  \max\{\OPTLP+c,\OPTLP'+c\} \tag{by~\eqref{eq:set-cover-rounding-expected-value}} \\
        & = \max_{\lambda,\kappa,W} \EMW((\bm{\mathcal{S}}_{\lambda,\kappa,W},w),(\bm{\mathcal{S}}'_{\lambda,\kappa,W},w')) + O\left(\delta \log n\right) \tag{by~\eqref{eq:setcovergeneral_params}} \\
        & = \max_{\lambda,\kappa,W} O\left(\left(\|w-w'\|_1 + \sum_{S \in \mathcal{F}} |w_v x_v - w'_v x'_v|\right)  \log n  \right) + O(\delta \log n) \tag{by Lemma~\ref{lem:set-cover-rounding}} \\        
        & = \max_{\lambda,\kappa,W} O\left(  \delta \left(1 + (\kappa + W)\sqrt{1+\frac{1
         }{\kappa \lambda}} \right) \log n\right)   + O(\delta \log n) \tag{by Lemma~\ref{lem:set-cover-regularized-lp}} \\
        & = O\left(  \delta \left(1 + \left(\frac{\OPTLP+c}{\log m} + (\OPTLP+c)\right)\sqrt{1+\frac{\min\{n,m\} \log m}{(\OPTLP+c)^2}} \right) \log n\right) \\
        & \qquad + O(\delta \log n)  \\
        & = O\left(\delta \sqrt{\min\{n,m\}\log m }\log n\right).
        \qedhere
    \end{align*}
\end{proof}

%!TEX root=./main.tex

\section{LP-based $(f+\epsilon)$-Approximation Algorithm for Set Cover}\label{sec:set-cover-f}

In this section, we design a Lipschitz-continuous algorithm for the minimum set cover problem with approximation ratio of $f+\epsilon$, where $f$ is the maximum frequency of an element.
Specifically, we show the following:
\begin{theorem}\label{thm:set-cover-freq}
    There exists a polynomial-time randomized algorithm for the minimum set cover problem that, given a set cover instance $(U,\mathcal{F})$ on $n$ elements, each having frequency at most $f \geq 1$, and $m$ sets, $\epsilon > 0$, and a weight vector $w \in \mathbb{R}_{\geq 0}^{\mathcal{F}}$, outputs a family of sets $\bm{\mathcal{S}} \subseteq \mathcal{F}$ with the following properties:
    \begin{itemize}
    \item $\bm{\mathcal{S}}$ is a feasible solution,
    \item the total weight of $\bm{\mathcal{S}}$ satisfies $\E[\sum_{S \in \bm{\mathcal{S}}}w_S] = (f+\epsilon) (\OPT+c)$, where $c$ is an arbitrarily small constant, and
    \item the algorithm has a Lipschitz constant $O(\epsilon^{-1} f^3 \sqrt{\min\{n,m\}\log m} \log n)$.
    \end{itemize}
\end{theorem}

As with the algorithm of Theorem~\ref{thm:set-cover}, we first solve the regularized LP given in Section~\ref{subsec:set-cover-lp}.
However, we use a different rounding scheme that we explain in Section~\ref{subsec:set-cover-rounding-freq}.
We prove Theorem~\ref{thm:set-cover-freq} in Section~\ref{subsec:set-cover-whole-freq}.

\subsection{Rounding}\label{subsec:set-cover-rounding-freq}

\begin{algorithm}[t!]
    \caption{Rounding algorithm for the set cover problem}\label{alg:set-cover-rounding-freq}
    \Procedure{\emph{\Call{RoundingSC-Freq}{$U,\mathcal{F},\epsilon,x$}}}{
        \KwIn{A set $U$, $\mathcal{F} \subseteq 2^U$, $0<\epsilon<1/(2f)$, and a vector $x \in [0,1]^{\mathcal{F}}$.}
        Let $\bm{\tau}$ be uniformly sampled from $[1/f-\epsilon,1/f]$\;
        $\bm{\mathcal{S}} \gets \emptyset$\;
        \For{$S \in \mathcal{F}$}{
            Add $S$ to $\bm{\mathcal{S}}$ if $x_S \geq \bm{\tau}$\;
        }
        \Return $\bm{\mathcal{S}}$.
    }
\end{algorithm}

Our rounding scheme adds all the sets $S \in \mathcal{S}$ with $x_S$ being larger than rougly $1/f$ to the output set (Algorithm~\ref{alg:set-cover-rounding-freq}).
We have the following:
\begin{lemma}\label{lem:set-cover-rounding-freq}
    Let $(U,\mathcal{F})$ be a set cover instance on $n$ elements, each having frequency at most $f>0$, and let $w \in \mathbb{R}_{\geq 0}^{\mathcal{F}}$ be a weight vector.
    Let $\bm{\mathcal{S}} = \Call{RoundingSC-Freq}{U,\mathcal{F},\epsilon,x}$, where $0<\epsilon<1/(2f)$ and $x \in \mathbb{R}^{\mathcal{F}}$ is a feasible solution to LP~\eqref{eq:set-cover-lp}.
    Then, $\bm{\mathcal{S}}$ is a feasible set cover, and
    \[
        \E\left[\sum_{S \in \bm{\mathcal{S}}}w_S\right] = f(1+2\epsilon f) \cdot \sum_{S \in \mathcal{F}}w_S x_S \leq 2f \sum_{S \in \mathcal{F}}w_S x_S.
    \]

    Let $w,w' \in \mathbb{R}_{\geq 0}^{\mathcal{F}}$ be weight vectors and $x,x' \in [0,1]^{\mathcal{F}}$ be feasible solutions to LP~\eqref{eq:set-cover-lp}.
    For $\bm{\mathcal{S}} = \Call{RoundingSC-Freq}{U,\mathcal{F},\epsilon,x}$ and $\bm{\mathcal{S}}' = \Call{RoundingSC-Freq}{U,\mathcal{F},\epsilon,x'}$, we have   \[
        \EMW((\bm{\mathcal{S}},w),(\bm{\mathcal{S}}',w')) = O\left(\frac{1}{\epsilon} \left( \|w-w'\|_1 +\sum_{S \in \mathcal{F}} |w_S x_S  - w'_S x'_S |\right)\right).
    \]
\end{lemma}
\begin{proof}
		The output $\bm{\mathcal{S}}$ is always feasible because $\tau \leq 1/f$ always holds and for any set $e \in U$, there exists a set $S \ni e$ such that $x_S \geq 1/f \geq \tau$.

		It is clear that the expected total weight of $\bm{\mathcal{S}}$ is 
		\[
			\sum_{S \in \bm{\mathcal{S}}} w_S
			\leq 
			\sum_{S \in \mathcal{F}} w_S x_S \cdot \frac{1}{1/f-\epsilon}
			\leq 
			f(1+2\epsilon f)\sum_{S \in \mathcal{F}} w_S x_S,
		\]
		where we used $\epsilon<1/(2f)$ in the last inequality.

    Next, we consider the Lipschitz continuity.
    For a set $S \in \mathcal{F}$, let $\delta_S = w'_S - w_S$.
    Let $p_S = \Pr[S \in \mathcal{S}]$ and $p'_S = \Pr[S \in \mathcal{S}']$.
    Then from the choice of $\bm{\tau}$, we have
    \begin{align}
    	|p_S - p'_S| \leq \frac{|x_S - x'_S|}{\epsilon}.
    	\label{eq:set-cover-rounding-f-1}
    \end{align}
    Recall that (see~\eqref{eq:set-cover-rounding-1})
    \begin{align}
        \sum_{S \in \mathcal{F}} |w_S x_S - w'_S x'_S|
        \geq \sum_{S \in \mathcal{F}} \left(|x_S - x'_S|w_S - |\delta_S| x'_S\right).
        \label{eq:set-cover-rounding-freq}
    \end{align}
    Then, we have
    \begin{align*}
        & \EMW((\bm{\mathcal{S}},w),(\bm{\mathcal{S}}',w')) \\
        & \leq \sum_{S \in \mathcal{F}} \left(\min\{p_S,p'_S\} \cdot |w_S - w'_S| + \max\{p_S-p'_S,0\} \cdot w_S + \max\{p'_S - p_S,0\} \cdot w'_S\right) \\
        & = \sum_{S \in \mathcal{F}} \Bigl(\min\{p_S,p'_S\} |\delta_S| + \max\{p_S-p'_S,0\}w_S + \max\{p'_S - p_S,0\}(w_S+\delta_S) \Bigr)  \\
        & \leq \sum_{S \in \mathcal{F}} \Bigl(\max\{p_S,p'_S\} |\delta_S| + |p_S-p'_S| w_S \Bigr) \\
        & \leq \sum_{S \in \mathcal{F}} \Bigl(\max\{p_S,p'_S\} |\delta_S| + \frac{|x_S-x'_S| w_S}{\epsilon} \Bigr) \tag{by~\eqref{eq:set-cover-rounding-f-1}} \\
        & \leq \sum_{S \in \mathcal{F}} \Bigl(\max\{p_S,p'_S\} |\delta_S| + \frac{|w_S x_S - w'_S x'_S| + x'_S |\delta_S|}{\epsilon} \Bigr)  \tag{by~\eqref{eq:set-cover-rounding-freq}} \\
        & \leq \frac{1}{\epsilon}\sum_{S \in \mathcal{F}} \Bigl(2|\delta_S| + |w_S x_S - w'_S x'_S| \Bigr)  \\
        & = O\left(\frac{1}{\epsilon}\left(\|w - w'\|_1 + \sum_{S \in \mathcal{F}} |w_S x_S - w'_S x'_S| \right)\right).
        \qedhere
    \end{align*}	
\end{proof}

\subsection{Putting Things Together}\label{subsec:set-cover-whole-freq}

\begin{algorithm}[t!]
    \caption{Lipschitz continuous algorithm for the set cover problem}\label{alg:set-cover-general-f}
    \Procedure{\emph{\Call{LPBasedSetCover-Freq}{$U,\mathcal{F},\epsilon, w$}}}{
        \KwIn{A set $U$, $\mathcal{F} \subseteq 2^U$, $0 < \epsilon < 1/(2f)$, and a weight vector $w \in \mathbb{R}_{\geq 0}^{\mathcal{F}}$.}
%       $C \gets$ a large enough constant\;
				$c \gets$ an arbitrarily small constant\;
        $\bm{\lambda} \gets $ uniformly at random from $[\epsilon (\OPTLP+c) /\min\{n,m\},2\epsilon (\OPTLP+c) /\min\{n,m\}]$\;
        $\bm{\kappa} \gets $ uniformly at random from $[\epsilon (\OPTLP+c)/\log m,2\epsilon (\OPTLP+c)/\log m]$\;
        $\bm{W} \gets$ uniformly at random from $[\OPTLP+2c,2(\OPTLP+2c)]$\;
        $\bm{x} \gets \Call{RegularizedLP}{U,\mathcal{F},w,\bm{\lambda},\bm{\kappa},\bm{W}}$\;
        $\bm{\mathcal{S}} \gets \Call{RoundingSC-Freq}{U,\mathcal{F},\epsilon,\bm{x}}$\;
        \Return $\bm{\mathcal{S}}$.
    }
\end{algorithm}

Our algorithm is given in Algorithm~\ref{alg:set-cover-general-f}.
It is similar to Algorithm~\ref{alg:set-cover-general}.
The main difference is that we use smaller $\bm{\lambda}$ and $\bm{\kappa}$ for a better approximation guarantee and that we call \Call{RoundingSC-Freq}{} instead of \Call{RoundingSC}{}.
We prove that this algorithm satisfies the claims of Theorem~\ref{thm:set-cover-freq}:
\begin{proof}[Proof of Theorem~\ref{thm:set-cover-freq}]
    We first consider the approximation ratio.
    By Lemma~\ref{lem:set-cover-regularized-lp}, $\bm{x}$ satisfies
    \[
        \sum_{S \in \mathcal{F}} w_S \bm{x}_S \leq \OPTLP + \frac{\bm{\lambda} \min\{n,m\}}{2} + \frac{\bm{\kappa} \log m}{2} 
        = (1+2\epsilon)(\OPTLP+c).
    \]
    Then by Lemma~\ref{lem:set-cover-rounding-freq}, we have
    \begin{align}
        & \E\left[\sum_{S \in \bm{\mathcal{S}}}w_S \right]
        = f(1+2\epsilon)(1+2\epsilon f)  \cdot (\OPT +c)
        = f(1+O(\epsilon f))  \cdot (\OPT+c).
        \label{eq:set-cover-rounding-expected-value-freq}
    \end{align}

    Next, we analyze the Lipschitz continuity.
    Fix a weight vector $w \in \mathbb{R}_{\geq 0}^{\mathcal{F}}$ and a set $S \in \mathcal{F}$.
    Let $w'\in \mathbb{R}_{\geq 0}^{\mathcal{F}}$ be a vector obtained from $w$ by increasing $w_S$ by $0 < \delta < c$.
    By Lemma~\ref{lem:seeoneelement}, it suffices to show that the claim holds for $w$ and $w'$.

    Following the proof of Theorem~\ref{thm:set-cover}, there exists a joint distribution $\mathcal{D}$ such that 
    \[
         \Pr_{((\lambda,\kappa,W),(\lambda',\kappa',W')) \sim \mathcal{D}}[\lambda \neq \lambda \vee \kappa \neq \kappa' \vee W \neq W' ] = O\left(\frac{\delta}{\OPTLP'+c}\right). \label{eq:setcovergeneral_params-freq}
    \]

    For $\lambda,\kappa,W>0$, let $\bm{\mathcal{S}}_{\lambda,\kappa,W}$ denote the output family of sets $\bm{\mathcal{S}}$ when $\bm{\lambda}=\lambda$, $\bm{\kappa}=\kappa$, and $\bm{W}=W$.
    From~\eqref{eq:set-cover-objective} and the range of $\bm{W}$, we have
    \begin{align*}
        & \sum_{S \in \mathcal{F}} w_S x_S \leq (1+2\epsilon)(\OPTLP + c)\leq \bm{W},\\
        & \sum_{S \in \mathcal{F}} w_S x'_S \leq \sum_{S \in \mathcal{F}} w'_S x'_S \leq 
        (1+2\epsilon)(\OPTLP'+c) \leq (1+2\epsilon)(\OPTLP+2c) \leq \bm{W} \\
        & \sum_{S \in \mathcal{F}} w'_S x_S \leq \sum_{S \in \mathcal{F}} w_S x_S + c \leq (1+2\epsilon)(\OPTLP + c) + c \leq \bm{W},\\
        & \sum_{S \in \mathcal{F}} w'_S x'_S \leq (1+2\epsilon)(\OPTLP'+c) \leq (1+2\epsilon)(\OPTLP+2c) \leq \bm{W}.
    \end{align*}
    Hence, the requirement of Lemma~\ref{lem:set-cover-regularized-lp} is satisfied.
    Then, we have
    \begin{align*}
        & \EMW((\bm{\mathcal{S}},w),(\bm{\mathcal{S}}',w')) \\
        & = \E_{((\lambda,\kappa,W),(\lambda',\kappa',W')) \sim \mathcal{D}} \EMW((\bm{\mathcal{S}}_{\lambda,\kappa,W},w),(\bm{\mathcal{S}}'_{\lambda',\kappa',W'},w'))  \\
        & = \E_{((\lambda,\kappa,W),(\lambda',\kappa',W')) \sim \mathcal{D}} \left[\EMW((\bm{\mathcal{S}}_{\lambda,\kappa,W},w),(\bm{\mathcal{S}}'_{\lambda',\kappa',W'},w')) \mid \lambda = \lambda' \wedge \kappa = \kappa' \wedge W = W'\right] \\
        & \qquad \times \Pr_{\mathcal{D}}[\lambda = \lambda \vee \kappa = \kappa' \vee W = W' ] \\
        & \qquad + \Pr_{\mathcal{D}}[\lambda \neq \lambda \vee \kappa \neq \kappa' \vee W \neq W' ] \cdot f(1+O(\epsilon f)) \cdot  \max\{\OPTLP+c,\OPTLP'+c\} \tag{by~\eqref{eq:set-cover-rounding-expected-value-freq}} \\
        & = \max_{\lambda,\kappa,W} \EMW((\bm{\mathcal{S}}_{\lambda,\kappa,W},w),(\bm{\mathcal{S}}'_{\lambda,\kappa,W},w')) + O\left(\delta f(1+\epsilon f)\right) \tag{by~\eqref{eq:setcovergeneral_params-freq}} \\
        & = \max_{\lambda,\kappa,W} \EMW((\bm{\mathcal{S}}_{\lambda,\kappa,W},w),(\bm{\mathcal{S}}'_{\lambda,\kappa,W},w')) + O(\delta f) \tag{by $\epsilon<1/(2f)$} \\
        & = \max_{\lambda,\kappa,W} O\left(\left(\|w-w'\|_1 + \sum_{S \in \mathcal{F}} |w_v x_v - w'_v x'_v|\right)  f  \right) + O(\delta f) \tag{by Lemma~\ref{lem:set-cover-rounding-freq}} \\        
        & = \max_{\lambda,\kappa,W} O\left(\delta f(\kappa + W)\sqrt{1+\frac{1}{\kappa \lambda}}\right)   + O(\delta f) \tag{by Lemma~\ref{lem:set-cover-regularized-lp}} \\
        & = O\left(\delta f \left(\frac{\epsilon \OPTLP+c}{\log m} + (\OPTLP+c)\right)\sqrt{1+\frac{\min\{n,m\}\log m}{\epsilon^2 (\OPTLP+c)^2}}\right)   + O(\delta f) \\
        & = O\left(\frac{\delta f}{\epsilon} \sqrt{\min\{n,m\}\log m }\log n\right).
        \qedhere
    \end{align*}
    By replacing $\epsilon$ with $O(\epsilon/f^2)$, we obtain the theorem.
\end{proof}    
%!TEX root=./main.tex

\section{Feedback Vertex Set}\label{sec:fedback-vertex-set}
In this section, we show a Lipschitz continuous algorithm for the minimum feedback vertex set problem.
\begin{theorem}\label{thm:fvs}
    There exists a polynomial-time randomized algorithm for the minimum feedback vertex set that, given a graph $G=(V,E)$ on $n$ vertices and a weight vector $w \in \mathbb{R}_{\geq 0}$, outputs a set of vertices $\bm{S} \subseteq V$ with the following properties:
    \begin{itemize}
    \item $\bm{S}$ is a feedback vertex set with probability $1-1/\mathrm{poly}(n)$,
    \item The total cost of $\bm{S}$ satisfies $\E[\sum_{v \in \bm{S}}w_v] = O(\log n \cdot (\OPT+c))$, where $c>0$ is an arbitrarily small constant, and 
    \item the algorithm has a Lipschitz constant $O(\sqrt{n}\log^2 n)$.
    \end{itemize}
\end{theorem}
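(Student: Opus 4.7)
The plan is to mirror the LP-based algorithm for set cover (Section~\ref{sec:set-cover}), replacing the element-covering constraints by cycle-covering constraints and replacing independent Bernoulli rounding by a new cycle-sparsification procedure. First, I would solve the regularized LP
\begin{equation*}
\min \sum_{v \in V}w_v x_v + \frac{\lambda}{2}\sum_{v \in V}x_v^2 + \frac{\kappa}{2}\sum_{v \in V}w_v x_v^2 \quad \text{s.t.} \quad \sum_{v \in C}x_v \geq 1 \ \forall \text{ cycle }C,\ 0\leq x_v\leq 1,
\end{equation*}
with $\lambda = \Theta(\epsilon\|w\|_1/(n\log n))$ and $\kappa = \Theta(\epsilon/\log n)$. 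Although there are exponentially many constraints, separation is polynomial: finding a minimum $x$-weight cycle reduces to vertex-weighted shortest paths. The strong-convexity argument of Lemma~\ref{lem:set-cover-regularized-lp} is oblivious to the constraint structure, so it yields $\sum_{v\in V}w_v x_v \leq \OPT + O(\epsilon \|w\|_1/\log n)$ and $\sum_{v\in V}|w_v x_v - w'_v x'_v| = O(\epsilon^{-1}\sqrt{n}\log n\cdot \|w-w'\|_1)$ after randomizing $\lambda$ in an interval of width $L$, exactly as in the proof of Theorem~\ref{thm:set-cover}.

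Second, I would develop the cycle-sparsification primitive. Call $(G,\tilde z)$ an $\epsilon$-cycle sparsifier of $(G,z)$ if $(1-\epsilon)z(C)\leq\tilde z(C)\leq(1+\epsilon)z(C)$ for every cycle $C$. For an integer-weighted $(G,z)$ with per-vertex girth $g_v$, the sparsifier samples $\bm c_v\sim\mathcal B(z_v,p_v)$ independently with $p_v=\Theta(\log n/g_v)$ and sets $\tilde{\bm z}_v=\bm c_v/p_v$. Correctness hinges on a cycle-counting lemma: for every $S\subseteq V$ and $\alpha\geq 1$, the number of distinct sets of the form $V(C)\cap S$ with $z(C)\leq \alpha\cdot\min_{v\in S}g_v$ is at most $n^{O(\alpha)}$. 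I plan to derive this by combining Subramanian's bound~\cite{subramanian1995polynomial} on the number of short cycles with a decomposition of cycles according to their pattern of intersection with $S$. Given this lemma, bucket cycles by geometric weight scale, apply a Chernoff bound per scale to each representative induced-support class, and union-bound across scales to conclude that the sparsifier guarantee holds with probability $1-1/\mathrm{poly}(n)$.

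Third, to round the LP solution $x$ I would set $z_v := \lceil n^2 x_v\rceil$, which satisfies $g_v\geq n^2$ at every vertex by the LP constraints, apply the sparsifier with error $\tfrac12$, and output $\bm S:=\{v : \tilde{\bm z}_v>0\}$. Every cycle satisfies $\tilde{\bm z}(C)\geq\tfrac12 z(C)>0$ with high probability, so $\bm S$ is a feedback vertex set. Since $\Pr[v\in\bm S]\leq z_v p_v = O(x_v\log n)$, we get $\E[\sum_{v\in\bm S}w_v] = O\bigl(\log n\cdot\sum_v w_v x_v\bigr) = O(\log n\cdot\OPT)+\epsilon\|w\|_1$.

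For Lipschitz continuity, I would couple the two executions by sharing $\bm\lambda$ (paying the $O(\delta/\|w\|_1)$ mismatch probability exactly as in~\eqref{eq:setcovergeneral_lambda}) and, within the sparsifier, maximally couple each binomial $\bm c_v$ against $\bm c'_v$. Since the $\bm c_v$ are independent across $v$, the per-vertex earth-mover contribution telescopes in the spirit of Lemma~\ref{lem:set-cover-rounding}, amplifying the stability bound on $\sum_v |w_v x_v - w'_v x'_v|$ by a factor $O(\log n)$ and yielding overall Lipschitz constant $O(\epsilon^{-1}\sqrt{n}\log^2 n)$. The main obstacle will be the cycle-counting lemma for arbitrary subsets $S\subseteq V$: Subramanian's argument is naturally stated for $S=V$, and preserving the $n^{O(\alpha)}$ rate when intersecting cycles with an arbitrary $S$ requires a careful case analysis of how a short cycle can enter, leave, and re-enter $S$; the remaining ingredients are essentially bookkeeping on top of the set-cover analysis.
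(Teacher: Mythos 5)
Your overall architecture coincides with the paper's: the regularized LP of Lemma~\ref{lem:set-cover-regularized-lp} with $\lambda=\Theta(\epsilon\|w\|_1/(n\log n))$ and $\kappa=\Theta(\epsilon/\log n)$, a binomial-sampling cycle sparsifier driven by a girth lower bound, a cycle-counting lemma for cycle-induced subsets of an arbitrary $S\subseteq V$, and a final combination losing one extra $O(\log n)$ over the LP stability bound. There is, however, one concrete gap in your rounding step: you set $z_v:=\lceil n^2 x_v\rceil$ deterministically, whereas the paper sets $\bm z_v:=\lceil n^2 x_v+\bm b\rceil$ with a single uniform offset $\bm b\in[0,1]$. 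The offset is not cosmetic. The regularized LP solution varies continuously with $w$ (indeed $|x_v-x'_v|\le \delta(1+\kappa)/\lambda$ by~\eqref{eq:x_S-diff}), so there are pairs $w,w'$ with $\|w-w'\|_1$ arbitrarily small for which $n^2x_v$ crosses an integer and hence $z_v$ and $z'_v$ differ by $1$. Under any coupling of the binomials, the inclusion probabilities of such a $v$ then differ by $\Theta(p_v)=\Theta(\log n/n^2)$ (Lemma~\ref{lem:cycle-sparsification-lipschitz-unweighted} is tight here), contributing $\Theta(w_v\log n/n^2)$ to the earth mover's distance no matter how small $\|w-w'\|_1$ is; the ratio in Definition~\ref{def:randomized} is therefore unbounded, and the claimed $O(\epsilon^{-1}\sqrt n\log^2 n)$ bound does not follow. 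The random offset is exactly what repairs this: averaging over $\bm b$ gives $\int_0^1\bigl|\lceil n^2x_v+b\rceil-\lceil n^2x'_v+b\rceil\bigr|\,\mathrm{d}b = n^2|x_v-x'_v|$, which is what Lemma~\ref{lem:fvs-rounding-lipschitz-unweighted} uses to convert the integer-level stability of the sparsifier into the fractional bound $O(|x_v-x'_v|\log n)$ that then combines with~\eqref{eq:fvs-1}.

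A secondary caution concerns your plan for the counting lemma: you cannot derive it by first invoking a bound on the number of light cycles and then decomposing by intersection pattern, because under the lemma's hypothesis the girth lower bound is imposed only on the vertices of $S$ (and, following the Fung et al.\ template, one must apply the lemma to girth classes $S\subsetneq V$); vertices outside $S$ may carry negligible weight, so there can be exponentially many light cycles all sharing the same intersection with $S$. The paper instead reruns Subramanian's robot argument directly on $S$-traces, with robots launched from vertices of $S$ and identified by the set of $S$-vertices they have visited (Lemma~\ref{lem:cycle-counting-half-integral}), which is what gives the $(2m)^{O(\alpha)}$ bound on induced subsets rather than on cycles. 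You flagged this as the main obstacle, correctly, but the specific reduction you sketch would not go through as stated.
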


As with the set cover problem, our algorithm first solves an LP relaxation and then round the obtained LP solution so that the whole process is Lipschitz continuous.
We use the algorithm in Section~\ref{subsec:set-cover-lp} to solve the LP relaxation.
However, because there are exponentially many constraints in the LP, the rounding algorithm in Section~\ref{subsec:set-cover-rounding} gives a polynomial approximation ratio and Lipschitz constant, which are vacuous.
To address this issue, we round the LP solution using a technique called cycle sparsification, which is explained in Section~\ref{subsec:cycle-sparsification}.
The rounding algorithm is explained in Section~\ref{subsec:fvs-rounding}.
The overall algorithm and its analysis are given in Section~\ref{subsec:fvs-proof}.

\subsection{Cycle Sparsification}\label{subsec:cycle-sparsification}

Let $G=(V,E)$ be a graph and $z \in \mathbb{R}_{\geq 0}^V$ be a weight vector.
For a cycle $C$ in $G$, let $z(C) = \sum_{v\in C}z_v$ denote the total weight of vertices in $C$.
Let $z,\widetilde{z} \in \mathbb{R}_{\geq 0}^V$ be two weight vectors on the same vertex set.
For $\epsilon > 0$, we say that $(G,\widetilde{z})$ is an \emph{$\epsilon$-cycle sparsifier} of $(G,z)$ if for any cycle $C$ in $G$, we have
\[
    (1-\epsilon) z(C) \leq \widetilde{z}(C) \leq (1+\epsilon) z(C)
\]
For a vertex $v \in V$, let $g_v$ denote the \emph{(weighted) girth} of $v$, that is, the minimum weight of a cycle passing through $v$.
We say that a vector $\ell \in \mathbb{R}_{\geq 0}^V$ is a \emph{girth lower bound} if $\ell_v \leq g_v$ for every $v \in V$.

To construct a cycle sparsifier, we consider an independent sampling algorithm using a girth lower bound, described in Algorithm~\ref{alg:cycle-sparsifier}.
In Section~\ref{subsubsec:sparsification-guarantee}, we show that \Call{CycleSparsification}{} (Algorithm~\ref{alg:cycle-sparsifier}) applied on an integer weight vector $z \in \mathbb{Z}_{\geq 0}^V$ outputs a weight vector $\widetilde{\bm{z}} \in \mathbb{R}_{\geq 0}^V$ that sparsifies $z$ with a high probability.
For a vector $z \in \mathbb{R}_{\geq 0}^V$, let $S(z) \subseteq V$ denote the set $\{v \in V : z_v > 0\}$ induced by vertices with positive weights.
In Section~\ref{subsubsec:sparsification-property}, we show several properties of $S(\widetilde{\bm{z}})$ used in subsequent analyses.

\subsubsection{Sparsification Guarantee}\label{subsubsec:sparsification-guarantee}

The goal of this section is to prove the following sparsification guarantee for integer-weighted graphs:
\begin{lemma}\label{lem:cycle-sparsification}
    Let $G=(V,E)$ be a graph on $n$ vertices, $z \in \mathbb{Z}_{\geq 1}^V$ be an integer weight vector, and $\epsilon > 0$.
    Suppose that any cycle has girth at least $\ell \in \mathbb{R}_{\geq 0}^V$.
    Then, \emph{\Call{CycleSparsification}{$G,z,\ell,\epsilon$}} outputs a weight vector $\widetilde{\bm{z}} \in \mathbb{R}_{\geq 0}^V$ such that $(G,\widetilde{\bm{z}})$ is an $\epsilon$-cycle sparsifier of $(G,z)$ with probability $1-1/\mathrm{poly}(n)$.
\end{lemma}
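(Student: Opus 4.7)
The plan is to mimic the Benczúr--Karger--Fung et al.\ cut-sparsification analysis. The key observation is that $\widetilde{\bm{z}}_v = \bm{c}_v/p_v$ with $\bm{c}_v \sim \mathcal{B}(z_v,p_v)$ gives $\E[\widetilde{\bm{z}}_v] = z_v$, so $\E[\widetilde{\bm{z}}(C)] = z(C)$ for every cycle $C$. The proof then reduces to a Chernoff-plus-union-bound argument: show that for any fixed cycle $C$, $\Pr[|\widetilde{\bm{z}}(C) - z(C)| > \epsilon z(C)]$ is inverse polynomial, and then control the union bound over the (potentially exponentially many) cycles using a cycle-counting lemma.

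First I would handle a single cycle $C$. The random variables $\widetilde{\bm{z}}_v$ for $v \in C$ are independent and bounded by $1/p_v = O(\ell_v/\log n) \leq O(g_v/\log n) \leq O(z(C)/\log n)$, since any cycle through $v$ has weight at least $g_v \geq \ell_v$. Their variance sums to at most $\sum_{v \in C} z_v/p_v \leq z(C) \cdot \max_{v \in C}(1/p_v) = O(z(C)^2/\log n)$. A standard Bernstein/Chernoff inequality then gives $\Pr[|\widetilde{\bm{z}}(C) - z(C)| > \epsilon z(C)] \leq 2\exp(-\Omega(\epsilon^2 \log n))$, which can be made $n^{-K}$ for any desired constant $K$ by choosing the hidden constant in $p_v$ large enough.

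Next I would group cycles into geometric weight classes $W_i = 2^i \cdot g_{\min}$ and handle each separately. For a cycle $C$ with $z(C) \in [W, 2W]$, every $v \in C$ satisfies $g_v \leq 2W$, hence $p_v = \Omega(\log n/W)$, so the single-cycle tail bound from above applies uniformly to this class. To apply a union bound over cycles in this class, I need a counterpart of the cut-counting lemma: for any $S \subseteq V$, the number of distinct subsets $C \cap S$ induced by cycles of weight at most $2W$ is at most $n^{O(1)}$ (more generally $|S|^{O(\alpha)}$ when $W \leq \alpha \cdot g_{\min}$). This would be proved by generalizing the known argument of Subramanian~\cite{subramanian1995polynomial} for $S = V$ via a weight-scaling induction on cycle length and a contraction/deletion step restricted to the projection onto $S$. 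Summing the tail probability $n^{-K}$ against the $n^{O(\log W/\log g_{\min})}$ projections in each class, and then summing over the $O(\log n)$ weight classes (the maximum cycle weight is polynomial in $n$ since $z$ is integer and $n$-bounded after truncation), yields an overall failure probability of $1/\mathrm{poly}(n)$.

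The main obstacle I anticipate is formulating and proving the cycle-counting lemma in exactly the form required, in particular the extension from $S = V$ to arbitrary $S \subseteq V$, which is what allows the argument to plug into the feedback-vertex-set rounding in the next subsection. The cut analogue of this lemma is subtle already, and for cycles one has the additional issue that the cycle space is not as clean as the cut space; I expect the proof to proceed by an inductive shortest-cycle-contraction argument tracked through the projection onto $S$. Once this counting statement is in hand, combining it with the per-cycle Chernoff bound and a union bound over the $O(\log n)$ weight classes immediately yields Lemma~\ref{lem:cycle-sparsification}.
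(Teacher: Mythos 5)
Your proposal follows essentially the same route as the paper: the paper proves this lemma by transplanting the Fung et al.\ cut-sparsification analysis (per-cycle Chernoff bounds within weight classes plus a union bound over projections), with the single new ingredient being the cycle-counting lemma for arbitrary $S \subseteq V$ (Lemma~\ref{lem:cycle-counting}), obtained by generalizing Subramanian's argument. The only minor divergence is your sketched contraction/deletion proof of the counting lemma, where the paper instead adapts Subramanian's nondeterministic-walker (robot) argument, but the overall structure and the role of each ingredient are identical.
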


\begin{algorithm}[t]
    \caption{Cycle sparsification for integer-weighted graphs}\label{alg:cycle-sparsifier}
    \Procedure{\emph{\Call{CycleSparsification}{$G,z,\ell,\epsilon$}}}{
        \KwIn{A graph $G=(V,E)$, an integer vector $z \in \mathbb{Z}_{\geq 1}^V$, a girth lower bound $\ell \in \mathbb{R}_{\geq 0}^V$, and $\epsilon > 0$.} 
        $t \gets O(\epsilon^{-2} \log n)$\;
        \For{each $v \in V$}{
            $p_v \gets \min\{t/\ell_v,1\}$\;
            $\widetilde{\bm{z}}_v \gets \mathcal{B}(z_v,p_v) /p_v$, where $\mathcal{B}(n,p)$ denotes the binomial distribution with parameters $n \in \mathbb{Z}_{\geq 1}$ and $p \in [0,1]$.
        }
        \Return $\widetilde{\bm{z}}$.
    }
\end{algorithm}

We prove Lemma~\ref{lem:cycle-sparsification} by following the analysis of the cut sparsification algorithm developed by Fung and Harvey (Theorem 1.15 of~\cite{fung2019general}).
We briefly explain their argument and then discuss modifications we need to prove Lemma~\ref{lem:cycle-sparsification}.

Let $G=(V,E)$ be a graph and $z \in \mathbb{R}_{\geq 0}^E$ be a weight vector.
For a vertex set $S \subseteq V$, let $\delta(S)$ denote the set of edges connecting a vertex in $S$ to $V \setminus S$.
For an edge set $F \subseteq E$, let $z(F) = \sum_{e \in F}z_e$ denote the total weight of edges in $F$.
For two edge weights $z, \widetilde{z} \in \mathbb{R}_{\geq 0}^E$ and $\epsilon > 0$, we say that $(G,\widetilde{z})$ is an \emph{$\epsilon$-cut sparsifier} of $(G,z)$ if for any vertex set $S \subseteq V$, we have 
\[
    (1-\epsilon) z(\delta(S)) 
    \leq \widetilde{z}(\delta(S)) 
    \leq (1+\epsilon) z(\delta(S)).
\]

For an edge $e = (s,t) \in E$, let $k_e$ denote the \emph{(weighted) edge connectivity} of the end-points of $e$, that is, the minimum weight of a cut that separates $s$ and $t$.
To construct a cut sparsifier, Fung~et~al.~\cite{fung2019general} considered an algorithm similar to Algorithm~\ref{alg:cycle-sparsifier}, which loops over edges instead of vertices and uses the edge connectivity $k_e$ instead of the girth $g_v$.
They showed that this algorithm generates an $\epsilon$-cut sparsifier with a high probability. 

The key technical ingredient of their analysis is the cut counting lemma, explained below.
For an edge set $F \subseteq E$, we say that an edge set $F'$ is a \emph{cut-induced subset} of $F$ if $F' = F \cap \delta(S)$ for some $S \subseteq V$.
Then, we can bound the number of small cut-induced subsets as follows.
\begin{lemma}[Theorem 1.6 of~\cite{fung2019general}]\label{lem:cut-counting}
    Let $G=(V,E)$ be a graph on $n$ vertices, $z \in \mathbb{R}_{\geq 0}^E$ be a weight vector, and $F \subseteq E$ be an edge set.
    Suppose that $k_e \geq K$ for every $e \in F$.
    Then, for every real $\alpha \geq 1$, we have
    \[
        |\{\delta(S)\cap F : S \subseteq V \wedge w(\delta(S))\leq \alpha K\}|<n^{2\alpha }.    
    \]
\end{lemma}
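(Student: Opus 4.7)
The plan is to adapt Karger's random contraction argument to count cut-induced subsets of $F$. Fix $t = \lceil 2\alpha\rceil$ and consider the randomized procedure that runs weighted random edge contractions on $(G,z)$---at each step picking an edge with probability proportional to its current weight---until exactly $t$ super-vertices remain, yielding a partition $\mathcal{P}$ of $V$; then samples a uniformly random nontrivial bipartition of the super-vertices, inducing a cut $(S^*,V\setminus S^*)$; and outputs $\delta(S^*)\cap F$. The aim is to show that for each distinct cut-induced subset $F' = \delta(S)\cap F$ with $z(\delta(S))\leq\alpha K$, this procedure outputs $F'$ with probability at least (roughly) $1/n^{2\alpha}$. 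Since these events are disjoint across distinct $F'$ and their probabilities sum to at most $1$, the claimed count bound follows.

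For each candidate subset $F'$, fix a canonical witness $S_{F'}$ attaining $\delta(S_{F'})\cap F = F'$ and minimizing $z(\delta(S_{F'}))$. The trivial case $F'=\emptyset$ contributes a single subset and is counted directly. For $F'\neq\emptyset$, take any $e\in F'$: since $\delta(S_{F'})$ is a cut separating the endpoints of $e$, we have $z(\delta(S_{F'}))\geq k_e\geq K$, hence $z(\delta(S_{F'}))/K\in[1,\alpha]$. The output probability for $F'$ factorizes as the probability that $\mathcal{P}$ refines $\{S_{F'},V\setminus S_{F'}\}$, multiplied by the probability $\geq 2^{-t+1}$ that the uniform bipartition correctly separates these blocks. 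The first factor equals $\prod_{i=1}^{n-t}\bigl(1 - z(\delta(S_{F'}))/W_i\bigr)$, where $W_i$ is the total weight of the contracted graph at step $i$. Provided $W_i\geq K(n-i+1)/2$ at every step, the standard telescoping bound from Karger's analysis yields the required $\Omega(n^{-2\alpha})$.

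The main obstacle is establishing the lower bound $W_i\geq K(n-i+1)/2$ \emph{without} a global minimum-cut assumption: only the local condition $k_e\geq K$ for $e\in F$ is given, so $G$ may contain cuts of arbitrarily small weight. The key structural observation to leverage is that weighted edge contraction never decreases the min-cut between two still-distinct super-vertices, so any super-vertex currently incident to an uncontracted edge of $F$ has weighted degree at least $K$ in the contracted graph. Promoting this local degree bound to a bound on the total edge weight $W_i$ is the delicate step. The approach I plan to pursue is an amortized charging argument: each ``irrelevant'' super-vertex (one not incident to any surviving $F$-edge) is charged to a neighboring relevant super-vertex, effectively replacing $n-i+1$ by the count of relevant super-vertices, for which the local degree bound gives the desired edge-weight lower bound up to an accounting adjustment. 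Alternatively, one can modify the procedure to first deterministically contract a carefully chosen set of edges in $E\setminus F$, reducing to a pure Karger analysis on an auxiliary multigraph whose min-cut is automatically at least $K$ and whose cut-induced subsets of (the image of) $F$ coincide with those in the original graph.
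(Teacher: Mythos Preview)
The paper does not give its own proof of this statement: Lemma~\ref{lem:cut-counting} is quoted from Fung~et~al.\ as their Theorem~1.6 and used as a black box. So there is nothing in the present paper to compare your argument against.

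On the substance of your proposal, the framework---weighted random contraction down to $t=\lceil 2\alpha\rceil$ super-vertices, a random bipartition, and a disjointness-of-success-events argument---is the right one, and you have correctly isolated the obstacle: the per-step bound $W_i\ge K(n-i+1)/2$ need not hold because only edges in $F$ are promised large connectivity. However, neither of your two suggested fixes closes the gap. For the charging argument, take two $F$-incident vertices $a,b$ joined by a single $F$-edge of weight $K$, together with $m$ further vertices $c_1,\dots,c_m$, each joined to $a$ by a non-$F$ edge of weight $K$. Here $r_i=2$ at every step, so your per-step survival bound $1-2\alpha/r_i=1-\alpha\le 0$ and the product is vacuous---even though the true survival probability is $\Theta(1/n)$ and the count bound holds. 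For the second suggestion, contracting any edge in $E\setminus F$ destroys every cut separating its endpoints, so a cut-induced subset of $F$ witnessed only by such cuts vanishes from the contracted graph; there is no evident way to choose the contracted set so that the families of cut-induced subsets \emph{coincide}.

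The device that actually makes the argument go through is Lov\'asz/Mader edge \emph{splitting-off}, not contraction: for each vertex $s$ not incident to $F$, one repeatedly replaces a pair of incident edges $(s,a),(s,b)$ by a new edge $(a,b)$ so as to preserve all pairwise edge-connectivities among the remaining vertices, and then deletes the now-isolated $s$. Any splitting-off step can only decrease cut weights and never touches $F$, so every cut-induced subset of $F$ of weight at most $\alpha K$ in $G$ is still witnessed by a cut of weight at most $\alpha K$ in the reduced graph; and in the reduced graph every vertex is an endpoint of some $F$-edge and hence has weighted degree at least $K$, so the standard Karger telescoping bound applies directly on at most $n$ vertices. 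This preprocessing step is the missing idea in your sketch.
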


To prove Lemma~\ref{lem:cycle-sparsification}, we need a counterpart of Lemma~\ref{lem:cut-counting} for cycles.
For a graph $G$, let $\mathcal{C}(G)$ denote the set of simple cycles in $G$.
For a vertex set $S \subseteq V$, we say that a vertex set $T$ is a \emph{cycle-induced subset} of $S$ if $T = S \cap C$ for some $C \in \mathcal{C}(G)$.
We have the following:
\begin{lemma}\label{lem:cycle-counting}
    Let $G = (V, E)$ be a graph on $m$ edges, $z \in \mathbb{R}_{\geq 0}^V$ be a weight vector, and $S \subseteq V$ be an arbitrary set.
    Suppose that $g_v \geq g$ for every $v \in S$. 
    For any $\alpha \geq 1$, we have
    \[
        |\{C \cap S : C \in \mathcal{C}(G) \wedge z(C)\leq \alpha g\}|<(2m)^{2\alpha+1/2}.
    \]
\end{lemma}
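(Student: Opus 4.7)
My plan is to adapt the probabilistic encoding argument of Subramanian~\cite{subramanian1995polynomial}, which establishes the analogous bound for the case $S = V$, to arbitrary subsets $S \subseteq V$. The overall strategy is to associate to each candidate cycle-induced subset a short ``fingerprint'' consisting of at most $2\alpha + 1/2$ symbols from an alphabet of size $2m$ (the set of directed edges of $G$), such that distinct intersections $C \cap S$ yield distinct fingerprints. A direct counting of fingerprints then gives the stated bound $(2m)^{2\alpha + 1/2}$.

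Concretely, I would orient each edge of $G$ independently and uniformly at random. For each cycle $C$ with $z(C) \leq \alpha g$, fix a cyclic traversal direction and call a vertex of $C$ a \emph{turning vertex} if its two incident edges on $C$ are oriented inconsistently with the chosen traversal. The fingerprint $\phi(C)$ will consist of a canonical anchor (a single directed edge of $C$, contributing the $(2m)^{1/2}$ factor after accounting for the symmetry of the cyclic traversal) together with the directed edges appearing at the turning vertices. Since the cycle is light ($z(C) \leq \alpha g$) and each vertex $v \in C \cap S$ satisfies $g_v \geq g$, an averaging over the random orientation combined with the girth lower bound should cap the expected number of turning vertices by roughly $2\alpha$, giving the desired length bound.

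For the injectivity property, the idea is that if two light cycles $C_1, C_2$ share the same fingerprint but have $C_1 \cap S \neq C_2 \cap S$, then some alternate short cycle through a vertex $v \in (C_1 \triangle C_2) \cap S$ can be constructed with weight strictly less than $g$, contradicting $g_v \geq g$. Thus the girth constraint locally rigidifies the cycle near each $S$-vertex, and the fingerprint is sufficient to determine $C \cap S$ (though not necessarily $C$ itself).

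The main obstacle will be balancing the length and injectivity of the encoding. The length bound requires a probabilistic averaging argument that carefully couples the random orientation with the light-cycle assumption; the injectivity argument hinges on the subtle fact that we need only recover $C \cap S$, not the full cycle $C$, which offers flexibility in how non-$S$ portions of $C$ are encoded. Extracting precisely the exponent $2\alpha + 1/2$ — rather than a slightly larger bound such as $(2m)^{O(\alpha)}$ — will require sharp bookkeeping of the anchor edge and the turning-vertex contributions, mirroring Subramanian's original tight analysis in the $S = V$ setting.
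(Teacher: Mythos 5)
Your proposal takes a genuinely different route from the paper (which adapts Subramanian's ``robot'' argument: robots start at $S$-vertices, traverse light cycles at unit speed, split at vertices, and the count multiplies by at most $2m$ per time window of length $g/2$), but as sketched it has two gaps that I do not see how to repair. First, the length bound on your fingerprint is unsupported. Under a uniformly random orientation, the expected number of turning vertices on a cycle is proportional to the \emph{number of edges} of that cycle, which is in no way controlled by the weighted lightness condition $z(C)\leq \alpha g$ or by the girth lower bound: a light cycle may pass through arbitrarily many vertices of tiny or zero weight (outside $S$), so its combinatorial length -- and hence its turning count -- can be as large as $n$. There is no mechanism in your argument that converts the weighted hypotheses into a bound of roughly $2\alpha$ turning vertices; in the paper the factor $2\alpha$ arises temporally, as the number of half-girth time windows needed to cover total weight $\alpha g$, not from any combinatorial feature of the cycle. (Even if the expectation were bounded, an expectation over orientations does not by itself yield a single encoding of bounded length valid for \emph{all} light cycles simultaneously.)

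Second, the injectivity step fails quantitatively. If two light cycles $C_1,C_2$ share a fingerprint but $C_1\cap S\neq C_2\cap S$, the closed walks you can form from their symmetric difference only have weight bounded by $z(C_1)+z(C_2)\leq 2\alpha g$, which does not contradict $g_v\geq g$ for $v\in S$ when $\alpha\geq 1$. The paper's proof gets the needed contradiction precisely because divergence and reconvergence of two robot copies are confined to a time window of length $g^-/2$: the two sub-paths between the separation and meeting points each have weight less than $g/2$, so together they contain a cycle of weight less than $g$, and if that cycle misses $S$ the copy is killed by the rule that trajectories with equal $S$-sets merge. Your static anchor-plus-turning-edges fingerprint records neither weights nor positions along the cycle, so nothing forces the discrepancy between $C_1$ and $C_2$ to be localized within a sub-$g$ weight budget. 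To make an encoding argument work you would need to reintroduce exactly this kind of weighted segmentation of the cycle into $O(\alpha)$ pieces of weight less than $g/2$, at which point you are essentially reconstructing the paper's induction.
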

(A slight variation of) this theorem for the case of $S=V$ was proved in~\cite{subramanian1995polynomial}.
Because the proof of Lemma~\ref{lem:cycle-counting} is quite similar, we defer it to Appendix~\ref{apx:cycle-counting}.

Once we obtained Lemma~\ref{lem:cycle-counting}, we can easily prove Lemma~\ref{lem:cycle-sparsification} by following the proof of Fung~et~al.~\cite{fung2019general}.

\subsubsection{Properties of the Induced Set}\label{subsubsec:sparsification-property}

In this section, we show several properties of the set induced by a cycle-sparsifying weight vector.

\begin{lemma}\label{lem:inclusion-probability}
    Let $G=(V,E)$ be a graph, $z \in \mathbb{Z}_{\geq 1}^V$ be an integer weight vector, $\ell \in \mathbb{R}_{\geq 0}^V$ be a girth lower bound, and $\epsilon > 0$.
    Let $\widetilde{\bm{z}} = \Call{CycleSparsification}{G,z,\ell,\epsilon}$.
    Then for every $v \in V$, we have
    \[
        \Pr[v \in S(\widetilde{\bm{z}})] = O\left(\frac{z_v \log n}{\epsilon^2 \ell_v} \right).
    \]
\end{lemma}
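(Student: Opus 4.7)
The plan is to directly compute $\Pr[v \in S(\widetilde{\bm{z}})]$ using the definition of \Call{CycleSparsification}{}. By construction, $\widetilde{\bm{z}}_v$ is $1/p_v$ times a $\mathcal{B}(z_v, p_v)$ random variable, so $v \in S(\widetilde{\bm{z}})$ iff at least one of the $z_v$ independent Bernoulli$(p_v)$ trials succeeds. Hence
\[
\Pr[v \in S(\widetilde{\bm{z}})] = 1 - (1-p_v)^{z_v} \le z_v \, p_v,
\]
by the union bound (or Bernoulli's inequality).

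I would then split into the two cases of $p_v = \min\{t/\ell_v, 1\}$. If $p_v = t/\ell_v$, then immediately
\[
\Pr[v \in S(\widetilde{\bm{z}})] \le \frac{z_v \, t}{\ell_v} = O\!\left(\frac{z_v \log n}{\epsilon^2 \ell_v}\right),
\]
as desired. If instead $p_v = 1$, then by definition $\ell_v \le t = O(\epsilon^{-2}\log n)$, so $\log n/(\epsilon^2 \ell_v) = \Omega(1)$. Combined with $z_v \ge 1$ (since $z \in \mathbb{Z}_{\ge 1}^V$), this gives $z_v \log n/(\epsilon^2 \ell_v) = \Omega(1)$, which absorbs the trivial bound $\Pr[v \in S(\widetilde{\bm{z}})] \le 1$.

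There is no real obstacle here; the lemma is essentially a one-line union bound over the independent Bernoulli trials defining $\widetilde{\bm{z}}_v$, together with a sanity check in the degenerate regime $p_v = 1$ to ensure the $O(\cdot)$ bound holds uniformly. The only point worth being careful about is that the integrality assumption $z_v \ge 1$ (used to justify the $p_v = 1$ case via $z_v \ge 1$) is indeed available in the hypotheses of \Call{CycleSparsification}{}.
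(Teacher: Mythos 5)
Your proposal is correct and follows essentially the same argument as the paper: compute $\Pr[v \in S(\widetilde{\bm{z}})] = 1-(1-p_v)^{z_v} \le p_v z_v$ and use $p_v \le t/\ell_v$ with $t = O(\epsilon^{-2}\log n)$. The case split on $p_v = 1$ is harmless but unnecessary, since $p_v = \min\{t/\ell_v,1\} \le t/\ell_v$ already gives the bound uniformly.
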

\begin{proof}
    Let $q_v$ be the probability that $v \in S(\widetilde{\bm{z}})$.
    Because $z_v \geq 1$, we have $q_v = 1 - (1 - p_v)^{z_v} \leq p_v z_v$, and the claim holds.
\end{proof}

Next, we show a Lipschitz property of \Call{CycleSparsification}{}.
\begin{lemma}\label{lem:cycle-sparsification-lipschitz-unweighted}
    Let $G=(V,E)$ be a graph, $z,z' \in \mathbb{Z}_{\geq 1}^E$ be integer weight vectors, $\ell \in \mathbb{R}_{\geq 0}^V$ be a girth lower bound (with respect to $z$ and $z'$), and $\epsilon > 0$.
    Let $\widetilde{\bm{z}} = \Call{CycleSparsification}{G,z,\ell,\epsilon}$ and $\widetilde{\bm{z}}' = \Call{CycleSparsification}{G',z',\ell,\epsilon}$.
    Then for every $v \in V$, we have 
    \[
        \Bigl| \Pr[v \in S(\widetilde{\bm{z}})] - \Pr[v\in S(\widetilde{\bm{z}}')] \Bigr| = O\left(\frac{|z_v-z'_v|\log n}{\epsilon^2 \ell_v}\right).
    \]
\end{lemma}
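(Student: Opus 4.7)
The plan is a direct calculation: writing down the exact inclusion probability and using a Bernoulli-type bound.

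Under \Call{CycleSparsification}{$G,z,\ell,\epsilon$}, a vertex $v$ belongs to $S(\widetilde{\bm z})$ exactly when the binomial sample $\mathcal{B}(z_v,p_v)$ is strictly positive, and the $p_v$ depends only on $\ell_v$ (and $t$), not on $z_v$. Since the same $\ell$ and $\epsilon$ are used in both runs, the parameter $p_v$ is identical in the computation for $z$ and for $z'$. Hence
\[
    \Pr[v\in S(\widetilde{\bm z})] = 1-(1-p_v)^{z_v}, \qquad \Pr[v\in S(\widetilde{\bm z}')] = 1-(1-p_v)^{z'_v}.
\]
So the first step is simply to record these two formulas.

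Next, without loss of generality assume $z_v\leq z'_v$, and factor
\[
    (1-p_v)^{z_v}-(1-p_v)^{z'_v} = (1-p_v)^{z_v}\bigl(1-(1-p_v)^{z'_v-z_v}\bigr).
\]
I will bound the first factor by $1$ and the second factor by $p_v(z'_v-z_v)$, using Bernoulli's inequality $(1-p_v)^{k}\geq 1-p_v k$ valid for $p_v\in[0,1]$ and integer $k\geq 0$. This gives
\[
    \bigl|\Pr[v\in S(\widetilde{\bm z})]-\Pr[v\in S(\widetilde{\bm z}')]\bigr| \;\leq\; p_v\,|z_v-z'_v|.
\]

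Finally I substitute $p_v=\min\{t/\ell_v,1\}\leq t/\ell_v$ with $t=O(\epsilon^{-2}\log n)$ to obtain the claimed bound $O(|z_v-z'_v|\log n/(\epsilon^2\ell_v))$. The edge case $p_v=1$ causes no trouble: either both sides equal $1$ (when $z_v,z'_v\geq 1$), giving difference zero, or the same inequality $p_v|z_v-z'_v|\leq t|z_v-z'_v|/\ell_v$ still holds since $\ell_v\leq t$ in that regime. There is no real obstacle here—the key observation is just that $p_v$ is a function of $\ell_v$ alone, so the coupling that makes the two binomial samples share their parameter is trivial, reducing the lemma to the scalar inequality $1-(1-p)^a\leq p|a-b|+\bigl(1-(1-p)^b\bigr)$.
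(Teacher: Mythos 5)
Your proof is correct and follows essentially the same route as the paper: write the inclusion probabilities as $1-(1-p_v)^{z_v}$ and $1-(1-p_v)^{z'_v}$ (valid since $p_v$ depends only on $\ell_v$, shared between the two runs), bound their difference by $p_v|z_v-z'_v|$ via Bernoulli's inequality, and substitute $p_v\leq t/\ell_v$ with $t=O(\epsilon^{-2}\log n)$. Your factorization $(1-p_v)^{z_v}\bigl(1-(1-p_v)^{|z_v-z'_v|}\bigr)$ is in fact a slightly cleaner rendering of the paper's algebra, which has a small typo in the corresponding factor but reaches the same bound.
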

\begin{proof}
    Let $q_v$ be the probability that $v \in S(\widetilde{\bm{z}})$, and $\delta_v = z'_v - z_v$.
    When $\delta_v \geq 0$, we have
    \begin{align*}
        & |q_v - q'_v|
        = q_v - q'_v
        = q_v \left(1 - (1 - p_v)^{\delta_v}\right) 
        \leq q_v p_v \delta_v 
        \leq p_v \delta_v,
    \end{align*}
    where the first inequality is by $\delta_v \in \mathbb{R} \setminus (0,1)$.
    A similar calculation shows that $|q_v - q'_v| \leq p_v \delta_v$ when $\delta_v \leq 0$, and hence the claim holds.
\end{proof}

\subsection{Rounding}\label{subsec:fvs-rounding}
In this section, we design a rounding algorithm for the feedback vertex set problem, as given in Algorithm~\ref{alg:fvs-rounding}.
Given a graph $G=(V,E)$ and a weight vector $x \in \mathbb{R}_{\geq 0}^V$, which is supposed to be a solution to LP~\eqref{eq:lp-fvs}, we first compute another weight vector $\bm{z} \in \mathbb{Z}_{\geq 1}^V$ by scaling up $x$ and rounding it up to an integer.
Then, we apply \Call{CycleSparsification}{} to $G$ and $\bm{z}$ to obtain a weight vector $\tilde{\bm{z}} \in \mathbb{R}_{\geq 0}^V$.
Finally, we add every $v \in V$ with $\tilde{\bm{z}}_v > 0$ to the output set $\bm{S}$.

\begin{algorithm}[t]
    \caption{Rounding}\label{alg:fvs-rounding}
    \Procedure{\emph{\Call{RoundingFVS}{$G,x$}}}{
        \KwIn{A graph $G=(V,E)$ and a weight vector $x \in [0,1]^V$.}
        Sample $\bm{b}$ from $[0,1]$ uniformly at random\;
        \For{each $v \in V$}{
            $\bm{z}_v \gets \lceil n^2 x_v + \bm{b}\rceil$.
        }
        $\widetilde{\bm{z}} \gets \Call{CycleSparsification}{G,\bm{z},n^2,1/2}$\;
        $\bm{S} = \{v \in V : \widetilde{\bm{z}}_v > 0\}$\;
        \Return $\bm{S}$.
    }
\end{algorithm}

We first analyze the feasibility of the output and its approximation guarantee.
\begin{lemma}\label{lem:fvs-rounding-approximation}
    Let $G=(V,E)$ be a graph on $n$ vertices and $x \in [0,1]^V$ be a weight vector such that $\sum_{v \in C}x_v \geq 1$ for every cycle $C$ in $G$.
    Let $\bm{S} = \Call{RoundingFVS}{G,x}$.
    Then $\bm{S}$ is a feedback vertex set with probability $1-1/\mathrm{poly}(n)$.
    Moreover, for any vertex $v \in V$, we have 
    \[
        \Pr[v \in \bm{S}] = O\left(x_v \log n + \frac{\log n}{n^2}\right).
    \]
\end{lemma}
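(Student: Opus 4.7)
The plan is to establish the two claims separately, both of which reduce to invoking the cycle sparsification results of Section~\ref{subsec:cycle-sparsification} after checking that $\bm{z}$ satisfies the right preconditions.

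For the feasibility claim, I will verify the hypotheses of Lemma~\ref{lem:cycle-sparsification} applied to $(G,\bm{z})$ with girth lower bound $n^2$ and $\epsilon=1/2$. Since $\bm{b}>0$ almost surely, $\bm{z}_v=\lceil n^2 x_v+\bm{b}\rceil\ge 1$ for every $v$, so $\bm{z}\in \mathbb{Z}_{\ge 1}^V$ as required. Moreover, $\bm{z}_v\ge n^2 x_v$ pointwise, so for every cycle $C$ in $G$ the LP feasibility assumption $\sum_{v\in C}x_v\ge 1$ yields $\bm{z}(C)\ge n^2\sum_{v\in C}x_v\ge n^2$; hence $n^2$ is a valid girth lower bound for $(G,\bm{z})$. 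Applying Lemma~\ref{lem:cycle-sparsification} therefore gives, with probability $1-1/\mathrm{poly}(n)$, a $(1/2)$-cycle sparsifier $(G,\widetilde{\bm{z}})$ for which every cycle $C$ satisfies $\widetilde{\bm{z}}(C)\ge \bm{z}(C)/2\ge n^2/2>0$. This forces at least one vertex of $C$ to carry positive $\widetilde{\bm{z}}$-weight, so $\bm{S}$ intersects every cycle, i.e., $\bm{S}$ is a feedback vertex set.

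For the per-vertex inclusion bound, I will condition on $\bm{z}$ and invoke Lemma~\ref{lem:inclusion-probability} with $\ell_v=n^2$ and $\epsilon=1/2$ to obtain $\Pr[v\in S(\widetilde{\bm{z}})\mid \bm{z}]=O(\bm{z}_v\log n/n^2)$. Since $\bm{z}_v=\lceil n^2 x_v+\bm{b}\rceil\le n^2 x_v+2$, taking expectation over $\bm{b}$ yields
\[
\Pr[v\in \bm{S}]=O\!\left(\frac{(n^2 x_v+2)\log n}{n^2}\right)=O\!\left(x_v\log n+\frac{\log n}{n^2}\right),
\]
as claimed.

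The proof is essentially bookkeeping rather than substantive; the only subtlety is justifying why the shift $\bm{b}$ and the scale $n^2$ together make \Call{CycleSparsification}{} directly applicable, and I do not anticipate a genuine obstacle once Lemmas~\ref{lem:cycle-sparsification} and~\ref{lem:inclusion-probability} are in hand. The shift $\bm{b}\in(0,1]$ is what guarantees $\bm{z}\ge\mathbf{1}$ almost surely (necessary because the cycle sparsifier is stated for strictly positive integer weights), while the factor $n^2$ converts the LP constraint $x(C)\ge 1$ into a uniform girth lower bound of $n^2$ with room to spare, so that $(1/2)$-sparsification preserves the strict positivity of $\widetilde{\bm{z}}(C)$ for every cycle.
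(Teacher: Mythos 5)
Your proposal is correct and follows essentially the same route as the paper: verify that $\bm{z}\in\mathbb{Z}_{\geq 1}^V$ and that every cycle has $\bm{z}$-weight at least $n^2$, invoke Lemma~\ref{lem:cycle-sparsification} for feasibility, and invoke Lemma~\ref{lem:inclusion-probability} with $\ell_v=n^2$ plus the bound $\lceil n^2 x_v+\bm{b}\rceil\le n^2 x_v+2$ for the inclusion probability. The only difference is cosmetic — you make explicit the checks (positivity of $\bm{z}$, the halving of cycle weights) that the paper leaves implicit.
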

\begin{proof}
    First, we check the feasibility of the output.
    Let $C$ be a cycle in $G$.
    Because $C$ has weight at least one in $x$, we have 
    \[
        \sum_{v \in C}\bm{z}_v = \sum_{v \in C} \lceil n^2 x_v + \bm{b}\rceil \geq \sum_{v \in C} n^2 x_v  \geq n^2.
    \]
    Then by Lemma~\ref{lem:cycle-sparsification}, $(G,\widetilde{\bm{z}})$ is an $1/2$-cycle sparsifier of $(G,z)$ with probability $1-1/\mathrm{poly}(n)$.
    When this happens, every cycle $C$ has a positive weight in $\widetilde{\bm{z}}$, and hence the set $\bm{S}$ forms a feedback vertex set.

    Next, by Lemma~\ref{lem:inclusion-probability}, we have
    \[
        \Pr[v \in \bm{S}] 
        = O\left(\E\left[\frac{\bm{z}_v \log n}{n^2}\right]\right)
        = O\left(\E\left[\frac{\lceil n^2 x_v + \bm{b}\rceil \log n}{n^2}\right]\right)
        = O\left(x_v \log n + \frac{\log n}{n^2}\right).
        \qedhere
    \]    
\end{proof}

Next, we consider Lipschitz continuity of \Call{RoundingFVS}{}.
\begin{lemma}\label{lem:fvs-rounding-lipschitz-unweighted}
    Let $G=(V,E)$ be a graph and $x,x' \in [0,1]^V$ be weight vectors with $\sum_{v \in C}x_v \geq 1$ and $\sum_{v \in C}x'_v \geq 1$ for every cycle $C$ in $G$.
    Let $\bm{S} = \Call{RoundFVS}{G,x}$ and $\bm{S}' = \Call{RoundFVS}{G,x'}$.
    Then for any $v \in V$, we have
    \[
        \Bigl|\Pr[v \in \bm{S}] - \Pr[v \in \bm{S}'])\Bigr| = O\left(\frac{|x_v - x'_v|\log n}{\epsilon^2} \right).
    \]
\end{lemma}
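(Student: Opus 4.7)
\medskip

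\noindent\textbf{Proof plan.} The plan is to couple the two executions of \Call{RoundingFVS}{} by sharing the random shift $\bm{b}$, reduce the question to a Lipschitz bound for \Call{CycleSparsification}{}, and then average over $\bm{b}$.

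First, I would fix a single uniform sample $\bm{b}\in[0,1]$ and use it in both runs, so that $\bm{z}_v=\lceil n^2 x_v+\bm{b}\rceil$ and $\bm{z}'_v=\lceil n^2 x'_v+\bm{b}\rceil$ are deterministic functions of $\bm{b}$. Conditioned on $\bm{b}$, the two calls to \Call{CycleSparsification}{} proceed independently; the randomness inside cycle sparsification is handled separately from the randomness of $\bm{b}$. This gives the decomposition
\[
    \Pr[v\in\bm{S}]-\Pr[v\in\bm{S}']=\E_{\bm{b}}\bigl[q_v(\bm{z})-q_v(\bm{z}')\bigr],
\]
where $q_v(z)$ denotes the probability that $v\in S(\widetilde{\bm{z}})$ when cycle sparsification is run on weights $z$ with girth lower bound $\ell_v=n^2$.

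Next, I would apply Lemma~\ref{lem:cycle-sparsification-lipschitz-unweighted} pointwise in $\bm{b}$, using $\ell_v=n^2$ and the constant $\epsilon=1/2$ passed inside \Call{RoundingFVS}{}, to conclude
\[
    |q_v(\bm{z})-q_v(\bm{z}')|=O\!\left(\frac{|\bm{z}_v-\bm{z}'_v|\log n}{n^2}\right).
\]
Taking absolute value, then expectation in $\bm{b}$, yields
\[
    \bigl|\Pr[v\in\bm{S}]-\Pr[v\in\bm{S}']\bigr|\le \E_{\bm{b}}\bigl[|q_v(\bm{z})-q_v(\bm{z}')|\bigr]=O\!\left(\frac{\E_{\bm{b}}[|\bm{z}_v-\bm{z}'_v|]\,\log n}{n^2}\right).
\]

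The remaining step is the ceiling computation, which is the main (and essentially only) technical point. Writing $a=n^2x_v$ and $a'=n^2x'_v$ and assuming without loss of generality $a\le a'$, a direct case analysis for $\bm{b}\sim\mathcal{U}([0,1])$ gives $\E_{\bm{b}}[\lceil a+\bm{b}\rceil]=a+1$, so $\E_{\bm{b}}[\bm{z}'_v-\bm{z}_v]=a'-a=n^2|x_v-x'_v|$; since $\bm{z}'_v\ge\bm{z}_v$ almost surely under the coupling, this gives $\E_{\bm{b}}[|\bm{z}_v-\bm{z}'_v|]=n^2|x_v-x'_v|$. Substituting back produces $O(|x_v-x'_v|\log n)$, which matches the claimed bound (the $\epsilon^{-2}$ factor in the statement is absorbed into the constant since $\epsilon=1/2$ is hard-coded inside \Call{RoundingFVS}{}). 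No heavy obstacle is expected: the only subtlety is making sure the coupling over $\bm{b}$ is compatible with the independent internal randomness of \Call{CycleSparsification}{}, which is immediate because these two sources of randomness are independent by construction.
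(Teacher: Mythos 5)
Your proposal is correct and follows essentially the same route as the paper: couple the two runs by sharing $\bm{b}$, apply Lemma~\ref{lem:cycle-sparsification-lipschitz-unweighted} with girth lower bound $n^2$ pointwise in $\bm{b}$, and use $\E_{\bm{b}}\bigl[|\lceil n^2x_v+\bm{b}\rceil-\lceil n^2x'_v+\bm{b}\rceil|\bigr]=n^2|x_v-x'_v|$ to conclude. The paper writes this last step as an integral over $b\in[0,1]$ rather than via $\E_{\bm{b}}[\lceil a+\bm{b}\rceil]=a+1$, but the computation is identical.
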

\begin{proof}
    To bound the probability difference, we transport the probability mass for $\bm{b}_v = b$ to the mass for $\bm{b}'_v=b$ for $b \in [0,1]$.
    Then by Lemma~\ref{lem:cycle-sparsification-lipschitz-unweighted}, we have 
    \begin{align*}
        & \Bigl| \Pr[v \in \bm{S}] - \Pr[v \in \bm{S}']\Bigr|
        =
        \int_{0}^{1} O\left(\frac{|\lceil n^2 x_v + b\rceil - \lceil n^2 x'_v + b \rceil|\log n}{\epsilon^2 n^2 } \right) \mathrm{d}b \\
        & =
        O\left(\frac{|n^2 (x_v - x'_v)|\log n}{\epsilon^2 n^2 } \right)  
        =
        O\left(\frac{|x_v - x'_v|\log n}{\epsilon^2} \right).
        \qedhere
    \end{align*}
\end{proof}

\subsection{Algorithm and Proof of Theorem~\ref{thm:fvs}}\label{subsec:fvs-proof}

Let $\mathcal{C}$ be the set of all simple cycles in $G$.
The following is a natural LP relaxation for the feedback vertex set problem:
\begin{align}
    \begin{array}{lll}
        \text{minimize} & \displaystyle \sum_{v \in V} w_v x_v, \\
        \text{subject to} & \displaystyle \sum_{v \in C}x_v \geq 1 & \forall C \in \mathcal{C}, \\
        & 0 \leq x_v \leq 1 & \forall v \in V.
    \end{array}
    \label{eq:lp-fvs}
\end{align}
This LP can be seen as the LP relaxation of a set cover instance with exponentially many elements.
However, we can still solve it in polynomial time because there is a polynomial-time separation oracle for the constraints.
Also, it is well known that the integrality gap of this LP is $O(\log n)$~\cite{even2000approximating}.

For a technical reason, we use the following tighter LP relaxation~\cite{chandrasekaran2023polyhedral}:
\begin{align}
    \begin{array}{lll}
        \text{minimize} & \displaystyle \sum_{v \in V} w_v x_v, \\
        \text{subject to} & \displaystyle \sum_{v \in C}x_v \geq 1 & \forall C \in \mathcal{C}, \\
        & x_u + x_v + y_{eu} + y_{ev} \geq 1 & \forall (u,v) \in E\\
        & \displaystyle x_v + \sum_{e \in E: v \in e} y_{ev} \leq 1 & \forall v \in V,  \\
        & 0 \leq x_v \leq 1 & \forall v \in V, \\
        & y_{ev} \geq 0 & \forall e \in E, v \in e.
    \end{array}
    \label{eq:lp-fvs-tight}
\end{align}
We can solve this LP in polynomial time, and it is known that its integrality gap is two~\cite{chandrasekaran2023polyhedral}.

\begin{algorithm}[t]
    \caption{Feedback Vertex Set}\label{alg:fvs}
    \Procedure{\emph{\Call{FeedbackVertexSet}{$G,w,\epsilon$}}}{
        \KwIn{A graph $G=(V,E)$, a weight vector $w \in \mathbb{R}_{\geq 0}^V$, and $\epsilon > 0$.}
        $c \gets an arbitrarily small constant$\;
        $\bm{\lambda} \gets $ uniformly at random from $[(\OPTLP+c)/n,2(\OPTLP+c)/n]$\;
        $\bm{\kappa} \gets $ uniformly at random from $[(\OPTLP+c)/\log n,2(\OPTLP+c)/\log n]$\;
        $\bm{W} \gets$ uniformly at random from $[3(\OPTLP+c),6(\OPTLP+c)]$\;
        \For{$v \in V$}{
            \lIf{$w_v \leq \bm{W}$}{$\bm{w}_v \gets w_v$}
            \lElse{$\bm{w}_v \gets \infty$}
        }
        Solve LP~\eqref{eq:lp-fvs-tight} for the weight vector $\bm{w}$ using the algorithm of Lemma~\ref{lem:set-cover-regularized-lp} with $\bm{\lambda}$, $\bm{\kappa}$, and $\bm{W}$, and let $\bm{x} \in \mathbb{R}^V$ be the obtained solution\;
        $\bm{S} \gets \Call{RoundFVS}{G,\bm{x}}$\;
        \Return $\bm{S}$.
    }
\end{algorithm}

Our algorithm is given in Algorithm~\ref{alg:fvs}.
In this algorithm, we first construct a weight vector $\bm{w} \in \mathbb{R}^V$ by setting $\bm{w}_S = \infty$ when $\bm{w}$ is sufficiently large (This enforces $\bm{x}_S = 0$). 
Then, we apply the algorithm in Section~\ref{subsec:set-cover-lp} to solve LP~\eqref{eq:lp-fvs-tight} for the weight vector $\bm{w}$.
We note that the algorithm can be applied for any minimization problem if the feasible solution is convex.
Let $\bm{x} \in \mathbb{R}_{\geq 0}^V$ be the obtained solution.
Then, we round $\bm{x}$ to an integral solution $\bm{S}$ using \Call{RoundFVS}{}.

We first argue that the optimal LP solution does not degrade a lot by replacing the weight vector $w$ with $\bm{w}$.
\begin{lemma}\label{lem:fvs-lp-modified-value}
    There exists a solution to LP~\eqref{eq:lp-fvs-tight} for the weight vector $\bm{w}$ whose objective value is at most $2\OPTLP$.
\end{lemma}
\begin{proof}
    Let $x^* \in \mathbb{R}^V$ be the optimal solution to LP~\eqref{eq:lp-fvs-tight}.
    Because it has integrality gap of two, there exists a feedback vertex set $S^* \subseteq V$ such that $\sum_{v \in S^*}w_v \leq 2 \sum_{v \in V}w_v x_v = 2\OPTLP$.
    This implies that $S^*$ does not have a vertex $v \in V$ with $w_v > \bm{W} \geq 2\OPTLP$.
    It follows that $\bm{1}_{S^*}$ is a feasible solution to LP~\eqref{eq:lp-fvs-tight} for the weight vector $\bm{w}$, and its objective value is $2\OPTLP$.
\end{proof}

\begin{proof}[Proof of Theorem~\ref{thm:fvs}]
    We first note that $\bm{S}$ is a feasible solution with a high probability by Lemma~\ref{lem:fvs-rounding-approximation}.

    Next, we analyze the solution quality.
    Let $\bm{\OPTLP}$ be the optimal value of LP~\eqref{eq:lp-fvs-tight} for the weight vector $\bm{w}$.
    By Lemma~\ref{lem:fvs-lp-modified-value}, we have $\bm{W} \geq \bm{\OPTLP}$ with probability one.
    Then by Lemma~\ref{lem:set-cover-regularized-lp}, we have
    \[
        \sum_{v \in V}\bm{w}_v \bm{x}_v \leq \OPTLP + \frac{\bm{\lambda}n}{2} + \frac{\bm{\kappa}\log n}{2} = 3(\OPTLP+c).
    \]
    Then by Lemma~\ref{lem:fvs-rounding-approximation}, the expected total weight of $\bm{S}$ is 
    \begin{align*}
        & \sum_{v \in V} \bm{w}_v \Pr[v \in \bm{S}] 
        = 
        O\left(\sum_{v \in V} \bm{w}_v \left( \bm{x}_v \log n + \frac{\log n}{n^2}\right) \right)
        = 
        O\left(\sum_{v \in V} \bm{w}_v \bm{x}_v \log n + \sum_{v \in V} \frac{\bm{W} \log n}{n^2}\right)     \\
        & =
        O\left(\log n \cdot (\OPT+c) + \frac{4n \log n \cdot (\OPTLP+c) }{n^2}\right)
        = O(\log n \cdot (\OPT+c) ).
    \end{align*}

    Next, we consider Lipschitz continuity.
    Let $w,w' \in \mathbb{R}_{\geq 0}^V$ be two weight vectors with $\|w - w'\|_1 \leq \delta < c$.
    Following the proof of Theorem~\ref{thm:set-cover}, there exists a joint distribution $\mathcal{D}$ such that 
    \[
         \Pr_{((\lambda,\kappa,W),(\lambda',\kappa',W')) \sim \mathcal{D}}[\lambda \neq \lambda \vee \kappa \neq \kappa' \vee W \neq W' ] = O\left(\frac{\delta}{\OPTLP'}\right). \label{eq:fvs_params}
    \]
    Then following the proof of Theorem~\ref{thm:set-cover} again, we have
    \begin{align}
        \EMW((\bm{S},w),(\bm{S}',w')) \leq \max_{\lambda,\kappa,W} \EMW((\bm{S}_{\lambda,\kappa,W},w),(\bm{S}'_{\lambda,\kappa,W},w')) + O(\delta \log n).
        \label{eq:fvs-1}
    \end{align}
    Let $\delta_v = |w'_v - w_v|$.
    Then because the rounding process for each vertex is independent, we have
    \begin{align}
        & \EMW((\bm{S}_{\lambda,\kappa,W},w),(\bm{S}'_{\lambda,\kappa,W},w')) \nonumber \\
        & \leq \sum_{v \in V}
        |w_v - w'_v| \cdot \min\{\Pr[v \in \bm{S}_{\lambda,\kappa,W}],\Pr[v \in \bm{S}'_{\lambda,\kappa,W}]\} \nonumber  \\
        & \qquad + \sum_{v \in V} \max\{\Pr[v \in \bm{S}_{\lambda,\kappa,W}] - \Pr[v \in \bm{S}'_{\lambda,\kappa,W}],0\} w_v \nonumber \\
        & \qquad + \sum_{v \in V} \max\{\Pr[v \in \bm{S}'_{\lambda,\kappa,W}] - \Pr[v \in \bm{S}_{\lambda,\kappa,W}],0\} w'_v \nonumber  \\
        & \leq \sum_{v \in V}
        \delta_v \cdot \min\{\Pr[v \in \bm{S}_{\lambda,\kappa,W}],\Pr[v \in \bm{S}'_{\lambda,\kappa,W}]\}   + O\left(\sum_{v \in V} (w_v + w'_v) |x_v - x'_v| \log n\right) \tag{by Lemma~\ref{lem:fvs-rounding-lipschitz-unweighted}} \\
        & \leq \delta + O\left(\sum_{v \in V} (w_v + w'_v) |x_v - x'_v| \log n\right). 
        \label{eq:fvs-2}        
    \end{align}
    We note that 
    \begin{align*}
        & \sum_{v \in V}|w_v x_v - w'_v x'_v| = \sum_{v \in V}|w_v x_v - (w_v + \delta_v) x'_v|
        \geq \sum_{v \in V}\Bigl(|x_v - x'_v|w_v - |\delta_v| x'_v \Bigr) \geq \sum_{v \in V}|x_v - x'_v|w_v - \delta,\\
        & \sum_{v \in V}|w_v x_v - w'_v x'_v| = \sum_{v \in V}|(w_v - \delta_v) x_v - w'_v x'_v|
        \geq \sum_{v \in V}\Bigl(|x_v - x'_v|w'_v - |\delta_v| x_v \Bigr) \geq \sum_{v \in V}|x_v - x'_v|w'_v - \delta. 
    \end{align*}
    Combined with~\eqref{eq:fvs-2}, we have
    \begin{align*}
        & \EMW((\bm{S}_{\lambda,\kappa,W},w),(\bm{S}'_{\lambda,\kappa,W},w'))  \\
        & \leq \delta + O\left(\left(2\sum_{v \in V} |w_v x_v - w'_v x'_v| + 2\delta\right) \log n\right) \\
        & = O\left(\sum_{v \in V} |w_v x_v - w'_v x'_v| \log n + \delta \log n\right) \\
        & =  O\left(\delta \log n \cdot  (\kappa + W)\sqrt{1+\frac{1}{\kappa \lambda}} + \delta \log n\right) \tag{by Lemma~\ref{lem:set-cover-regularized-lp} and $\|\bm{w} - \bm{w}'\|_1 \leq \|w-w'\|_1 = \delta$} \\
        & = O\left(\delta \log n \cdot \sqrt{n \log n} + \delta \log n\right) \\
        & = O\left(\delta \sqrt{n} \log^{3/2}n\right).
    \end{align*}
    Combined with~\eqref{eq:fvs-1}, we have
    \[
        \EMW((\bm{S},w),(\bm{S}',w')) = O\left(\delta \sqrt{n} \log^{3/2}n\right).
    \]
    and the claim holds.
    % Let $\delta_v = w'_v - w_v$.
    % Then, we have
    % \begin{align}
    %     \sum_{v \in V}|w_v x_v - w'_v x'_v| = \sum_{v \in V}|w_v x_v - (w_v + \delta_v) x'_v|
    %     \geq \sum_{v \in V}\Bigl(|x_v - x'_v|w_v - |\delta_v| x'_v \Bigr).
    %     \label{eq:fvs-2}
    % \end{align}
%    Let $\mathcal{D}$ be the probability distribution over pairs of sets that achieves $\EMW((1[v \in \bm{S}],w),(1[v \in \bm{S}'],w'))$.
    % \begin{align*}
    %     & \EMW((1[v \in \bm{S}],w),(1[v \in \bm{S}'],w')) 
    %     = \E_{(S,S') \sim \mathcal{D}}\left[\sum_{v \in V}\Bigl|w_v 1[v \in S] - w'_v 1[v \in S']\Bigr|\right] \\
    %     & = \E_{(S,S') \sim \mathcal{D}}\left[\sum_{v \in V}\Bigl|w_v 1[v \in S] - (w_v + \delta_v) 1[v \in S']\Bigr|\right] \\
    %     & \leq \E_{(S,S') \sim \mathcal{D}}\left[\sum_{v \in V}w_v\Bigl| 1[v \in S] - 1[v \in S']\Bigr|\right] + \E_{(S,S') \sim \mathcal{D}}\left[\sum_{v \in V}|\delta_v| \cdot 1[v \in S']\right] \\
    %     & \leq \sum_{v \in V} O\left( w_v|x_v-x'_v|\log n  \right) + \|w-w'\|_1 \tag{by Lemma~\ref{lem:fvs-rounding-lipschitz-unweighted}} \\
    %     & \leq \sum_{v \in V} O\Bigl( (|w_v x_v - w'_v x'_v|  + |\delta_v| x'_v)  \log n  \Bigr) + \|w-w'\|_1 \tag{by~\eqref{eq:fvs-2}} \\
    %     & \leq \sum_{v \in V} O\left(|w_v x_v - w'_v x'_v| \cdot \log n  \right) + O\left(\|w-w'\|_1 \log n\right) \\
    %     & = O\left(\frac{\sqrt{n}\log^2 n}{\epsilon} \cdot \|w-w'\|_1 \right)  \tag{by~\eqref{eq:fvs-1}},
    % \end{align*}
\end{proof}

\section{Lipschitz Continuity under Shared Randomness}\label{sec:shared-randomness}

In this section, we show that the Lipschitz continuous algorithms we provided in the previous sections can be made Lipschitz continuous under shared randomness. 
We begin by reviewing the techniques by Kumabe and Yoshida~\cite{kumabe2023lipschitz} to obtain such algorithms.

Let $V$ be a finite set and let \Call{Sample}{} be a sampling process that takes two real values $a,b \in \mathbb{R}$ and a vector $p \in [0,1]^{\mathbb{Z}_{\geq 0}}$ and outputs a real value.
For $c \geq 1$, we say that \Call{Sample}{} is $c$-\emph{stable} for a pair  of functions $(l,r)$ with $l,r: \mathbb{R}_{\geq 0}^{V} \to \mathbb{R}$ if (i) for any $w \in \mathbb{R}_{\geq 0}^{V}$, $\Call{Sample}{l(w),r(w),p}$ is uniformly distributed over $[l(w),r(w)]$ when $p$ follows the uniform distribution over $[0,1]^{\mathbb{Z}_{\geq 0}}$, and (ii) for any $w,w'\in \mathbb{R}_{\geq 0}^{V}$, we have
\begin{align*}
    &\E_{\bm{p}\sim \mathcal{U}\left([0,1]^{\mathbb{Z}_{\geq 0}}\right)}\left[\TV\left(\Call{Sample}{l(w),r(w),\bm{p}},\Call{Sample}{l(w'),r(w'),\bm{p}}\right)\right]\\
    &\leq c\cdot\TV\left(\mathcal{U}([l(w),r(w)]),\mathcal{U}([l(w'),r(w')])\right)
\end{align*}
holds.
They gave the following lemmas.
\begin{lemma}[\rm{\cite{kumabe2023lipschitz}}]\label{lem:shared_const}
Let $l$ and $r$ be constant functions. Then, there is a $1$-stable sampling process for $(l,r)$.
\end{lemma}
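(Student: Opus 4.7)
The plan is to explicitly construct such a sampling process using the first coordinate of $p$ as a standard uniform source, and then observe that both conditions reduce to essentially trivial checks when $l$ and $r$ are constant functions. Specifically, I would define $\Call{Sample}{a,b,p} := a + p_0 \cdot (b - a)$, where $p_0$ denotes the zeroth coordinate of $p \in [0,1]^{\mathbb{Z}_{\geq 0}}$.

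First I would verify condition (i). When $\bm{p}$ is drawn uniformly from $[0,1]^{\mathbb{Z}_{\geq 0}}$, its first coordinate $\bm{p}_0$ is uniform on $[0,1]$. Then for any constants $a \leq b$, the value $a + \bm{p}_0(b - a)$ is uniformly distributed on $[a,b]$ by the standard change-of-variables argument. Applying this with $a = l(w)$ and $b = r(w)$, which are well-defined real numbers not depending on $w$ since $l$ and $r$ are constant, yields that $\Call{Sample}{l(w), r(w), \bm{p}}$ is uniform on $[l(w), r(w)]$.

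Next I would verify condition (ii) with $c = 1$. Since $l$ and $r$ are constant functions, for any $w, w' \in \mathbb{R}_{\geq 0}^V$ we have $l(w) = l(w')$ and $r(w) = r(w')$. Hence, for every realization of $\bm{p}$, the two invocations $\Call{Sample}{l(w), r(w), \bm{p}}$ and $\Call{Sample}{l(w'), r(w'), \bm{p}}$ return the same real number, so the total variation distance between the corresponding point masses is zero. Consequently the left-hand side of the stability inequality equals $0$, while the right-hand side equals $\TV(\mathcal{U}([l(w),r(w)]),\mathcal{U}([l(w'),r(w')])) = 0$ as well, and the inequality holds with $c = 1$.

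There is essentially no obstacle here: the lemma is an easy base case whose role is to serve as a building block for more sophisticated stable sampling constructions in later sections. The only minor point to be careful about is to use the shared coordinate $p_0$ deterministically (rather than, say, rejection sampling), so that identical inputs $(l(w), r(w)) = (l(w'), r(w'))$ lead to identical outputs under the same randomness $\bm{p}$, which is what reduces condition (ii) to the trivial $0 \leq 0$ inequality.
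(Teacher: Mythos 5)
Your construction is correct: with $l$ and $r$ constant, the two invocations $\Call{Sample}{l(w),r(w),\bm{p}}$ and $\Call{Sample}{l(w'),r(w'),\bm{p}}$ receive identical arguments and the same shared randomness, so any deterministic-in-$\bm{p}$ sampler (such as $a+p_0(b-a)$) that realizes the uniform distribution makes condition (ii) the trivial inequality $0\leq 0$, and condition (i) is the standard change of variables. Note that the paper itself does not prove this lemma but imports it from \cite{kumabe2023lipschitz}; your argument is exactly the standard one for this base case and is consistent with how the lemma is used later.
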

\begin{lemma}[\rm{\cite{kumabe2023lipschitz}}]\label{lem:shared_ratio}
Let $c > 1$ and suppose $r(w)=cl(w)$ holds for all $w$. 
Then, there is a $(1+c)$-stable sampling process for $(l,r)$.
\end{lemma}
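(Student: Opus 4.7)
My plan is to define $\Call{Sample}{}$ via a shared-randomness first-hit rejection scheme. Interpret $\bm{p}$ as encoding an i.i.d.\ sequence $\bm{Z}_1, \bm{Z}_2, \ldots$ uniform on a large reference interval $[0, M]$, where $M$ is chosen to upper-bound every $r(w)$ that can arise. Given $l(w), r(w)$, let $\Call{Sample}{l(w), r(w), \bm{p}} := \bm{Z}_{i^*}$ with $i^* := \min\{i : \bm{Z}_i \in [l(w), r(w)]\}$. The index $i^*$ is almost surely finite because each $\bm{Z}_i$ hits $[l(w), r(w)]$ with probability $(r(w) - l(w))/M > 0$, and the standard geometric-conditional argument shows that $\bm{Z}_{i^*}$ is uniformly distributed on $[l(w), r(w)]$, giving property (i).

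For property (ii), fix $w, w'$ and set $a = l(w), a' = l(w')$ with $a \leq a'$ without loss of generality. If $a' > ca$, the two intervals $[a, ca]$ and $[a', ca']$ are disjoint, so the outputs differ with probability one; but then $\TV(\mathcal{U}([a, ca]), \mathcal{U}([a', ca'])) = 1$, and the stability inequality holds trivially. If $a' \leq ca$ the intervals overlap on $[a', ca]$ with union $[a, ca']$, and the two outputs coincide precisely when the first $\bm{Z}_i$ landing in the union lies in the overlap. Applying the same conditional-uniformity argument to the union, this happens with probability $(ca - a')/(ca' - a)$, giving
\[
    \Pr[\text{outputs differ}] = 1 - \frac{ca - a'}{ca' - a} = \frac{(c+1)(a' - a)}{ca' - a}.
\]
A direct computation yields $\TV = c(a'-a)/((c-1)a')$, and then the ratio $\Pr[\text{outputs differ}]/\TV$ equals $(c^2 - 1)a'/(c(ca' - a))$. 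I would finish by differentiating in $a'$ to show that this ratio is strictly decreasing on $a' \in [a, ca]$ and hence maximized at $a' = a$, where it equals $(c+1)/c = 1 + 1/c$; since $c \geq 1$ we have $1 + 1/c \leq 1 + c$, which is the desired stability bound.

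The main obstacle I foresee is the absence of a uniform upper bound $M$ on $r(w)$ in full generality: a single reference interval $[0, M]$ only accommodates $r(w) \leq M$. I would handle this either by noting that in every use of Lemma~\ref{lem:shared_ratio} within the paper's algorithms the inputs come with an a priori bound on $r(w)$, or by stacking independent reference intervals at doubling scales and taking the first hit across all scales (equivalently, by using a Poisson point process on $(0, \infty)$ in place of uniform samples on $[0, M]$). The overlap-based agreement probability, and hence the stability calculation, are unchanged under either extension.
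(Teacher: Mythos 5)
The paper does not actually prove this lemma: it is imported verbatim from~\cite{kumabe2023lipschitz}, so there is no in-paper argument to compare against. Judged on its own, your first-hit construction is sound and in fact gives the stronger constant $1+1/c\leq 1+c$: property (i) follows from the standard conditional-uniformity argument, and for (ii) the key observation that the two outputs coincide exactly when the first shared sample landing in the union $[a,ca']$ falls in the overlap $[a',ca]$ is correct (if it falls in the symmetric difference, the two first-hit indices differ and the values are almost surely distinct), so $\Pr[\text{differ}]=(c+1)(a'-a)/(ca'-a)$; your total-variation value $c(a'-a)/((c-1)a')$ and the monotonicity of the ratio $(c^2-1)a'/(c(ca'-a))$ in $a'$ both check out, with supremum $(c+1)/c$ as $a'\to a$.

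Two caveats you should close. First, the degenerate case $l(w)=r(w)=0$ (which can occur here, e.g.\ $\|w\|_1=0$ in the paper's instantiation $L=\epsilon\|w\|_1/(Cm\log n)$) makes your first-hit index infinite; fix it by outputting $l(w)$ whenever the interval is a single point, which costs nothing since the TV distance to any non-degenerate interval's uniform distribution is $1$. Second, your own escape from the fixed upper bound $M$ needs care: the ``first hit across doubling scales'' variant is not obviously equivalent to the Poisson formulation --- if the scale used is selected from $r(w)$, two nearby inputs can pick different scales and hence disagree with probability near $1$ while their TV distance is small, violating (ii) --- so you should commit to the unit-rate Poisson point process on $(0,\infty)\times(0,\infty)$ (output the spatial coordinate of the earliest point whose spatial coordinate lies in $[l(w),r(w)]$). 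With that version, uniformity of the output and the agreement probability $|I\cap I'|/|I\cup I'|$ hold verbatim, and the process is constructible from a sequence $p\in[0,1]^{\mathbb{Z}_{\geq 0}}$ of independent uniforms, so the proof goes through.
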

Now we explain how to modify each of our algorithms to ensure Lipschitz continuity under the shared randomness setting.
\subsection{Algorithm~\ref{alg:vertexcover}}

Algorithm~\ref{alg:vertexcover} can be straightforwardly converted into a Lipschitz continuous algorithm in the shared randomness setting using the process in Lemma~\ref{lem:shared_const} to sample $\bm{z}(e)$, without loss of Lipschitz constant.

\subsection{Algorithm~\ref{alg:simplesetcover_const}}\label{sec:shared_vertexcover}

Algorithm~\ref{alg:simplesetcover_const} can also be straightforwardly converted into a Lipschitz continuous algorithm in the shared randomness setting using the process in Lemma~\ref{lem:shared_const} to sample $\bm{b}$, without loss of Lipschitz constant.

\subsection{Algorithm~\ref{alg:setcover_const}}

Algorithm~\ref{alg:setcover_const} can also be converted into a Lipschitz continuous algorithm in the shared randomness setting without loss of Lipschitz constant as follows. We use the process in Lemma~\ref{lem:shared_const} to sample $\bm{b}$. To sample $\bm{\pi}$, we first sample a parameter $\bm{x}$ from the uniform distribution over $[0,1]$ using the process in Lemma~\ref{lem:shared_const}, and then set $\bm{\pi}=f(\bm{x})$ using a fixed function $f:[0,1] \to \mathcal{P}_{U,s}$, where $\mathcal{P}_{U,s}$ is the set of all permutations over the family $\{A\in U\colon |A|\leq s\}$, such that $\Pr_{\bm{x}\sim \mathcal{U}([0,1])}\left[f(\bm{x})=A\right]$ is equal for all $A\subseteq U$ with $|A|\leq s$.

\subsection{Algorithm~\ref{alg:set-cover-general}}\label{sec:shared_setcover}

The randomness in Algorithm~\ref{alg:set-cover-general} arises from two parts: the sampling of parameter $\bm{\lambda}$, $\bm{\kappa}$, and $\bm{W}$, and \Call{RoundingSC}{}. 
The sampling of the first three parameters can be rewritten using the procedure from Lemma~\ref{lem:shared_ratio} (note that the ranges are dependent on the quantity $\|w\|_1$). 
This modification makes the right-hand side of Equation~\ref{eq:setcovergeneral_lambda} three times larger, but it does not affect the overall analysis.

\Call{RoundingSC}{} can be regarded as a process that determines whether to include each $S\in \mathcal{F}$ in the output. We can regard this algorithm as a procedure to sample $S$ with some probability $q_S$ determined by $x_S$ and $K$. 
It can be implemented as follows: for each $S\in \mathcal{F}$, a value $\bm{x}\in [0,1]$ is uniformly sampled.
If $\bm{x}\leq q_S$, $x$ is included in the solution; otherwise, it is not. 
This operation can be rephrased using Lemma~\ref{lem:shared_const}, without affecting the overall Lipschitz constant.

\subsection{Algorithm~\ref{alg:set-cover-general-f}}\label{sec:shared_setcover_f}

Algorithm~\ref{alg:set-cover-general-f} can be converted into a Lipschitz continuous algorithm in the shared randomness setting in almost the same way as in Section~\ref{sec:shared_setcover}. The only difference is \Call{RoundingSC-Freq}{}. To sample the parameter $\bm{\tau}$ in \Call{RoundingSC-Freq}{}, we use the process in Lemma~\ref{lem:shared_const}.

\subsection{Algorithm~\ref{alg:fvs}}

The randomness in Algorithm~\ref{alg:fvs} arises from three parts: the sampling of parameter $\bm{\lambda}$, $\bm{\kappa}$, and $\bm{W}$, the sampling of parameter $\bm{b}$ in \Call{RoundingFVS}{}, and \Call{CycleSparsification}{}. 
As in Section~\ref{sec:shared_setcover}, the sampling of the first three parameters can be rewritten using the procedure from Lemma~\ref{lem:shared_ratio}. 
As with the sampling of $\bm{b}$ in Section~\ref{sec:shared_vertexcover}, the sampling of $\bm{b}$ can be rewritten using the procedure from Lemma~\ref{lem:shared_const}. 

Let us consider the randomness of Algorithm~\ref{alg:fvs}.
In Algorithm~\ref{alg:fvs}, only whether $\widetilde{\bm{z}}_v > 0$ for each $v\in V$ matters in the output of \Call{CycleSparsification}{}. 
By modifying \Call{CycleSparsification}{} to reassign $\widetilde{\bm{z}}\leftarrow 1$ for all $v\in V$ where $\widetilde{\bm{z}}_v > 0$, the sampling of $\widetilde{\bm{z}}_v$ can be regarded as a procedure to set $\widetilde{\bm{z}}=1$ with some probability $\bm{q}_v$ determined by $\bm{z}_v$ and $p_v$. 
It can be implemented as follows: for each $v\in V$, a value $\bm{x}\in [0,1]$ is uniformly sampled.
If $\bm{x}\leq \bm{q}_v$, $\widetilde{\bm{z}}_v$ is set to be $1$; otherwise, it is set to be $0$. 
This operation can be rephrased using Lemma~\ref{lem:shared_const}, without affecting the overall Lipschitz constant.

\bibliographystyle{abbrv}
\bibliography{main}

\begin{thebibliography}{10}

\bibitem{bafna19992}
V.~Bafna, P.~Berman, and T.~Fujito.
\newblock A 2-approximation algorithm for the undirected feedback vertex set
  problem.
\newblock {\em SIAM Journal on Discrete Mathematics}, 12(3):289--297, 1999.

\bibitem{benczur2015randomized}
A.~A. Bencz{\'u}r and D.~R. Karger.
\newblock Randomized approximation schemes for cuts and flows in capacitated
  graphs.
\newblock {\em SIAM Journal on Computing}, 44(2):290--319, 2015.

\bibitem{censor2016optimal}
K.~Censor-Hillel, E.~Haramaty, and Z.~Karnin.
\newblock Optimal dynamic distributed {MIS}.
\newblock In {\em Proceedings of the 2016 ACM Symposium on Principles of
  Distributed Computing (PODC)}, pages 217--226, 2016.

\bibitem{chandrasekaran2023polyhedral}
K.~Chandrasekaran, C.~Chekuri, S.~Fiorini, S.~Kulkarni, and S.~Weltge.
\newblock Polyhedral aspects of feedback vertex set and pseudoforest deletion
  set.
\newblock {\em arXiv preprint arXiv:2303.12850}, 2023.

\bibitem{chvatal1979greedy}
V.~Chv\'{a}tal.
\newblock A greedy heuristic for the set-covering problem.
\newblock {\em Mathematics of Operations Research}, 4(3):233--235, 1979.

\bibitem{dinur2014analytical}
I.~Dinur and D.~Steurer.
\newblock Analytical approach to parallel repetition.
\newblock In {\em Proceedings of the 46th annual ACM Symposium on Theory of
  Computing (STOC)}, pages 624--633, 2014.

\bibitem{even2000approximating}
G.~Even, J.~Naor, B.~Schieber, and L.~Zosin.
\newblock Approximating minimum subset feedback sets in undirected graphs with
  applications.
\newblock {\em SIAM Journal on Discrete Mathematics}, 13(2):255--267, 2000.

\bibitem{fung2019general}
W.-S. Fung, R.~Hariharan, N.~J. Harvey, and D.~Panigrahi.
\newblock A general framework for graph sparsification.
\newblock {\em SIAM Journal on Computing}, 48(4):1196--1223, 2019.

\bibitem{hara2023average}
S.~Hara and Y.~Yoshida.
\newblock Average sensitivity of decision tree learning.
\newblock In {\em The 11th International Conference on Learning Representations
  (ICLR)}, 2023.

\bibitem{impagliazzo2022reproducibility}
R.~Impagliazzo, R.~Lei, T.~Pitassi, and J.~Sorrell.
\newblock Reproducibility in learning.
\newblock In {\em Proceedings of the 54th Annual ACM SIGACT Symposium on Theory
  of Computing (STOC)}, pages 818--831, 2022.

\bibitem{johnson1973approximation}
D.~S. Johnson.
\newblock Approximation algorithms for combinatorial problems.
\newblock In {\em Proceedings of the 5th Annual ACM Symposium on Theory of
  computing (STOC)}, pages 38--49, 1973.

\bibitem{khot2008vertex}
S.~Khot and O.~Regev.
\newblock Vertex cover might be hard to approximate to within $2-\varepsilon$.
\newblock {\em Journal of Computer and System Sciences}, 74(3):335--349, 2008.

\bibitem{kumabe2023lipschitz}
S.~Kumabe and Y.~Yoshida.
\newblock Lipschitz continuous algorithms for graph problems.
\newblock In {\em Proceedings of the 64th IEEE Symposium on Foundations of
  Computer Science (FOCS)}, 2023.
\newblock to appear.

\bibitem{lovasz1975ratio}
L.~Lov{\'a}sz.
\newblock On the ratio of optimal integral and fractional covers.
\newblock {\em Discrete Mathematics}, 13(4):383--390, 1975.

\bibitem{peng2020average}
P.~Peng and Y.~Yoshida.
\newblock Average sensitivity of spectral clustering.
\newblock In {\em Proceedings of the 26th ACM SIGKDD International Conference
  on Knowledge Discovery \& Data Mining (KDD)}, pages 1132--1140, 2020.

\bibitem{subramanian1995polynomial}
A.~Subramanian.
\newblock A polynomial bound on the number of light cycles in an undirected
  graph.
\newblock {\em Information Processing Letters}, 53(4):173--176, 1995.

\bibitem{Varma2021}
N.~Varma and Y.~Yoshida.
\newblock Average sensitivity of graph algorithms.
\newblock In {\em Proceedings of the 2021 ACM-SIAM Symposium on Discrete
  Algorithms (SODA)}, pages 684--703, 2021.

\bibitem{williamson2011design}
D.~P. Williamson and D.~B. Shmoys.
\newblock {\em The design of approximation algorithms}.
\newblock Cambridge university press, 2011.

\bibitem{yoshida2022average}
Y.~Yoshida and S.~Ito.
\newblock Average sensitivity of {Euclidean} $k$-clustering.
\newblock In {\em Advances in Neural Information Processing Systems (NeurIPS)},
  2022.

\bibitem{Yoshida2021}
Y.~Yoshida and S.~Zhou.
\newblock Sensitivity analysis of the maximum matching problem.
\newblock In {\em Innovations in Theoretical Computer Science (ITCS)}, pages
  58:1--58:20, 2021.

\end{thebibliography}

\appendix

%!TEX root=./main.tex

\section{Naive Bound on Lipschitz Constant}\label{sec:naive-lipschitz-constant-bound}

In this section, we give a proof sketch of the fact that, if there exists an $\alpha$-approximation algorithm $\mathcal{A}$ for a minimization problem on graphs, then for any $\epsilon > 0$, there exists a $(1+\epsilon)\alpha$-approximation algorithm $\widehat{\mathcal{A}}$ with Lipschitz constant $O(\epsilon^{-1}|V|)$.

The algorithm $\widehat{\mathcal{A}}$ works as follows.
Given a graph $G=(V,E)$ and a weight vector $w \in \mathbb{R}_{\geq 0}^V$, we first construct a vector $\widehat{\bm{w}} \in \mathbb{R}_{\geq 0}^V$ by rounding values of $w$ to multiples of $c = \Theta(\epsilon\cdot \OPT/|V|)$ (in a randomized way so that this process has a small Lipschitz constant). 
Then, we apply the algorithm $\mathcal{A}$ on the weighted graph $(G,\widehat{\bm{w}})$.
The approximation ratio is $(1+\epsilon)\alpha$ by choosing the hidden constant in $c$ to be small enough.

Now, we analyze the Lipschitz constant of $\widehat{\mathcal{A}}$.
Let $w,w' \in \mathbb{R}_{\geq 0}^V$ be weight vectors and $\widehat{\bm{w}}$ and $\widehat{\bm{w}}'$ be the vectors obtained by rounding $w$ and $w'$, respectively.
By taking an appropriate coupling $\mathcal{D}$ between $\widehat{\bm{w}}$ and $\widehat{\bm{w}}'$, we have $\E_{\mathcal{D}}\left[\widehat{\bm{w}}\neq \widehat{\bm{w}}'\right]= O(\epsilon^{-1}\|w-w'\|_1\cdot |V|/\OPT)$.
By bounding $d_{\mathrm{w}}((S,w),(S,w'))$ by the trivial bound of $O(\OPT)$, the Lipschitz constant is bounded by
\begin{align*}
& \frac{\EMW((\widehat{\mathcal{A}}(G,w),w),(\widehat{\mathcal{A}}(G,w'),w'))}{\|w-w'\|_1}
\leq \frac{\E_{\mathcal{D}}\left[\widehat{\bm{w}}\neq \widehat{\bm{w}}'\right] \cdot O(\OPT)}{\|w-w'\|_1} \\
& \leq 
\frac{O(\epsilon^{-1}\|w-w'\|_1\cdot |V|/\OPT)\cdot O(\OPT)}{\|w-w'\|_1}=  O\left(\frac{|V|}{\epsilon}\right).
\end{align*}

\section{Proof of Lemma~\ref{lem:cycle-counting}}\label{apx:cycle-counting}

Lemma~\ref{lem:cycle-counting} is an immediate corollary of the following.
\begin{lemma}\label{lem:cycle-counting-half-integral}
    Let $G = (V, E)$ be a graph, $w \in \mathbb{R}_{\geq 0}^V$ be a weight vector, and $S \subseteq V$ be an arbitrary set.
    Suppose that $g_v \geq g$ for every $v \in S$. 
    Then for an integer or a half-integer $k \geq 1$, we have
    \[
        |\{C \cap S \mid C \in \mathcal{C}(G) \wedge w(C)\leq k g\}|<(2m)^{2k},
    \]
    where $m = |E|$.
\end{lemma}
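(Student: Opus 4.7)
The plan is to prove Lemma~\ref{lem:cycle-counting-half-integral} by a direct injective encoding argument, generalizing the $S = V$ case of Subramanian~\cite{subramanian1995polynomial}. The idea is to associate to each cycle $C$ with $w(C) \leq kg$ and $C \cap S \neq \emptyset$ an ordered tuple of at most $2k$ (directed) edges, from which $C$—and hence the set $C \cap S$—can be uniquely reconstructed. The total number of such tuples is at most $(2m)^{2k}$, and the case $C \cap S = \emptyset$ contributes only the single set $\emptyset$, which fits inside the strict inequality.

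Concretely, I would fix a global total order on $V$ and a canonical shortest path between every pair of vertices (e.g., the lexicographically smallest one). For a cycle $C$ in the relevant range, let $v := \min(C \cap S)$ and traverse $C$ from $v$ in a canonical direction, cutting at each vertex where the running vertex-weight sum crosses a new multiple of $g/2$. With a precise cut convention (for example, cutting just before the threshold is crossed), this produces at most $2k$ arcs $A_1, \ldots, A_j$, each of weight strictly less than $g/2$. The encoding of $C$ is the ordered sequence of terminal directed edges of $A_1, \ldots, A_j$. The structural heart of the argument is the claim that each $A_i$ coincides with the canonical shortest path between its two endpoints: if not, then taking the canonical shortest path $A_i^*$ and considering $A_i \cup A_i^*$, we can extract a simple cycle of weight strictly less than $w(A_i) + w(A_i^*) < g$ which—provided it passes through a vertex of $S$—contradicts $g_u \geq g$ for $u \in S$. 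Given this claim, the recorded endpoint sequence canonically determines each arc and hence the whole cycle.

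The main obstacle is the structural claim for arcs $A_i$ with $i \geq 2$. For $A_1$, which starts at $v \in S$, the vertex $v$ is an endpoint of both $A_1$ and $A_1^*$, so the extracted simple cycle always passes through $v$ and the girth contradiction is immediate. For $A_i$ with $i \geq 2$, neither endpoint need lie in $S$, and an uncareful rerouting may even change $C \cap S$. My plan is to work with \emph{canonical representatives}: for each value $T$ that $C \cap S$ can take, among all cycles $C$ with $C \cap S = T$ and $w(C) \leq kg$, pick the minimum-weight one with canonical tiebreaking. I would then argue that every arc of this canonical representative must be a canonical shortest path, because any non-canonical arc could be replaced to obtain a strictly lighter cycle that either realizes the same $C \cap S$ (contradicting minimality) or realizes a different $C \cap S$ value $T'$—which is then handled by an induction on $w(C)$, since the canonical representative of $T'$ has strictly smaller weight and so is already accounted for in a smaller inductive layer. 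Aside from this core issue, there is minor bookkeeping (tightening the cut convention so each arc has weight strictly less than $g/2$, and arranging the counting of encodings so that $\sum_{j \leq 2k}(2m)^j$ plus the one slot for $\emptyset$ fits strictly below $(2m)^{2k}$, which works cleanly for $m$ not too small, the other cases being vacuous) that completes the argument.
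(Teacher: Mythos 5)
Your overall strategy (cut a light cycle into at most $2k$ arcs of weight below $g/2$, argue each arc is a canonical shortest path, and encode the cycle by the sequence of terminal edges) is genuinely different from the paper, which follows Subramanian's nondeterministic ``robot'' argument: robots start at $S$-vertices, split at vertices, are killed when two copies with the same $S$-trace meet or when they cannot complete a light cycle, and the factor $2m$ per half-girth interval comes from the fact that two surviving copies cannot head the same way along an edge. However, your proposal has a genuine gap exactly at the point you flag as ``the main obstacle,'' and the proposed repair does not work. The structural claim for an arc $A_i$ with $i\geq 2$ can actually fail for the minimum-weight representative $C_T$ of a trace $T$: the canonical shortest path $A_i^*$ between the endpoints of $A_i$ may intersect $C_T\setminus A_i$, so replacing $A_i$ by $A_i^*$ yields a closed walk, not a simple cycle. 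When $|T|\geq 2$, decomposing that walk into simple cycles can split the vertices of $T$ among different cycles, so no piece realizes $T$ and minimality of $C_T$ is not contradicted; and the simple cycle contained in $A_i\cup A_i^*$ can avoid $S$ entirely (both endpoints of $A_i$ may be outside $S$), so the hypothesis $g_u\geq g$ for $u\in S$ is not contradicted either. Thus $C_T$ may genuinely contain a non-shortest arc, decoding the endpoint sequence by canonical shortest paths reconstructs the wrong object, and injectivity of the map $T\mapsto$ encoding is lost: two distinct traces can share an encoding.

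The fallback ``induction on $w(C)$'' does not close this hole. Knowing that some other trace $T'$ (or several fragments of traces) is realized by a strictly lighter cycle says nothing about how to recover $T$ from its encoding; it only shows $T'$ is counted in an earlier layer. To make such a charging argument work you would need to bound how many ``bad'' traces $T$ are charged to the same lighter object, and nothing in the proposal controls that multiplicity---this is precisely the bookkeeping the paper's robot argument performs implicitly, by letting trajectories with distinct $S$-traces coexist and killing duplicates only when their $S$-traces agree. (Separately, the final count as you state it does not fit: $\sum_{j\leq 2k}(2m)^j$ already exceeds $(2m)^{2k}$ for every $m\geq 1$, so even granting the structural claim you would need a tighter encoding, e.g.\ exploiting that the last arc must return to the starting vertex. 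But this is minor compared to the injectivity issue above.)
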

\begin{proof}
    The proof is almost identical to that of the main result of~\cite{subramanian1995polynomial}.

    We say that a cycle $C$ is \emph{light} if $w(C) \leq k g$.
    Pretend that each edge is a road that covers a distance equal to its weight. 
    Imagine that a nondeterministic robot enters the road network at a vertex in $S$ at time $0$ and then goes around a light cycle at a constant speed of one unit. 
    More specifically, imagine that $\sum_{v \in S} d_v$ robots enter the network at time $0$, where $d_v$ is the degree of the vertex $v \in V$ (one robot for each out-going edge of each vertex; thus, there are $d_v$ robots at each vertex at time 0). 
    These robots then move through the network at a speed of one unit. 
    Whenever a robot reaches a vertex of the graph, it splits instantaneously into several copies, leaving the vertex in different directions. 
    The effect of repeated splitting is to increase the number of robots in the network; however, distinct robots traverse distinct trajectories. 
    We require that (i) the trajectory of any given robot be a light cycle and (ii) if a robot reaches a point in the trajectory of another robot and the set of nodes in $S$ that the former robot had passed through is different from that of the latter robot (when it was at that point);
    any robot that cannot meet these conditions immediately destroys itself. 
    A robot that completes its cycle waits there; 
    clearly, each robot completes its cycle by time $k g$ and passes through a different set of nodes in $S$. 
    It follows that the quantity to count equals the number of robots that are alive at time $k g$.

    Let $g^-$ denote a quantity that is infinitesimally smaller than $g$. 
    Let $j$ be any nonnegative integer. 
    We prove the following claim by induction on $j$: there are at most $(2m)^{j+1}$ robots alive at time $jg^-/2$. 
    The case $j$ = 0 is clear because $\sum_{v \in S}d_v \leq 2m$.
    We now assume that the claim is true for $j - 1$ and establish it for $j$.

    The idea is to focus on any one robot alive at time $(j - 1) g^-/2$, and to show that at time $jg^-/2$ this robot has split into at most $2m$ copies. 
    Consider the trajectories of all the above copies during the interval $[(j - 1) g^-/2, jg^-/2]$.
    Once the trajectories of two copies separate and meet again, only one survives, for otherwise, either of the following holds: (i) we have a cycle including a vertex in $S$ with a weight less than $g$, or (ii) one of the two copies destroys itself because the sets of nodes in $S$ in these trajectories are the same at the meeting point.
    We may assume that each robot copy is on some edge of the network at time $jg^-/2$. 
    This is because our robots traverse vertices instantly; if a robot happens to be exactly on a vertex at time $jg^-/2$, we retrace its step by an infinitesimal amount. 
    Now observe that the structure of the trajectories forbids two copies heading the same way on any edge of the network at time $jg^-/2$. 
    It follows that there are at most $2m$ copies, thus establishing the correctness of the inductive step and hence the claim.

    The claim implies that there are at most $(2m)^{2k}$ robots alive at time $(2k - 1) g^-/2$. 
    Let us focus on any one of these robots. 
    We claim that this robot will not split during the interval $[ (2k - 1) g^-/2, kg^-]$.
    This is because at time $kg^-$, the robot must be back at its starting vertex.
    Using the same trajectory argument as before, it follows that at most one robot survives during the interval $[ (2k - 1) g^-/2, kg^-]$. 
    Therefore the total number of successful robots alive at time $kg^-$ is at most $(2m) ^{2k}$. 
    These robots remain at their respective edges until time $kg$.
\end{proof}

\end{document}